\newtheorem{tm}{Theorem}
\newtheorem{lm}[tm]{Lemma}
\newtheorem{pr}[tm]{Proposition}
\theoremstyle{definition}
\newtheorem{df}[tm]{Definition}
\newtheorem{as}[tm]{Assumption}
\newtheorem*{ack}{Acknowledgements}
\theoremstyle{remark}
\newtheorem{rk}[tm]{Remark}
\newcommand{\dd}{\mathrm{d}}
\DeclareMathOperator{\wsc}{\hspace{1pt}\rightharpoonup\hspace{-11pt}{^\ast}\hspace{5pt}}
\numberwithin{tm}{section}
\numberwithin{equation}{section}
\numberwithin{figure}{section}
\begin{document}
\title{On self-similar solutions to a kinetic equation arising in weak turbulence theory for the nonlinear Schr\"odinger equation}
\author{A.H.M.~Kierkels\hspace{12pt} J.J.~L.~Vel\'azquez\\\small Institute for Applied Mathematics, University of Bonn\\\small Endenicher Allee 60, 53115 Bonn, Germany\\\small kierkels@iam.uni-bonn.de\hspace{12pt}velazquez@iam.uni-bonn.de}
{\linespread{1}
\maketitle
\begin{abstract}
We construct a family of self-similar solutions with fat tails to a quad\-ratic kinetic equation. This equation describes the long time behaviour of weak solutions with finite mass to the weak turbulence equation associated to the nonlinear Schr\"odinger equation. The solutions that we construct have finite mass, but infinite energy. In {\em J.~Stat.~Phys.}~{\bf159}(3):668-712, self-\linebreak[2]similar solutions with finite mass and energy were constructed. Here we prove upper and lower exponential bounds on the tails of these solutions.\\

Keywords: self-similar solutions, fat tails, exponential tails, weak turbulence, long time asymptotics\\

MSC 2000: 35C06, 35B40, 35D30, 35Q20, 45G05
\end{abstract}}

\section{Introduction}
The theory of weak turbulence is a physical theory which describes the transfer of energy between different wavelengths in a large class of wave systems. This theory can be applied to homogeneous problems that can be approximated to leading order by a system of linear waves that interact by means of weak nonlinearities. The basic mathematical model in the theory of weak turbulence is a kinetic equation that describes the transfer of energy between different wavelengths. Contrary to the starting wave equations, the kinetic equations arising in weak turbulence theory exhibit irreversible behaviour. Examples of applications of the theory of weak turbulence to specific physical systems can be found in \cite{DJR06}, \cite{DNPZ92}, \cite{H62,H63}, \cite{N11}, \cite{P29}, \cite{Z67} and \cite{ZF67}.\\

One of the most extensively studied systems in the setting of weak turbulence theory is the one associated to the nonlinear Schr\"odinger equation
\begin{equation}\label{Schroedinger}
iu_t=-\Delta u+\varepsilon|u|^2u,
\end{equation}
with $\varepsilon>0$ small (cf.~\cite{DNPZ92}, \cite{N11}, \cite{ZLF92}). Denoting $F(t,k)=|\hat{u}(t,k)|^2$, where $\hat{u}$ is the space Fourier transform of the solution of \eqref{Schroedinger}, then restricting to isotropic solutions one obtains up to rescaling the following equation for $f(t,\omega):=F(t,k)$ with $\omega:=|k|^2$:
\begin{equation}\label{eq:S1E2}
\partial_t f_1=\frac12\iint_{[0,\infty)^2}W\big[(f_1+f_2)f_3f_4-(f_3+f_4)f_1f_2\big]\dd\omega_3\dd\omega_4,
\end{equation}
where $f_i=f(t,\omega_i)$ for each $i\in\{1,2,3,4\}$, where $\omega_2=(\omega_3+\omega_4-\omega_1)_+$, and where $W=\min_i\{\sqrt{\omega_i}\}/\sqrt{\omega_1}$. The mathematical theory of this equation has been studied in detail in \cite{EV15} where several properties of the solutions of \eqref{eq:S1E2} were obtained. As it is more convenient to study the evolution of the mass density function $g(t,\omega)=\sqrt{\omega}f(t,\omega)$, we reformulate \eqref{eq:S1E2} as
\begin{multline}\label{eq:S1E3}
\partial_t g_1=\frac12\iint_{[0,\infty)^2}\tilde{W}\left[\left(\frac{g_1}{\sqrt{\omega_1}}+\frac{g_2}{\sqrt{\omega_2}}\right)\frac{g_3g_4}{\sqrt{\omega_3\omega_4}}\right.\\-\left.\left(\frac{g_3}{\sqrt{\omega_3}}+\frac{g_4}{\sqrt{\omega_4}}\right)\frac{g_1g_2}{\sqrt{\omega_1\omega_2}}\right]\dd\omega_3\dd\omega_4,
\end{multline}
where now $g_i=g(t,\omega_i)$ for each $i\in\{1,2,3,4\}$, where $\omega_2=(\omega_3+\omega_4-\omega_1)_+$ and where $\tilde{W}=\min_i\{\sqrt{\omega_i}\}$. Formally this equation has two conserved quantities, namely the mass, which is the integral of $g$, and the energy, which is the first moment of $g$.

For a class of weak solutions to \eqref{eq:S1E3} with finite mass, it was proved in \cite{EV15} that all solutions converge, as $t\rightarrow\infty$, to a Dirac mass supported at a well-defined point $a\geq0$, which depends only on the support of the initial distribution. It turns out that unless the initial distribution is contained in a periodic lattice, there holds $a=0$. In this last case, it is possible to formally derive an equation that describes the behaviour of the fraction of mass that is not supported near the origin, which we denote by $G$. Formally this equation reads as
\begin{multline}\label{eq:strongform}
\partial_tG(\omega)=\frac12\int_0^\omega\frac{G(\omega-\xi)G(\xi)\dd \xi}{\sqrt{(\omega-\xi)\xi}}-\frac{G(\omega)}{\sqrt{\omega}}\int_0^\infty\frac{G(\xi)\dd \xi}{\sqrt{\xi}}\\
-\frac12\frac{G(\omega)}{\sqrt{\omega}}\int_0^\omega\left[\frac{G(\omega-\xi)}{\sqrt{\omega-\xi}}+\frac{G(\xi)}{\sqrt{\xi}}\right]\dd\xi+\int_0^\infty\frac{G(\omega+\xi)}{\sqrt{\omega+\xi}}\left[\frac{G(\omega)}{\sqrt{\omega}}+\frac{G(\xi)}{\sqrt{\xi}}\right]\dd\xi.
\end{multline}
in which one may recognize a coagulation-fragmentation equation with nonlinear fragmentation. Note that many terms in \eqref{eq:strongform} are singular and the meaning this equation has to be precised. A more elaborate discussion on the sense in which $G$ describes the asymptotic behaviour of $g$ can be found in \cite{EV15} and \cite{KV15}. It is known that solutions to \eqref{eq:S1E3} can contain Dirac masses at the origin. If that is the case, then \eqref{eq:strongform} can be obtained by only considering those collisions which are mediated by an interaction with the condensate.

Notice that if we assume that $g=M\delta_0+G$, then the energy of $g$ is contained in $G$. Therefore the analysis of the long time behaviour of $G$ is relevant, even though the mass of $G$ becomes negligible compared to the mass supported at the origin as $t\rightarrow\infty$. As conjectured in \cite{EV15}, we expect a self-similar distribution of the energy among the different wavelengths as $t\rightarrow\infty$, provided that $g$ has initially finite energy. In \cite{KV15} we have proved the existence of a family of self-similar solutions $G$ to \eqref{eq:strongform} with finite energy. These solutions are the natural candidates for describing the long time asymptotics of solutions $g$ to \eqref{eq:S1E3} with finite mass and energy. Note however, that stability of these self-similar solutions is an open problem, even at the linearised level.

Of course, the assumption of finite energy is not really needed to have self-similar behaviour for the solutions $g$ of \eqref{eq:S1E3}. Long time self-similarity can be expected if the initial data has a power law tail as $\omega\rightarrow\infty$. This gives a natural scaling law relating the energy density and $\omega$. Actually, long time self-similarity can only be expected if either the initial distribution has a power law tail, or if the energy is finite. This is because otherwise the behaviour of the solutions is not stable for large values of $\omega$ under the evolution equations \eqref{eq:S1E3} and \eqref{eq:strongform}. This is reminiscent of the situation for the coagulation equation with constant kernel, where in order to have self-similarity the power law behaviour for the initial data is needed (cf.~\cite{MP04}).\\

Let us briefly discuss the expected self-similar behaviour of a solution $g$ to \eqref{eq:S1E3} if we assume the initial distribution to behave like $\omega^{-\rho}$ for large values of $\omega$, where $\rho>0$. Given that for $\rho\leq\frac12$ it is not clear whether the collision terms in \eqref{eq:S1E3} can be given a meaning, we restrict ourselves to the case $\rho>\frac12$. A particularly relevant exponent is $\rho=\frac23$, which is the so called Kolmogorov-Zakharov exponent for \eqref{eq:S1E3}. The interpretation of this exponent is the existence of a constant flux of particles from large values of $\omega$ to smaller ones in the space of frequencies (cf.~\cite{Z67}).

If we suppose that $\rho>1$, then $g$ has finite mass and the heuristic derivation of \eqref{eq:strongform} is valid (cf.~\cite{EV15}, \cite{KV15}). As discussed before, we thus expect the self-similar solutions $G$ of \eqref{eq:strongform} to describe the long time asymptotics of solutions $g$ to \eqref{eq:S1E3}. One of the main results of this paper will be the proof of existence of self-similar solutions to \eqref{eq:strongform} with tail behaviour $\omega^{-\rho}$ for $1<\rho<2$. If $\rho>2$, then the solutions have finite energy. Existence of self-similar solutions with finite energy has been proved in \cite{KV15}. In this paper we prove that these solutions decay exponentially as $\omega\rightarrow\infty$.

The case $\frac12<\rho<1$ is different, since the mass of $g$ is infinite. We therefore expect the amount of mass located at the origin to grow without limit. Dimensional analysis suggests that the long time asymptotics of solutions to \eqref{eq:S1E3} then cannot be approximated by solutions to a simpler quadratic equation, similar to \eqref{eq:strongform}, where all the collisions are mediated by interaction with one particle placed at the origin. More precisely, if we suppose that $g=M\delta_0+G$, where $M$ is the amount of mass located at the origin, then there are terms that are cubic in $G$ that cannot be ignored, and we expect self-similar solutions of \eqref{eq:S1E3} to be of the form 
\begin{equation}\nonumber
M(t)\delta_0(\omega)+\frac1{t^{\frac\rho{2\rho-1}}}\Phi\left(\frac\omega{t^{\frac1{2\rho-1}}}\right).
\end{equation}
We further note that dimensional analysis alone is insufficient to determine the exact scaling law for $M$. However, it suggests that $M\sim t^\alpha$ for $\alpha\leq\frac{1-\rho}{2\rho-1}$.\\

Seemingly the first paper to consider the asymptotics of solutions of \eqref{eq:S1E3} in connection with solutions to the nonlinear Schr\"odinger equation is \cite{P92}. In particular, that paper describes the scaling properties of solutions $g$ to \eqref{eq:S1E3} in the cases where either the energy is finite, or where $g$ behaves for large frequencies according to the Kolmogorov-Zakharov power law. These two cases correspond to assuming that the spectral distribution $F(t,k)=|\hat{u}(t,k)|^2$ has either finite energy or decays according to the Kolmogorov-Zakharov exponent for large $|k|$. However, in the case of infinite energy there is no particular reason for the exponent of the power law to be restricted to this one. Hence, it makes sense to study solutions $g$ to \eqref{eq:S1E3} where $g$ initially has arbitrary power law behaviour at infinity, at least from a mathematical point of view.

This paper is a continuation of the study of self-similar solutions to \eqref{eq:strongform}, which was initiated in \cite{KV15}. We refer to that paper for a more extensive discussion of the connection of \eqref{eq:S1E2} to particle models, as well as other equations in mathematical physics such as the Nordheim equation.

The structure of the paper is as follows. In Section \ref{sec:statementofresult} we introduce our notation, and we give the statements of the main results. Section \ref{sec:existence} contains the proof of existence of self-similar profiles, while in Section \ref{sec:regularity} regularity is proven. In Section \ref{sec:fattails} we then demonstrate unique power law behaviour of the self-similar profiles in the case of infinite energy. Lastly, in the case of finite energy we prove a pointwise exponential upper bound and an exponential lower bound in averaged sense in Section \ref{sec:exponentialtail}.

\section{Notation and results}\label{sec:statementofresult}
We start with some definitions and notations that we use throughout the paper.
\begin{df}\label{df:measurespace}
We write $\mathcal{M}([0,\infty))$, $\mathcal{M}_+([0,\infty))$, and $\mathcal{M}_+([0,\infty])$ for the spaces of signed, nonnegative, and finite nonnegative Radon measures respectively.
\end{df}
\begin{rk}
Note that the notation for measure spaces as introduced in Definition \ref{df:measurespace} differs from the one in \cite{KV15}. In that paper $\mathcal{M}_+([0,\infty))$ was used to denote the space of finite nonnegative Radon measures $\mu$ on $[0,\infty]$ for which $\mu(\{\infty\})=0$.
\end{rk}
\begin{rk}
For an integral with respect to a measure $\mu$ we will always use the notation $\mu(x)\dd x$, even if $\mu$ is not absolutely continuous with respect to Lebesgue measure.
\end{rk}
\begin{df}
Given $I\subset[-\infty,\infty]$, we write $C(I)=C^0(I)$ for the set of functions that are continuous on $I$. Given further $k\in\mathbb{N}$, we write $C^k(I)$ for the subset of these functions in $C(I)$ for which the derivatives of order up to $k$ exist and are in $C(I)$, and $C_c^k(I)$ [$C_c(I)$] for the set of functions in $C^k(I)$ [$C(I)$] supported in a compact $K\subset I$. Given finally $k\in\mathbb{N}_0=\{0,1,\ldots\}$ and $\alpha\in(0,1)$, we write $C^{k,\alpha}(I)$ for the set of functions in $C^k(I)$ for which the $k$-th derivative is $\alpha$-H\"older continuous on any compact $K\subset I$.
\end{df}
\begin{rk}
Given $\varphi\in C(I)$, we write $\|\varphi\|_\infty=\|\varphi\|_{C(I)}$.
\end{rk}
\begin{rk}
Note that if $f\in C^k([0,\infty])$, then not only are the functions $f^{(\ell)}$, with $\ell=0,1,\ldots,k$, bounded on the interval $[0,\infty]$, but also $\lim_{x\rightarrow\infty}f(x)$ exists, and $\lim_{x\rightarrow\infty}f^{(\ell)}(x)=0$ for $\ell=1,\ldots,k$.
\end{rk}
\begin{df}
Given $k\in\mathbb{N}_0$, we denote by $\mathcal{B}_k$ the subspace of those functions $\vartheta\in C^k([0,\infty])$ for which $\|(1+x)\vartheta(x)\|_{C^k([0,\infty])}=:\|\vartheta\|_{\mathcal{B}_k}<\infty$.
\end{df}
\begin{rk}
Since $\mathcal{B}_0$ is a separable Banach space, the unit ball in the dual space $\mathcal{B}_0'$ endowed with the weak-$*$ topology is metrizable (cf.~\cite[Thm.~3.28]{B11}). Consequently, the properties of the weak-$*$ topology restricted to the unit ball in $\mathcal{B}_0'$ can be characterized by means convergence of sequences. We recall that a sequence $\{\mu_n\}$ in $\mathcal{B}_0'$ converges to $\mu\in\mathcal{B}_0'$ with respect to the weak-$*$ topology (denoted $\mu_n\wsc\mu$ in $\mathcal{B}_0'$) if and only if
\begin{equation}\nonumber
\int_{[0,\infty)}\vartheta(x)\mu_n(x)\dd x\rightarrow\int_{[0,\infty)}\vartheta(x)\mu(x)\dd x\text{ for all }\vartheta\in\mathcal{B}_0.
\end{equation}
\end{rk}
\begin{rk}
We use the notations $a\vee b=\max\{a,b\}$ and $a\wedge b=\min\{a,b\}$.
\end{rk}
A robust characterization of the power law behaviour of measures $\mu$ near infinity will be achieved by means of the functionals
\begin{equation}\label{eq:functional}
R^{\rho-2}\int_{[0,\infty)}\left(1\wedge\tfrac{R}{x}\right)|\mu(x)|\dd x.
\end{equation}
More precisely, we will make extensive use of the following normed spaces.
\begin{df}
Given $\rho\in(1,2]$, we define $\mathcal{X}_\rho$ to be the subset of those nonnegative Radon measures $\mu\in\mathcal{M}_+([0,\infty))$ for which
\begin{equation}\label{eq:rhonorm}
\sup_{R>0}\left\{R^{\rho-2}\int_{[0,\infty)}\left(1\wedge\tfrac{R}{x}\right)|\mu(x)|\dd x\right\}=:\|\mu\|_\rho<\infty.
\end{equation}
\end{df}
\begin{rk}
Even though the space $\mathcal{X}_\rho$ only contains nonnegative Radon measures, the norm $\|\cdot\|_\rho$ is defined for arbitrary signed Radon measures by \eqref{eq:rhonorm}.

Also, since $\|\mu\|_2=\int_{[0,\infty)}\mu(x)\dd x$ we can identify any $\mu\in\mathcal{X}_2$ with a unique element in $\mathcal{M}_+([0,\infty])\cap\{\mu(\{\infty\})=0\}$, and we will henceforth use the abbreviated notation $\mathcal{X}_2=\mathcal{M}_+([0,\infty])\cap\{\mu(\{\infty\})=0\}$.

Note lastly that if $\rho\in(1,2)$, then for all $\mu\in\mathcal{X}_\rho$ there holds $\mu(\{0\})=0$, since $0\leq\mu(\{0\})\leq R^{\rho-2}\int_{[0,\infty]}(1\wedge\frac{R}{x})\mu(x)\dd x\times R^{2-\rho}\leq\|\mu\|_\rho R^{2-\rho}$ for all $R>0$.
\end{rk}
\begin{lm}\label{lm:inclusionandclosedness}
Given $\rho\in(1,2)$, there holds $\mathcal{X}_\rho\subset\mathcal{B}_0'$, and $\{\|\mu\|_\rho\leq 1\}\cap\mathcal{X}_\rho$ is weakly-$*$ closed in $\mathcal{B}_0'$.
\end{lm}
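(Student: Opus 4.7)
The plan is to treat the inclusion and the closedness separately.

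\textbf{Inclusion.} For $\mu \in \mathcal{X}_\rho$ and $\vartheta \in \mathcal{B}_0$, I would use the pointwise estimate $|\vartheta(x)| \leq \|\vartheta\|_{\mathcal{B}_0}/(1+x) \leq \|\vartheta\|_{\mathcal{B}_0}\bigl(1 \wedge x^{-1}\bigr)$ on $[0,\infty)$ and integrate against $\mu$, invoking \eqref{eq:rhonorm} with $R = 1$:
\[
\Big|\int_{[0,\infty)}\vartheta(x)\mu(x)\dd x\Big| \leq \|\vartheta\|_{\mathcal{B}_0}\int_{[0,\infty)}\bigl(1 \wedge \tfrac{1}{x}\bigr)\mu(x)\dd x \leq \|\vartheta\|_{\mathcal{B}_0}\|\mu\|_\rho.
\]
This shows $\mu \in \mathcal{B}_0'$ with operator norm at most $\|\mu\|_\rho$.

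\textbf{Identifying the weak-$*$ limit as a measure.} Let $\{\mu_n\} \subset \mathcal{X}_\rho$ satisfy $\|\mu_n\|_\rho \leq 1$ and $\mu_n \wsc \mu$ in $\mathcal{B}_0'$. Since $C_c([0,\infty)) \subset \mathcal{B}_0$ and each $\mu_n$ is nonnegative, the restriction $\mu|_{C_c([0,\infty))}$ is a positive linear functional, and the Riesz--Markov theorem produces $\tilde\mu \in \mathcal{M}_+([0,\infty))$ with $\langle \mu, \varphi\rangle = \int \varphi\tilde\mu\dd x$ for every $\varphi \in C_c([0,\infty))$. Since $C_c([0,\infty))$ is not dense in $\mathcal{B}_0$, extending this identification to all of $\mathcal{B}_0$ is nontrivial; I would do this by decomposing an arbitrary $\vartheta \in \mathcal{B}_0$ as $\vartheta = \vartheta\eta_N + \vartheta(1-\eta_N)$ with a cutoff $\eta_N \in C_c([0,\infty))$ equal to $1$ on $[0,N]$ and satisfying $0 \leq \eta_N \leq 1$, and bounding the tail piece uniformly in $n$ by
\[
\Big|\int \vartheta(1-\eta_N)\mu_n\dd x\Big| \leq \|\vartheta\|_{\mathcal{B}_0}\int_{\{x \geq N\}}\frac{\mu_n(x)\dd x}{1+x} \leq \|\vartheta\|_{\mathcal{B}_0}\,N^{1-\rho},
\]
where the last step uses \eqref{eq:rhonorm} with $R = N$. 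Since $\rho > 1$, the right-hand side vanishes as $N \to \infty$, and combining this with weak-$*$ convergence on the $C_c$-piece and monotone convergence for $\tilde\mu$ yields $\langle \mu, \vartheta\rangle = \int \vartheta\tilde\mu\dd x$ for all $\vartheta \in \mathcal{B}_0$, i.e.\ $\mu = \tilde\mu$ in $\mathcal{B}_0'$.

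\textbf{Bounding the norm and conclusion.} It remains to show $\|\tilde\mu\|_\rho \leq 1$, which I would obtain by applying weak-$*$ convergence to the test function $\vartheta_R(x) := 1 \wedge (R/x)$; this lies in $\mathcal{B}_0$ because both $\vartheta_R$ and $(1+x)\vartheta_R(x)$ extend continuously to $[0,\infty]$, with respective limits $0$ and $R$ at infinity. Then
\[
R^{\rho-2}\int \vartheta_R\tilde\mu\dd x = \lim_{n\to\infty}R^{\rho-2}\int \vartheta_R\mu_n\dd x \leq 1,
\]
and the supremum over $R > 0$ yields $\tilde\mu \in \mathcal{X}_\rho \cap \{\|\cdot\|_\rho \leq 1\}$. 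The main obstacle throughout is the identification $\mu = \tilde\mu$ as functionals on $\mathcal{B}_0$---the ``mass at infinity'' component that a weak-$*$ limit in $\mathcal{B}_0'$ could a priori carry must be ruled out by the uniform tail estimate $N^{1-\rho} \to 0$, which relies essentially on the hypothesis $\rho > 1$.
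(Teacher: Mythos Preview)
Your proof is correct and shares the paper's strategy for both the inclusion (the pointwise bound $|\vartheta(x)|\le\|\vartheta\|_{\mathcal B_0}(1\wedge\tfrac1x)$ followed by \eqref{eq:rhonorm} at $R=1$) and the norm bound on the limit (testing against $\vartheta_R=1\wedge\tfrac Rx\in\mathcal B_0$).

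Where you differ is in the identification step. The paper simply writes ``clearly $\mu\ge0$'' and immediately pairs $\mu$ against $\vartheta_R$; implicitly it is using the isomorphism $\mathcal B_0'\cong\mathcal M([0,\infty];\tfrac{\dd x}{1+x})$ of Remark~\ref{rk:isomorphism}, under which any weak-$*$ limit of nonnegative measures is automatically a nonnegative measure on $[0,\infty]$, and the bound $R^{\rho-2}\langle\mu,\vartheta_R\rangle\le1$ then kills mass at $\infty$ because $(1+x)\vartheta_R(x)\to R$ there and $\rho>1$. Your route---Riesz--Markov on $C_c([0,\infty))$ together with the uniform tail estimate $\int_{[N,\infty)}\tfrac{\mu_n}{1+x}\,\dd x\le N^{1-\rho}$---avoids invoking that duality isomorphism and makes the ``no escape of mass to infinity'' argument explicit. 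Both approaches rest on the same mechanism ($\rho>1$), and the final test against $\vartheta_R$ is identical; your version is more self-contained, the paper's is shorter but leans on the ambient Riesz representation.
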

\begin{proof}
Since for $\vartheta\in\mathcal{B}_0$ with $\|\vartheta\|_{\mathcal{B}_0}=1$ there holds $|\vartheta(x)|\leq\frac1{1+x}\leq1\wedge\frac1x$ for $x>0$, we find for $\mu\in\mathcal{X}_\rho$, which are nonnegative, that
\begin{equation}\label{eq:inlusionestimate}
\|\mu\|_{\mathcal{B}_0'}\overset{\rm def}{=}\sup_{\|\vartheta\|_{\mathcal{B}_0}=1}\int_{[0,\infty)}\vartheta(x)\mu(x)\dd x
\leq\int_{[0,\infty)}\left(1\wedge\tfrac1x\right)\mu(x)\dd x\leq\|\mu\|_\rho,
\end{equation}
which proves the inclusion. Given further a sequence $\{\mu_n\}$ in $\{\|\mu\|_\rho\leq 1\}\cap\mathcal{X}_\rho$ such that $\mu_n\wsc\mu$ in $\mathcal{B}_0'$, then clearly $\mu\geq0$. Furthermore, for all $R>0$ there holds by definition of weak-$*$ convergence that
\begin{equation}\nonumber
\zeta_n(R)=R^{2-\rho}\int_{[0,\infty)}\left(1\wedge\tfrac{R}{x}\right)\mu_n(x)\dd x\rightarrow R^{2-\rho}\int_{[0,\infty)}\left(1\wedge\tfrac{R}{x}\right)\mu(x)\dd x,
\end{equation}
and since for all $R>0$ the sequence $\{\zeta_n(R)\}$ is bounded by one, we conclude that $\|\mu\|_\rho\leq1$.
\end{proof}
\begin{df}
By the weak-$*$ topology on $\mathcal{X}_\rho$ we mean the restriction of the weak-$*$ topology of $\mathcal{B}_0'$ to $\mathcal{X}_\rho$.
\end{df}
\begin{lm}
Given $\rho\in(1,2)$, the unit ball in $\mathcal{X}_\rho$ is weakly-$*$ compact.
\end{lm}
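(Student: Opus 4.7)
The plan is to combine Banach--Alaoglu on $\mathcal{B}_0'$ with the two facts already provided by Lemma \ref{lm:inclusionandclosedness}: that the unit ball of $\mathcal{X}_\rho$ sits inside the unit ball of $\mathcal{B}_0'$ via the estimate \eqref{eq:inlusionestimate}, and that it is weakly-$*$ closed in $\mathcal{B}_0'$. Together these should identify the unit ball of $\mathcal{X}_\rho$ with a closed subset of a weak-$*$ compact set, which is itself compact.

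First I would invoke the separability of $\mathcal{B}_0$ so that the unit ball of $\mathcal{B}_0'$ is weakly-$*$ sequentially compact and metrizable (via the reference \cite[Thm.~3.28]{B11} already cited in the preceding remark). Then, given any sequence $\{\mu_n\}$ with $\|\mu_n\|_\rho \leq 1$, the inclusion estimate $\|\mu_n\|_{\mathcal{B}_0'} \leq \|\mu_n\|_\rho \leq 1$ from \eqref{eq:inlusionestimate} places $\{\mu_n\}$ inside the unit ball of $\mathcal{B}_0'$, so I can extract a subsequence $\mu_{n_k} \wsc \mu$ in $\mathcal{B}_0'$.

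Finally, the weak-$*$ closedness statement of Lemma \ref{lm:inclusionandclosedness} guarantees that the limit $\mu$ lies again in $\{\|\mu\|_\rho \leq 1\} \cap \mathcal{X}_\rho$. By the metrizability of the unit ball of $\mathcal{B}_0'$, sequential compactness of the unit ball of $\mathcal{X}_\rho$ in the restricted topology is equivalent to compactness, so the proof is complete.

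Since Lemma \ref{lm:inclusionandclosedness} has already done the hard analytic work (the inclusion estimate and, crucially, the lower semicontinuity of $\|\cdot\|_\rho$ under weak-$*$ convergence), I do not expect any real obstacle: the argument is essentially a two-line consequence of Banach--Alaoglu. The one small point that merits care is that metrizability of the unit ball of $\mathcal{B}_0'$ (rather than of the full space) is what allows us to pass freely between sequential and topological compactness here, so I would make sure to state this explicitly before concluding.
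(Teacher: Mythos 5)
Your proof is correct and relies on the same three ingredients as the paper's: the inclusion estimate \eqref{eq:inlusionestimate}, Banach--Alaoglu for the unit ball of $\mathcal{B}_0'$, and the weak-$*$ closedness established in Lemma \ref{lm:inclusionandclosedness}. The paper's version is marginally more direct in that it simply invokes that a weakly-$*$ closed subset of a weakly-$*$ compact set is compact, without passing through sequences and metrizability; your sequential route is equally valid (and in fact makes explicit the metrizability that the paper tacitly relies on when it proves the closedness lemma via sequences), but it is the same argument at heart.
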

\begin{proof}
Using \eqref{eq:inlusionestimate}, there holds $\{\|\mu\|_\rho\leq1\}\cap\mathcal{X}_\rho\subset\{\|\mu\|_{\mathcal{B}_0'}\leq1\}$, so by weak-$*$ closedness it suffices to check that $\{\|\mu\|_{\mathcal{B}_0'}\leq1\}$ is weakly-$*$ compact, which follows by Banach-Alaoglu (cf.~\cite[Thm.~3.16]{B11}).
\end{proof}
\begin{rk}\label{rk:isomorphism}
Notice that for any function $\varphi\in C([0,\infty])$ there exists a unique $\vartheta\in\mathcal{B}_0$ such that $\varphi(x)=(1+x)\vartheta(x)$, and vice versa. Therefore
\begin{multline}\nonumber
\|\mu\|_{\mathcal{B}_0'}=\sup_{\|\vartheta\|_{\mathcal{B}_0}=1}\int_{[0,\infty)}\vartheta(x)\mu(x)\dd x\\
=\sup_{\|\varphi\|_{C([0,\infty])}=1}\int_{[0,\infty)}\varphi(x)\tfrac{\mu(x)}{1+x}\dd x=\left\|\tfrac{\mu(x)}{1+x}\right\|_{(C([0,\infty]))'},
\end{multline}
and $\mathcal{B}_0'$ and $\mathcal{M}([0,\infty];\frac{\dd x}{1+x})$ are isomorphic.
\end{rk}
\begin{df}\label{df:defofY}
For $\rho\in(1,2)$, we define for any $R_0>0$ the subset $\mathcal{Y}_\rho=\mathcal{Y}_\rho(R_0)$ to contain those elements $\mu\in\mathcal{X}_\rho$ that satisfy both $\|\mu\|_\rho\leq1$ and
\begin{equation}\label{eq:lowerboundinY}
\int_{[0,\infty)}\left(1\wedge\tfrac{R}{x}\right)\mu(x)\dd x\geq R^{2-\rho}\lambda_\rho\big(\tfrac{R}{R_0}\big)\text{ for all }R>0,
\end{equation}
with $\lambda_\rho(x)=(1-|x|^{-(2-\rho)/2})_+$.
\end{df}
\begin{rk}
For any $R_0>0$, the set $\mathcal{Y}_\rho(R_0)$ is a nonempty ($(2-\rho)(\rho-1)x^{1-\rho}\linebreak[0]\dd x\in\mathcal{Y}_\rho$), convex and weakly-$*$ compact subset of the unit sphere $\{\|\mu\|_\rho=1\}$.
\end{rk}
We now state the notion of weak solution to \eqref{eq:strongform}, which is analogous to the one that was introduced in \cite{KV15}.
\begin{df}\label{df:notionofsolution}
A function $G\in C([0,\infty):\mathcal{X}_2)$ that for all $t\in[0,\infty)$ and all $\varphi\in C^1([0,\infty):C_c^1([0,\infty)))$ satisfies
\begin{multline}\label{eq:notionofsolution}
\int_{[0,\infty)}\varphi(t,x)G(t,x)\dd x-\int_{[0,\infty)}\varphi(0,x)G(0,x)\dd x\\
=\int_0^t\left[\int_{[0,\infty)}\varphi_s(x)G(x)\dd x+\frac12\iint_{[0,\infty)^2}\frac{G(x)G(y)}{\sqrt{xy}}\mathcal{D}_2[\varphi](x,y)\dd x\dd y\right]\dd s,
\end{multline}
where $\mathcal{D}_2$ for $\varphi\in C([0,\infty))$ is defined by
\begin{equation}\nonumber
\mathcal{D}_2[\varphi](x,y)=\varphi(x+y)+\varphi(|x-y|)-2\varphi(x\vee y),
\end{equation}
will be called a {\em weak solution} to \eqref{eq:strongform}.
\end{df}
\begin{rk}
The use of the space $\mathcal{X}_2=\mathcal{M}_+([0,\infty])\cap\{\mu(\{\infty\})=0\}$ might seem artificial. We only impose the restriction to $\{\mu(\{\infty\})=0\}$ to avoid trivial nonuniqueness due to the fact that \eqref{eq:notionofsolution} does not give any information about $G(\cdot,\{\infty\})$, which could be an arbitrary function since we are using test functions that are compactly supported in $[0,\infty)$.
\end{rk}
\begin{rk}
We frequently use the following notation for the second difference:
\begin{equation}\nonumber
\Delta_y^2f(x)=f(x+y)+f(x-y)-2f(x)
\end{equation}
Also, for notational convenience we introduce
\begin{multline}\label{eq:defcalDstar}
\mathcal{D}_2^*[\vartheta](x,y)=\mathcal{D}_2[\varphi](x,y)\text{ with }\varphi(z)=z\vartheta(z)\\
=(x+y)\vartheta(x+y)+|x-y|\vartheta(|x-y|)-2(x\vee y)\vartheta(x\vee y).
\end{multline}
\end{rk}
\begin{lm}
For $f\in C^2(\mathbb{R})$ and $x\in\mathbb{R}$ there hold
\begin{align}
\label{eq:rewriteofsecdif}\Delta_y^2f(x)&=\int_\mathbb{R}(|y|-|w-x|)_+f''(w)\dd w&\text{for }y&\in\mathbb{R},\\
\label{eq:rewriteofsecdifprime}\partial_y\left[\Delta_y^2f(x)\right]
&=\int_{x-y}^{x+y}f''(w)\dd w&\text{for }y&\geq0.
\end{align}
\end{lm}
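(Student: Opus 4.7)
The plan is to prove both identities by elementary integral manipulations based on the fundamental theorem of calculus, with a Fubini swap to produce the kernel $(|y|-|w-x|)_+$.

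I would begin with \eqref{eq:rewriteofsecdifprime} because it is the cleaner of the two. Since $\Delta_y^2 f(x)=f(x+y)+f(x-y)-2f(x)$ and $f\in C^2(\mathbb{R})$, differentiation in $y$ gives $\partial_y[\Delta_y^2 f(x)]=f'(x+y)-f'(x-y)$, which for $y\geq 0$ equals $\int_{x-y}^{x+y} f''(w)\,\mathrm{d}w$ by the fundamental theorem of calculus applied to $f'$. This is immediate.

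For \eqref{eq:rewriteofsecdif} I would first observe that both sides are even in $y$, so it suffices to treat $y\geq 0$, in which case $|y|=y$. Writing
\begin{equation}\nonumber
\Delta_y^2 f(x)=\int_0^y\bigl[f'(x+s)-f'(x-s)\bigr]\,\mathrm{d}s=\int_0^y\int_{x-s}^{x+s} f''(w)\,\mathrm{d}w\,\mathrm{d}s,
\end{equation}
where the first equality comes from the fundamental theorem of calculus applied to $y\mapsto f(x+y)+f(x-y)$, and the second from \eqref{eq:rewriteofsecdifprime} (or directly from FTC applied to $f'$). A Fubini exchange then produces
\begin{equation}\nonumber
\Delta_y^2 f(x)=\int_\mathbb{R}f''(w)\,\bigl|\{s\in[0,y]:|w-x|\leq s\}\bigr|\,\mathrm{d}w=\int_\mathbb{R}(y-|w-x|)_+ f''(w)\,\mathrm{d}w,
\end{equation}
because the inner slice is the interval $[|w-x|,y]$ when $|w-x|\leq y$ and is empty otherwise. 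This is exactly \eqref{eq:rewriteofsecdif} for $y\geq 0$, and the even-in-$y$ symmetry closes the argument. Alternatively, one could obtain \eqref{eq:rewriteofsecdif} by integrating \eqref{eq:rewriteofsecdifprime} from $0$ to $y$ and then exchanging order.

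There is no real obstacle: the only thing to be careful about is the even dependence on $y$ so that $(|y|-|w-x|)_+$ rather than $(y-|w-x|)_+$ appears in the final statement, and the $C^2$ regularity hypothesis is precisely what is needed to justify both the application of FTC to $f'$ and the Fubini swap (since then $f''$ is continuous, hence locally bounded, and the integrand is compactly supported in $s$).
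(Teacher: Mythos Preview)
Your argument is correct and essentially the same as the paper's: both proofs reduce the second difference via the fundamental theorem of calculus and then apply Fubini to obtain the kernel $(|y|-|w-x|)_+$. The only cosmetic difference is the order---the paper proves \eqref{eq:rewriteofsecdif} first and declares \eqref{eq:rewriteofsecdifprime} similar, whereas you do the reverse---but the content is identical.
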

\begin{proof}
By the fundamental theorem of calculus we observe that
\begin{multline}\nonumber
\Delta_y^2f(x)=\int_x^{x+|y|}f'(z)\dd z-\int_{x-|y|}^xf'(z)\dd z\\
=\int_x^{x+|y|}\int_x^zf''(w)\dd w\dd z+\int_{x-|y|}^x\int_z^xf''(w)\dd w\dd z.
\end{multline}
Applying Fubini to the right hand side and rearranging terms, we find \eqref{eq:rewriteofsecdif}. The proof of \eqref{eq:rewriteofsecdifprime} is similar.
\end{proof}
\begin{rk}
For any $f,g\in C([0,\infty))$, we write $f(x)\sim g(x)$ as $x\rightarrow\infty$ if there holds $\lim_{x\rightarrow\infty}\frac{f(x)}{g(x)}=1$.
\end{rk}
\subsection{Statement of main results}
In this section we state the main results of this paper. The first result gives a sufficient condition for existence of a self-similar solution.
\begin{pr}\label{pr:existenceofselfsimilarsolutions}
Given $\rho\in(1,2]$, if $\Phi_\rho\in L^1(0,\infty)$ is a nonnegative function that for all $\varphi\in C_c^1([0,\infty))$ satisfies
\begin{multline}\label{eq:selfsimilarprofileL1}
\frac1\rho\int_{(0,\infty)}\left(x\varphi'(x)-(\rho-1)(\varphi(x)-\varphi(0))\right)\Phi_\rho(x)\dd x\\
=\iint_{\{x>y>0\}}\frac{\Phi_\rho(x)\Phi_\rho(y)}{\sqrt{xy}}\Delta_y^2\varphi(x)\dd x\dd y,
\end{multline}
then the function $G\in C([0,\infty):\mathcal{X}_2)$ that is given by
\begin{equation}\label{eq:selfsimilarsolution}\nonumber
G(t,x)\dd x=\left(M-\frac{\|\Phi_\rho\|_{L^1(0,\infty)}}{(t+t_0)^{(\rho-1)/\rho}}\right)\delta_0(x)\dd x+\Phi_\rho\left(\frac{x}{(t+t_0)^{1/\rho}}\right)\frac{\dd x}{t+t_0},
\end{equation}
with $t_0\in(0,\infty)$ and $M\in[0,\infty)$ such that $Mt_0^{(\rho-1)/\rho}\geq\|\Phi_\rho\|_{L^1(0,\infty)}$, is a weak solution to \eqref{eq:strongform} in the sense of Definition \ref{df:notionofsolution}.
\end{pr}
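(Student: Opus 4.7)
The plan is to substitute the explicit self-similar ansatz into the weak formulation (2.13), rewrite both sides under the scaling $y = x/T^{1/\rho}$ with $T := t+t_0$, and recognise the result as the profile equation \eqref{eq:selfsimilarprofileL1} tested against a rescaled test function. First I will check that $G\in C([0,\infty):\mathcal{X}_2)$: the coefficient $M-\|\Phi_\rho\|_{L^1}T^{-(\rho-1)/\rho}$ of $\delta_0$ equals $M-\|\Phi_\rho\|_{L^1}t_0^{-(\rho-1)/\rho}\geq0$ at $t=0$ by hypothesis and is nondecreasing in $T$ since $(\rho-1)/\rho>0$, so $G(t,\cdot)$ is a finite nonnegative measure on $[0,\infty]$ with no mass at $\infty$; weak-$*$ continuity in $t$ follows by dominated convergence applied to the explicit formula.

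Next I reduce (2.13) to its differentiated form and verify this differential identity pointwise in $t$. Writing
\begin{equation*}
F(t) := \int_{[0,\infty)}\varphi(t,x)G(t,x)\,\dd x = M\varphi(t,0) + T^{-(\rho-1)/\rho}\int_0^\infty\bigl[\varphi(t,T^{1/\rho}y)-\varphi(t,0)\bigr]\Phi_\rho(y)\,\dd y,
\end{equation*}
and differentiating, the pieces involving $\varphi_t$ reassemble exactly into $\int\varphi_t(t,x)G(t,x)\,\dd x$ (the $\varphi_t(t,0)$ term picks up the constant $M-\|\Phi_\rho\|_{L^1}T^{-(\rho-1)/\rho}$), and the remainder, after reverting to $x=T^{1/\rho}y$, equals
\begin{equation*}
\frac{1}{\rho T^{2}}\int_0^\infty\bigl[x\varphi_x(t,x)-(\rho-1)(\varphi(t,x)-\varphi(t,0))\bigr]\Phi_\rho(x/T^{1/\rho})\,\dd x.
\end{equation*}

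For the collision term I expand $G\otimes G$ into four pieces. The crucial observation is that $\mathcal{D}_2[\varphi](0,y)=\mathcal{D}_2[\varphi](x,0)=0$, and that a second-order Taylor expansion yields $\mathcal{D}_2[\varphi](x,y)=x^2\varphi''(y)+O(x^4)$ as $x\to0^+$ for fixed $y>0$, together with $\mathcal{D}_2[\varphi](x,y)=(x\wedge y)^2\varphi''(0)+o((x+y)^2)$ near the origin; hence $\mathcal{D}_2[\varphi](x,y)/\sqrt{xy}$ extends to a bounded continuous function on $[0,\infty)^2$ that vanishes on the coordinate axes. Consequently the three terms in $G\otimes G$ involving a factor of $\delta_0$ contribute $0$, and only the $\Phi_\rho\otimes\Phi_\rho$ piece survives. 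Rescaling $x=T^{1/\rho}u$, $y=T^{1/\rho}v$ and using that $\mathcal{D}_2[\varphi(t,\cdot)](T^{1/\rho}u,T^{1/\rho}v)=\mathcal{D}_2[\tilde\varphi](u,v)$ with $\tilde\varphi(u):=\varphi(t,T^{1/\rho}u)$ gives
\begin{equation*}
\tfrac12\iint\tfrac{G(t,x)G(t,y)}{\sqrt{xy}}\mathcal{D}_2[\varphi](x,y)\,\dd x\,\dd y = \tfrac{T^{1/\rho-2}}{2}\iint\tfrac{\Phi_\rho(u)\Phi_\rho(v)}{\sqrt{uv}}\mathcal{D}_2[\tilde\varphi](u,v)\,\dd u\,\dd v.
\end{equation*}

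Finally I invoke \eqref{eq:selfsimilarprofileL1} with the admissible test function $\tilde\varphi\in C_c^1([0,\infty))$. Since $u\tilde\varphi'(u)=T^{1/\rho}u\varphi_x(t,T^{1/\rho}u)$, changing variables back to $x=T^{1/\rho}u$ on the left of \eqref{eq:selfsimilarprofileL1} produces exactly $\rho^{-1}T^{1/\rho-2}$ times the integral appearing in my expression for $F'(t)-\int\varphi_tG\,\dd x$, which matches the prefactor $T^{1/\rho-2}/2$ obtained from the collision side; integrating in $t$ recovers \eqref{eq:notionofsolution}. The main technical obstacle is the rigorous handling of the kernel $1/\sqrt{xy}$ against $G\otimes G$ in the presence of a Dirac mass at the origin; this is overcome by the two orders of vanishing of $\mathcal{D}_2[\varphi]$ on each coordinate axis, while the remaining steps are purely scaling bookkeeping.
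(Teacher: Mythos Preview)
Your argument is correct and is precisely the direct verification that the paper defers to \cite[Prop.~4.1]{KV15} (``mutatis mutandis''): substitute the self-similar ansatz into \eqref{eq:notionofsolution}, rescale, and reduce to \eqref{eq:selfsimilarprofileL1}. One small imprecision: the Taylor expansions you quote, e.g.\ $\mathcal{D}_2[\varphi](x,y)=x^2\varphi''(y)+O(x^4)$, presuppose $\varphi\in C^2$ (even $C^4$), whereas Definition~\ref{df:notionofsolution} only gives $\varphi(t,\cdot)\in C_c^1$. What you actually need---and what does hold for $C^1$ test functions---is the weaker estimate $|\mathcal{D}_2[\varphi](x,y)|\le (x\wedge y)\,\omega_{\varphi'}\bigl(2(x\wedge y)\bigr)$ obtained from the modulus of continuity of $\varphi'$, which already forces $\mathcal{D}_2[\varphi](x,y)/\sqrt{xy}$ to extend continuously by zero on the coordinate axes and hence kills the $\delta_0$ contributions.
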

\begin{proof}
Mutatis mutandis identical to the proof of \cite[Prop.~4.1]{KV15}, where the case $\rho=2$ was considered.
\end{proof}
The rest of this paper is devoted to proving of the following results.
\begin{tm}[Existence]\label{tm:existence}
Given $\rho\in(1,2]$, there exists at least one $\Phi_\rho\in\mathcal{X}_2$ that for all $\varphi\in C_c^1([0,\infty))$ satisfies
\begin{multline}\label{eq:selfsimilarprofileM+}
\frac1\rho\int_{[0,\infty)}\left(x\varphi'(x)-(\rho-1)(\varphi(x)-\varphi(0))\right)\Phi_\rho(x)\dd x\\
=\frac12\iint_{[0,\infty)^2}\frac{\Phi_\rho(x)\Phi_\rho(y)}{\sqrt{xy}}\mathcal{D}_2[\varphi](x,y)\dd x\dd y.
\end{multline}
\end{tm}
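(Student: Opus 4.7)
The plan is to obtain $\Phi_\rho$ via the Schauder-Tychonoff fixed point theorem applied to a rescaled time-$t$ evolution map, in the spirit of \cite{KV15} where the case $\rho=2$ was treated. Since $\rho=2$ is covered there, the task is essentially to adapt the argument to $\rho\in(1,2)$, the new ingredient being the weaker functional framework of $\mathcal{X}_\rho$ equipped with its weak-$*$ topology and the power-law lower bound encoded in $\mathcal{Y}_\rho(R_0)$, used to prevent degeneracy of the fixed point.

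The first step is to construct, for every initial datum $\mu_0\in\mathcal{Y}_\rho(R_0)$, a global weak solution $G(\cdot\,;\mu_0)\in C([0,\infty);\mathcal{X}_\rho)$ to \eqref{eq:strongform} in the sense of Definition~\ref{df:notionofsolution}. The natural route is a two-parameter regularization: truncate frequencies to $[0,L]$ and mollify $1/\sqrt{xy}$ to $1/\sqrt{(x+\varepsilon)(y+\varepsilon)}$, obtain classical solutions of the resulting Lipschitz ODE on finite measures by Picard-Lindel\"of, and pass $L\to\infty$, $\varepsilon\to 0$ using the weak-$*$ compactness of $\{\|\mu\|_\rho\leq 1\}$ together with equicontinuity in $t$ of $t\mapsto\int\vartheta G(t)\,\dd x$ for $\vartheta\in C_c^1([0,\infty))$, the latter coming from the weak formulation and the integrability of $\mathcal{D}_2[\vartheta](x,y)/\sqrt{xy}$ against $G(t,x)G(t,y)\,\dd x\,\dd y$ (which in turn uses the $\|\cdot\|_\rho$ bound together with the fact that $\mathcal{D}_2[\vartheta](x,y)=O((x\wedge y)^2)$ by Taylor expansion).

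Next I would introduce the rescaled map
\begin{equation}\nonumber
\int_{[0,\infty)}\vartheta(y)\,\mathcal{T}_t\mu_0(y)\,\dd y=(1+t)^{(\rho-1)/\rho}\int_{[0,\infty)}\vartheta\bigl((1+t)^{-1/\rho}x\bigr)G(t,x;\mu_0)\,\dd x\qquad(\vartheta\in\mathcal{B}_0),
\end{equation}
tailored so that, by Proposition~\ref{pr:existenceofselfsimilarsolutions} read in reverse, a common fixed point of the family $\{\mathcal{T}_t\}_{t\geq 0}$ corresponds to a self-similar weak solution of \eqref{eq:strongform} and hence, upon differentiating \eqref{eq:notionofsolution} in $t$ at $t=0$, yields \eqref{eq:selfsimilarprofileM+}. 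Weak-$*$ continuity of $\mathcal{T}_t$ on $\mathcal{Y}_\rho(R_0)$ reduces to weak-$*$ stability of the time-dependent problem, which is built into the previous step.

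The main obstacle is to show that $\mathcal{T}_t(\mathcal{Y}_\rho(R_0))\subset\mathcal{Y}_\rho(R_0)$ for some $R_0$ independent of $t$. For the upper bound $\|\mathcal{T}_t\mu_0\|_\rho\leq 1$, the substitution $S=R(1+t)^{-1/\rho}$ together with sign information on $\mathcal{D}_2[\varphi_R]$ (where $\varphi_R(x)=1\wedge R/x$) and the scale invariance of the functional \eqref{eq:functional} should close the estimate. The lower bound \eqref{eq:lowerboundinY} is more delicate, as it must prevent concentration of mass at the origin under the rescaled evolution: one derives a differential inequality for $\psi(t,R)=\int(1\wedge R/x)G(t,x)\,\dd x$, and the H\"older exponent $(2-\rho)/2$ appearing in $\lambda_\rho$ is calibrated precisely so that $R\mapsto R^{2-\rho}\lambda_\rho(R/R_0)$ is a subsolution of this inequality uniformly in $t$, provided $R_0$ is taken sufficiently large. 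Granted both bounds, Schauder-Tychonoff produces for each $t>0$ a fixed point $\mu_t\in\mathcal{Y}_\rho(R_0)$ of $\mathcal{T}_t$, and a diagonal Tychonoff argument applied to the weakly-$*$ closed family of fixed-point sets $\{\mu:\mathcal{T}_{t_n}\mu=\mu\}$ along a sequence $t_n\downarrow 0$ yields a common fixed point $\Phi_\rho$ satisfying \eqref{eq:selfsimilarprofileM+}; the $\mathcal{X}_2$ membership then follows from the $x^{-\rho}$ tail decay encoded in the fixed-point equation combined with the $\|\cdot\|_\rho$ estimate.
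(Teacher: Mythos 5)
The high-level scaffolding of your proposal (regularize, show invariance of $\mathcal{Y}_\rho(R_0)$, apply a Tychonoff-type fixed-point argument, pass to the limit) matches the paper's structure, and the translation between the rescaled time-$t$ map and the self-similar evolution semigroup is a legitimate equivalent formulation. However, there is a genuine gap at the crucial step, which is exactly the one you flag as ``delicate'': the invariance of the lower bound \eqref{eq:lowerboundinY}. You propose to ``derive a differential inequality for $\psi(t,R)=\int(1\wedge R/x)G(t,x)\,\dd x$'' and calibrate the exponent $(2-\rho)/2$ so that $R\mapsto R^{2-\rho}\lambda_\rho(R/R_0)$ is a subsolution. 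This does not close: the collision operator tested against $1\wedge\tfrac{R}{x}$ involves the measure $G$ itself and all scales simultaneously, so there is no pointwise-in-$R$ ODE (or ODI) for $\psi(t,\cdot)$. What makes the argument work in the paper is the structural observation (Lemmas \ref{lm:fractionalheatequation}, \ref{lm:oddfunctionsunderide}, \ref{lm:lowerboundfirstlemma} and Proposition \ref{pr:thelowerbound}) that, when tested against odd concave functions, the nonlinear collision operator is bounded below by a fractional Laplacian $(-\Delta)^{\rho/2}$; the propagation of the lower bound is then carried out by convolving with the stable profile $v_\rho$ and comparing against the exact fractional heat evolution. The profile $\lambda_\rho$ and the threshold $R_0$ are calibrated against the explicit large-$z$ asymptotics of $v_\rho$. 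Without this reduction to a \emph{linear} parabolic comparison problem, the lower-bound step cannot be completed, and Schauder--Tychonoff has nothing to bite on.

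Two further inaccuracies, less central but worth noting. First, your regularization (frequency truncation to $[0,L]$ plus replacing $1/\sqrt{xy}$ by $1/\sqrt{(x+\varepsilon)(y+\varepsilon)}$) does not handle the diagonal singularity of the bilinear term when the measures carry atoms; the paper's mollification of one factor of $\Psi$ by $\phi_a$ is what makes one of the two factors an $L^\infty$ function and allows the careful symmetrization when $a\to 0$ in the limit argument. Second, the claim that membership in $\mathcal{X}_2$ ``follows from the $x^{-\rho}$ tail decay'' misplaces the obstacle: the tail part $\int_1^\infty\frac1x\Psi_\rho(x)\dd x$ is controlled directly by $\|\Psi_\rho\|_\rho\leq 1$, whereas finiteness of $\Phi_\rho=\tfrac1x\Psi_\rho$ \emph{near zero} requires the separate a priori estimate \eqref{eq:Texlim_1}, derived from the equation itself by testing against $(r-x)_+$ and exploiting the sign of $\mathcal{D}_2$ on such concave test functions.
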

\begin{pr}[Regularity]\label{pr:regularity}
Given $\rho\in(1,2]$, if $\Phi_\rho\in\mathcal{X}_2$ satisfies \eqref{eq:selfsimilarprofileM+} for all $\varphi\in C_c^1([0,\infty))$, then $\Phi_\rho$ is absolutely continuous with respect to Lebesgue measure, and its Radon-Nykodim derivative is smooth on $(0,\infty)$ and satisfies
\begin{multline}\label{eq:reg_3}
-\tfrac1\rho x\Phi_\rho'(x)-\Phi_\rho(x)=\int_0^{x/2}\frac{\Phi_\rho(y)}{\sqrt{y}}\left[\frac{\Phi_\rho(x+y)}{\sqrt{x+y}}+\frac{\Phi_\rho(x-y)}{\sqrt{x-y}}-2\frac{\Phi_\rho(x)}{\sqrt{x}}\right]\dd y\\
+\int_{x/2}^\infty\frac{\Phi_\rho(y)\Phi_\rho(x+y)}{\sqrt{y(x+y)}}\dd y-2\frac{\Phi_\rho(x)}{\sqrt{x}}\int_{x/2}^x\frac{\Phi_\rho(y)}{\sqrt{y}}\dd y.
\end{multline}
Actually, $\Phi_\rho$ thus satisfies \eqref{eq:selfsimilarprofileL1} for all $\varphi\in C_c^1([0,\infty))$, and furthermore, $\Phi_\rho$ is either strictly positive or identically zero on $(0,\infty)$.
\end{pr}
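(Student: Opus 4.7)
The plan is to first derive \eqref{eq:reg_3} in distributional sense on the open half-line, then bootstrap to smoothness, and only at the end to extract the dichotomy. Take $\varphi\in C_c^1((0,\infty))$ supported in some $[\delta,1/\delta]$, so that $\varphi(0)=0$. Using the symmetry $(x,y)\mapsto(y,x)$ the right-hand side of \eqref{eq:selfsimilarprofileM+} reduces to the integral over $\{x>y>0\}$, on which $\mathcal{D}_2[\varphi]$ coincides with $\Delta_y^2\varphi$. Apply Fubini together with the substitutions $z=x+y$ and $z=x-y$ (keeping $y$ fixed) in the first two terms of $\Delta_y^2\varphi(x)$, and split the resulting $y$-integration at $z/2$; one collects exactly the three pieces appearing on the right-hand side of \eqref{eq:reg_3}, so the right-hand side of \eqref{eq:selfsimilarprofileM+} takes the form $\int_0^\infty\varphi(z)K(z)\dd z$ with $K$ equal to the right-hand side of \eqref{eq:reg_3}. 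On the other hand the left-hand side of \eqref{eq:selfsimilarprofileM+} equals in distributional form $-\tfrac1\rho\langle(x\Phi_\rho)',\varphi\rangle-\tfrac{\rho-1}\rho\langle\Phi_\rho,\varphi\rangle$. Identifying both sides yields $-\tfrac1\rho(x\Phi_\rho)'-\tfrac{\rho-1}\rho\Phi_\rho=K$ as distributions on $(0,\infty)$, that is, \eqref{eq:reg_3} in distributional sense.

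\textbf{Bootstrap and recovery of \eqref{eq:selfsimilarprofileL1}.} The splitting at $y=x/2$ is designed so that the first piece of $K$ carries only a genuine second difference $\frac{\Phi_\rho(x+y)}{\sqrt{x+y}}+\frac{\Phi_\rho(x-y)}{\sqrt{x-y}}-2\frac{\Phi_\rho(x)}{\sqrt{x}}$ of $\Phi_\rho(\cdot)/\sqrt{\cdot}$, while the other two pieces are built from products and convolution-type integrals of $\Phi_\rho$ against kernels that are smooth away from the origin. Once atoms of $\Phi_\rho$ on $(0,\infty)$ have been excluded, the $\mathcal{X}_2$ bound combined with this structural decomposition shows $K\in L^1_{\rm loc}((0,\infty))$, and the distributional identity then forces $x\Phi_\rho$, and hence $\Phi_\rho$ itself, to be absolutely continuous on $(0,\infty)$. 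With $\Phi_\rho$ now a continuous function, each of the three pieces of $K$ is continuous in $x$, giving $\Phi_\rho\in C^1((0,\infty))$; differentiating \eqref{eq:reg_3} transfers derivatives through the $y$-integrations onto the arguments $x\pm y$ of $\Phi_\rho$, and a standard induction produces $\Phi_\rho\in C^\infty((0,\infty))$. Once $\Phi_\rho$ is absolutely continuous, $\Phi_\rho\otimes\Phi_\rho$ places no mass on the diagonal nor on the coordinate axes, so the symmetry reduction used above becomes legitimate for every $\varphi\in C_c^1([0,\infty))$, yielding \eqref{eq:selfsimilarprofileL1}.

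\textbf{Positivity dichotomy and main obstacle.} Suppose $\Phi_\rho\not\equiv0$ and $\Phi_\rho(x_0)=0$ for some $x_0\in(0,\infty)$. Because $\Phi_\rho\geq0$ is $C^1$, $x_0$ is a local minimum and $\Phi_\rho'(x_0)=0$, so the left-hand side of \eqref{eq:reg_3} at $x_0$ vanishes; the third term on the right vanishes by its prefactor $\Phi_\rho(x_0)$, while the first two are integrals of continuous nonnegative integrands and so vanish identically in $y$. This gives $\Phi_\rho(y)\Phi_\rho(x_0+y)=0$ for every $y>0$ and $\Phi_\rho(y)\Phi_\rho(x_0-y)=0$ for every $y\in(0,x_0)$; writing $S=\{\Phi_\rho>0\}$ and $Z=(0,\infty)\setminus S$, this reads $(x_0+S)\cup\bigl((x_0-S)\cap(0,\infty)\bigr)\subset Z$ for every $x_0\in Z$. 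A short case analysis---translating any fixed $y^*\in S$ by elements of $Z$ close to $0$ contradicts openness of $S$, while $Z$ bounded away from $0$ with $c=\inf Z$ forces $(0,c)\subset S$ and, by iterating the translation property, $Z=[c,\infty)$, so that evaluating \eqref{eq:reg_3} at $x=c$ gives $0$ on the left but a strictly positive first integral on the right, a contradiction---forces $Z=\emptyset$. The main technical obstacle throughout is the initial ruling out of atoms of $\Phi_\rho$ inside $(0,\infty)$: a priori $\Phi_\rho$ is only a finite nonnegative Radon measure, and atoms would generate Dirac contributions in the second-difference piece of $K$ that the $\mathcal{X}_2$ norm does not control; excluding them rigorously requires a careful quantitative analysis of \eqref{eq:selfsimilarprofileM+} tested against approximate identities concentrated near any candidate atom and matching the resulting divergence rates on both sides.
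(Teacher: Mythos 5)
Your overall skeleton (distributional form of \eqref{eq:reg_3}, bootstrap to smoothness, positivity dichotomy) mirrors the paper's, and the positivity argument is essentially the paper's own, though stated more elaborately than necessary: once $\Phi_\rho(x_0)=0$ with $\Phi_\rho'(x_0)=0$, the identity $\Phi_\rho(y)\Phi_\rho(x_0-y)=0$ at $y=x_0/2$ gives $\Phi_\rho(x_0/2)=0$, iteration gives $\Phi_\rho(2^{-n}x_0)=0$, and then $\Phi_\rho(y)\Phi_\rho(2^{-n}x_0+y)=0$ together with continuity kills $\Phi_\rho$ identically --- your case (b) never actually occurs.

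The genuine gap is in the regularity step, and you mischaracterize it. You identify "ruling out atoms" as the main obstacle, but non-atomicity alone is far from sufficient for your chain of deductions. Your claim that, once atoms are excluded, the $\mathcal{X}_2$ bound gives $K\in L^1_{\rm loc}((0,\infty))$ fails: the first piece $\int_0^{x/2}\frac{\Phi_\rho(y)}{\sqrt{y}}\Delta_y^2\bigl[\Phi_\rho(\cdot)/\sqrt{\cdot}\bigr](x)\dd y$ involves pointwise values $\Phi_\rho(x)$, $\Phi_\rho(x\pm y)$ that a measure, or even an $L^1$ density, does not supply, and the integrability near $y=0$ needs a quantitative modulus of continuity of $\Phi_\rho$: one needs $|\Delta_y^2[\Phi_\rho/\sqrt{\cdot}](x)|\lesssim y^\alpha$ for some $\alpha>0$ to beat the borderline singularity $\int_0^{x/2}\Phi_\rho(y)y^{-1/2}\dd y$ (which, using only the a priori estimate $\int_{(0,R]}\Phi_\rho\leq C\sqrt R$ of the paper's Lemma 4.1, sums to a constant per dyadic block and so is only borderline). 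Mere continuity, let alone mere absence of atoms, does not deliver this. The paper therefore inserts a substantial intermediate result, Lemma 4.2: a duality argument produces $\Phi_\rho\in\bigcap_{q<\infty}L^q_{\rm loc}((0,\infty))$, and a fractional-Sobolev estimate $\bigl|\int\varphi'(x)\,x\Phi_\rho(x)\dd x\bigr|\lesssim\|\varphi\|_{H^\gamma}$ then yields $\Phi_\rho\in C^{0,\frac12-\gamma}_{\rm loc}((0,\infty))$ for every $\gamma\in(0,\tfrac12)$. This local H\"older regularity is what makes \eqref{eq:reg_3} well-defined pointwise, makes the $\delta\to0$ limit in the regularized identity legitimate, and starts the bootstrap --- which in the paper also loses an $\varepsilon$ of H\"older exponent at each step because of a logarithmic factor, a point you pass over with "a standard induction." Your proposed alternative route (testing \eqref{eq:selfsimilarprofileM+} against approximate identities at a candidate atom) is not developed and would, at best, rule out atoms; it does not replace the H\"older estimate that the rest of the proof requires, so this is a missing key lemma rather than a different but complete route.
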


\begin{pr}\label{pr:rescaling}
Given $\rho\in(1,2]$, if $\Phi_\rho\in\mathcal{X}_2$ satisfies \eqref{eq:selfsimilarprofileM+} for all $\varphi\in C_c^1([0,\infty))$, then $x\Phi_\rho(x)\in\mathcal{X}_\rho$. Furthermore, for any $c>0$ the rescaled measure $\Phi_*(x)\dd x=\Phi_\rho(cx)\dd x$ also satisfies \eqref{eq:selfsimilarprofileM+} for all $\varphi\in C_c^1([0,\infty))$, and there holds $\|x\Phi_*(x)\|_\rho=c^{-\rho}\|x\Phi_\rho(x)\|_\rho$.
\end{pr}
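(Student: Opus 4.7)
The plan is to handle the three assertions in the order (2), (3), (1), since (2) and (3) are essentially change-of-variables computations, while (1) is the real content.

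For the scaling invariance of the equation (second assertion), the map $\varphi\mapsto\phi$ with $\phi(y)=\varphi(y/c)$ is a bijection on $C_c^1([0,\infty))$. After substituting $y=cx$ inside both sides of \eqref{eq:selfsimilarprofileM+} evaluated at $\Phi_*$ against $\varphi$, using $\varphi'(y/c)=c\phi'(y)$ and $\mathcal{D}_2[\varphi](y/c,y'/c)=\mathcal{D}_2[\phi](y,y')$, one finds that the LHS and the RHS each carry a common factor $c^{-1}$ relative to the LHS and RHS of \eqref{eq:selfsimilarprofileM+} evaluated at $\Phi_\rho$ against $\phi$. This proves the second assertion. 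For the norm scaling (third assertion), the same change of variables together with the identity $(y/c)\wedge R=c^{-1}(y\wedge cR)$ and the substitution $R'=cR$ gives $\|x\Phi_*\|_\rho=c^{-\rho}\|x\Phi_\rho\|_\rho$ immediately.

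The first assertion is the heart of the proposition. I would test \eqref{eq:selfsimilarprofileM+} with a $C_c^1$ approximation of the concave function $\varphi_R(x)=x\wedge R$; since $\Phi_\rho\in\mathcal{X}_2$ is a finite measure and $\mathcal{D}_2[\varphi_R]$ is bounded and compactly supported in $y$ after cut-off, passing to the limit is routine. The key algebraic input is that for $y\leq x$ one has $\mathcal{D}_2[\varphi_R](x,y)=-(y-|x-R|)_+$ (this is the formal content of $\varphi_R''=-\delta_R$ inserted into \eqref{eq:rewriteofsecdif}). The LHS evaluates to $\rho^{-1}[(2-\rho)h(R)-Rn(R)]$, with $h(R)=\int(x\wedge R)\Phi_\rho\dd x$ and $n(R)=\Phi_\rho((R,\infty))$, so the equation yields the identity
\[
Rn(R)-(2-\rho)h(R)=\rho F(R),\qquad F(R)=\iint_{y\leq x}\frac{\Phi_\rho(x)\Phi_\rho(y)}{\sqrt{xy}}(y-|x-R|)_+\dd x\dd y\geq 0.
\]
Since $h'(R)=n(R)$, this rewrites as $(R^{\rho-2}h(R))'=\rho R^{\rho-3}F(R)$, showing that $R\mapsto R^{\rho-2}h(R)$ is nondecreasing, so that $\|x\Phi_\rho\|_\rho=\lim_{R\to\infty}R^{\rho-2}h(R)$ and it suffices to bound this limit.

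The main obstacle will be showing that the limit is finite. The naïve bound $h(R)\leq RM$ gives only $R^{\rho-2}h(R)\leq MR^{\rho-1}\to\infty$, so one must exploit the identity. I would estimate $F(R)$ by dominating $(y-|x-R|)_+\leq y$ and $\sqrt{y/x}\leq 1$, and then splitting the region $\{y\leq x,\,y>|x-R|\}$ into $\{R/2\leq x\leq R\}$ and $\{x>R\}$, obtaining $F(R)\leq C\int_0^\infty\Phi_\rho(u+R)n(u)\dd u+C\int_0^{R/2}\Phi_\rho(R-u)n(u)\dd u$. Inserting this back into the identity and combining with $(2-\rho)h(R)\leq Rn(R)\leq h(R)$, I expect a Gronwall-type closure for the function $g(R):=R^{\rho-2}h(R)$, using that $n(R)\leq (2-\rho)^{-1}g(R)R^{1-\rho}$ and that $\Phi_\rho$ is smooth on $(0,\infty)$ (Proposition~\ref{pr:regularity}), which controls the integrals of $\Phi_\rho$ against bounded weights. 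Closing this Gronwall argument — essentially showing that the contribution of $F$ near the diagonal $y\approx x$ does not destroy the $R^{2-\rho}$ growth of $h$ — is the delicate step I would expect to require the most care.
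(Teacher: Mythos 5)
You take essentially the same route as the paper: the paper uses the test function $\varphi(x)=(r-x)_+$ (the negative of your $x\wedge R$, up to a constant, so the resulting identity is literally the same equation), arrives at exactly your identity $Rn(R)-(2-\rho)h(R)=\rho F(R)$ with $F(R)=\tfrac12\mathcal{I}[\Phi_\rho](R)$, and integrates it in the form $\big(R^{\rho-2}h(R)\big)'=\rho R^{\rho-3}F(R)$ to reduce the claim to the convergence of $\int r^{\rho-3}F(r)\,\dd r$. So your skeleton is correct.

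The gap is in the step you yourself flag as delicate — and it shouldn't be delicate at all. You already wrote down the two ingredients: on $\{y\le x\}$ one has $(y-|x-R|)_+\le y$ and $\sqrt{y/x}\le 1$. Combined, these give the pointwise bound
\[
\frac{(y-|x-R|)_+}{\sqrt{xy}}\;\le\;\sqrt{\tfrac{y}{x}}\;\le\;1
\qquad\text{on }\{0<y\le x\},
\]
and hence immediately the \emph{uniform} bound
\[
F(R)\;\le\;\iint_{\{y\le x\}}\Phi_\rho(x)\Phi_\rho(y)\,\dd x\,\dd y\;\le\;\|\Phi_\rho\|_2^2
\qquad\text{for every }R>0.
\]
This is exactly the observation the paper makes (in the form $(xy)^{-1/2}(x+y-r)_+\wedge(r-|x-y|)_+\le 1$). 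With it the argument closes in one line: since $R^{\rho-2}h(R)$ is nondecreasing,
\[
\|x\Phi_\rho(x)\|_\rho=\lim_{R\to\infty}R^{\rho-2}h(R)
= h(1)+\rho\int_1^\infty r^{\rho-3}F(r)\,\dd r
\;\le\;\|\Phi_\rho\|_2+\frac{\rho\,\|\Phi_\rho\|_2^2}{2-\rho}<\infty,
\]
using only $\rho<2$. The region splitting $\{R/2\le x\le R\}\cup\{x>R\}$ and the proposed Gronwall closure are unnecessary and, as you concede, not actually carried out, so as written your argument does not establish finiteness. Also note that the exponent $\rho=2$ is genuinely outside the scope of this bound (the $r^{\rho-3}$ tail diverges); the paper explicitly defers that case to \cite[Sec.~4]{KV15} via Remark~\ref{rk:restriction}, a point you should acknowledge rather than leave the statement with $\rho\in(1,2]$ untreated. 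Finally, to justify inserting the non-$C_c^1$ test function $x\wedge R$ the paper first invokes Proposition~\ref{pr:regularity} to pass to the pointwise form \eqref{eq:selfsimilarprofileL1}; your ``passing to the limit is routine'' is acceptable shorthand for the same maneuver, but be aware the approximation is being done in the $L^1$ setting, not directly in the measure-valued weak formulation.
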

\begin{rk}\label{rk:restriction}
The statements in Theorem \ref{tm:existence} and Proposition \ref{pr:rescaling} have already been proven for the case $\rho=2$ in \cite[Sec.~4]{KV15}. In the proofs in this paper we will thus restrict ourselves to the case $\rho\in(1,2)$.
\end{rk}
\begin{tm}[Power law asymptotics]\label{tm:fattails}
Given $\rho\in(1,2)$, if $\Phi_\rho\in\mathcal{X}_2$ satisfies \eqref{eq:selfsimilarprofileM+} for all $\varphi\in C_c^1([0,\infty))$, and if furthermore $\|x\Phi_\rho(x)\|_\rho=1$, then
\begin{equation}\nonumber\label{eq:rhodecaybehaviour}
\Phi_\rho(r)\sim(2-\rho)(\rho-1)r^{-\rho}\text{ as }r\rightarrow\infty.
\end{equation}
\end{tm}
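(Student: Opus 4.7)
The plan has three parts: (i) show the functional $F(R):=R^{\rho-2}\int_{[0,\infty)}(1\wedge\tfrac{R}{x})x\Phi_\rho(x)\dd x$ is nondecreasing in $R$, so that $F(R)\to\sup F=1$ as $R\to\infty$; (ii) derive pointwise convergence $R^\rho\Phi_\rho(R)\to c$ by integrating \eqref{eq:reg_3}; (iii) combine these to identify $c=(2-\rho)(\rho-1)$.

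Writing $M(R):=\int_0^R x\Phi_\rho(x)\dd x$ and $N(R):=\int_R^\infty\Phi_\rho(x)\dd x$, so that $F(R)=R^{\rho-2}M(R)+R^{\rho-1}N(R)$, I would first plug $\varphi(x)=x\wedge R$ into \eqref{eq:selfsimilarprofileM+} (after standard approximation by $C_c^1$ functions, which is legitimate because $F\leq 1$ ensures the necessary integrability). The result is
\begin{equation}\nonumber
\frac{1}{\rho}\bigl[(2-\rho)M(R)-(\rho-1)RN(R)\bigr]=\frac{1}{2}\iint_{[0,\infty)^2}\frac{\Phi_\rho(x)\Phi_\rho(y)}{\sqrt{xy}}\mathcal{D}_2[x\wedge R](x,y)\dd x\dd y.
\end{equation}
A short case analysis shows $\mathcal{D}_2[x\wedge R]\leq 0$: it equals $-(x+y-R)$ on $\{x,y\leq R<x+y\}$, equals $-(R-|x-y|)$ on $\{x\vee y>R\geq|x-y|\}$, and vanishes elsewhere. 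Hence $(\rho-1)RN(R)\geq(2-\rho)M(R)$, and a direct differentiation yields
\[
F'(R)=R^{\rho-3}\bigl[(\rho-1)RN(R)-(2-\rho)M(R)\bigr]\geq 0,
\]
so $F$ is nondecreasing, and together with $\sup F=1$ this gives $F(R)\to 1$ as $R\to\infty$.

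Step (ii) uses the differential reformulation $(x^\rho\Phi_\rho(x))'=-\rho x^{\rho-1}H_\rho(x)$ of \eqref{eq:reg_3}, where $H_\rho$ denotes its nonlocal right-hand side. Exploiting the second-difference structure of the first term in $H_\rho$ (which softens the singularity of $1/\sqrt{y}$ at the origin) together with the integrated bounds $M(R)\leq R^{2-\rho}$ and $N(R)\leq R^{1-\rho}$ following from $F\leq 1$, one shows $x^{\rho-1}H_\rho(x)\in L^1((a,\infty))$ for every $a>0$. Integrating yields the existence of $c:=\lim_{R\to\infty}R^\rho\Phi_\rho(R)$, with $c=a^\rho\Phi_\rho(a)-\rho\int_a^\infty x^{\rho-1}H_\rho(x)\dd x$ independent of $a$.

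For (iii), the pointwise asymptotic $\Phi_\rho(R)\sim cR^{-\rho}$ combined with the uniform control from $F\leq 1$ allows a dominated-convergence argument to yield $R^{\rho-2}M(R)\to c/(2-\rho)$ and $R^{\rho-1}N(R)\to c/(\rho-1)$, so $F(R)\to c/((2-\rho)(\rho-1))$. Comparison with the conclusion $F(R)\to 1$ from step (i) forces $c=(2-\rho)(\rho-1)$, completing the proof. The main technical obstacle is the uniform estimate on $x^{\rho-1}H_\rho(x)$ in step (ii): the three nonlocal integrals in $H_\rho$ each combine $\Phi_\rho$-convolutions with $1/\sqrt{xy}$-type singularities, so upgrading the integrated $\mathcal{X}_\rho$-norm into the pointwise control required to bound $H_\rho$ for large $x$ relies critically on the smoothness of $\Phi_\rho$ established in Proposition \ref{pr:regularity}.
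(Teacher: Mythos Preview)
Your step (i) is correct and essentially matches the paper's Proposition~\ref{pr:rescaling} and Lemma~\ref{lm:integrallimitatinfinity}: the same test function (up to the affine shift $(r-x)_+=r-(x\wedge r)$) yields the identity $(2-\rho)M(R)-(\rho-1)RN(R)=-\tfrac\rho2\mathcal{I}[\Phi_\rho](R)$, and from monotonicity of $F$ together with $R^{\rho-2}\mathcal{I}[\Phi_\rho](R)\to0$ one actually gets the separate limits $R^{\rho-2}M(R)\to\rho-1$ and $R^{\rho-1}N(R)\to2-\rho$, not just $F(R)\to1$. Step (iii) is then immediate once (ii) is in hand.

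The gap is in step (ii). Your claim that $x^{\rho-1}H_\rho(x)\in L^1((a,\infty))$ is not justified, and in fact seems to require information you do not have. The problematic piece is the first term of $H_\rho$, namely $\int_0^{x/2}\tfrac{\Phi_\rho(y)}{\sqrt y}\,\Delta_y^2\bigl[\Phi_\rho/\sqrt{\cdot}\,\bigr](x)\,\dd y$ for small $y$: controlling this in $L^1(x^{\rho-1}\dd x)$ amounts to controlling $\int_a^\infty x^{\rho-1}|(\Phi_\rho/\sqrt x)''|\,\dd x$ (or a fractional analogue), i.e.\ \emph{derivative decay} of $\Phi_\rho$ at infinity. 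The available inputs --- $F\le1$, the estimate of Lemma~\ref{lm:integrallimitnearzero}, and the $L^q$ bound $\|\Phi_\rho\|_{L^q(r,\infty)}=O(r^{1/q-\rho})$ from the regularity proof --- are integral bounds, and the smoothness in Proposition~\ref{pr:regularity} is purely local; neither yields the global pointwise or derivative control that your outline invokes. You flag this as ``the main technical obstacle'' without indicating a mechanism, and I do not see one that stays within the strong equation \eqref{eq:reg_3}.

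The paper sidesteps this entirely by differentiating in $r$ the very identity you derived in step~(i), rather than integrating the pointwise equation. This gives directly
\[
(\rho-1)N(r)-r\Phi_\rho(r)=\tfrac\rho2\Bigl(\iint_{\{|x-y|<r,\ x\vee y>r\}}-\iint_{\{x+y>r,\ x\vee y<r\}}\Bigr)\frac{\Phi_\rho(x)\Phi_\rho(y)}{\sqrt{xy}}\,\dd x\,\dd y,
\]
so the error between the pointwise value $r\Phi_\rho(r)$ and $(\rho-1)N(r)$ is a double integral of $\Phi_\rho(x)\Phi_\rho(y)/\sqrt{xy}$ over an explicit region. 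Such terms can be shown to be $o(r^{1-\rho})$ using only the $L^q$ estimate and Lemma~\ref{lm:integrallimitnearzero} --- no second differences of $\Phi_\rho$, no derivatives, no separate existence argument for the limit. Combined with $N(r)\sim(2-\rho)r^{1-\rho}$ from step~(i), this yields $\Phi_\rho(r)\sim(2-\rho)(\rho-1)r^{-\rho}$ in one stroke, merging your steps (ii) and (iii).
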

\begin{tm}[Exponential bounds]\label{tm:exponentialtails}
If $\Phi_2\in\mathcal{X}_2$ satisfies \eqref{eq:selfsimilarprofileM+} with $\rho=2$ for all $\varphi\in C_c^1([0,\infty))$, and if $\Phi_2$ is not identically zero on $(0,\infty)$, then there exist constants $a,c\in(0,1)$ such that
\begin{equation}\nonumber
\Phi_2(r)\leq\frac{e^{-ar}}c\text{ for all }r\geq1\text{, and }\int_{(R,R+1)}\Phi_2(x)\dd x\geq ce^{-\frac{R}{a}}\text{ for all }R\geq0.
\end{equation}
\end{tm}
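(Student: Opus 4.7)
The proof splits naturally into the pointwise upper bound and the averaged lower bound; both rest on the strong form \eqref{eq:reg_3} and on $\Phi_2>0$ on $(0,\infty)$, as provided by Proposition~\ref{pr:regularity}.

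\textbf{Upper bound.} The first step is to show finiteness of the Laplace transform $M_\lambda:=\int_0^\infty e^{\lambda x}\Phi_2(x)\dd x$ for some $\lambda>0$. To this end I would test the weak formulation \eqref{eq:selfsimilarprofileM+} with a compactly supported $C^1$ approximation of $\varphi(x)=e^{\lambda x}$, which is justified by the a priori bound $M_0<\infty$. Using the factorization
\[
\mathcal{D}_2[e^{\lambda\cdot}](x,y) = 2\,e^{\lambda(x\vee y)}\bigl(\cosh(\lambda(x\wedge y))-1\bigr)
\]
together with the elementary estimate $\cosh(\lambda t)-1\leq\tfrac12\lambda^2 t^2 e^{\lambda t}$, the weak identity becomes an inequality of the form
\[
\lambda N_\lambda - M_\lambda + M_0 \,\leq\, C\lambda^2\,\tilde M_\lambda\int_0^\infty y^{3/2}e^{\lambda y}\Phi_2(y)\dd y,
\]
where $N_\lambda=\int xe^{\lambda x}\Phi_2\dd x$ and $\tilde M_\lambda=\int e^{\lambda x}\Phi_2(x)/\sqrt{x}\dd x$. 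Since $\lambda N_\lambda\geq(\lambda x_0-1)M_\lambda-\lambda x_0 e^{\lambda x_0}M_0$ for every $x_0>0$, choosing $x_0=2/\lambda$ closes the inequality for small $\lambda$, giving $M_\lambda<\infty$. This integral estimate is upgraded to the pointwise bound via Proposition~\ref{pr:regularity}: \eqref{eq:reg_3} yields quantitative control on $\Phi_2'$ on unit intervals, so $\|\Phi_2\|_{L^\infty(r,r+1)}\lesssim\int_{r-1}^{r+1}\Phi_2\dd x$, and Chebyshev applied to $M_\lambda$ produces the pointwise exponential decay.

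\textbf{Lower bound.} Writing $\mu(R):=\int_{(R,R+1)}\Phi_2\dd x$, Proposition~\ref{pr:regularity} gives $\mu(R)>0$ for every $R\geq 0$, with a uniform lower bound on some initial interval $[0,R_0]$. The strategy is to derive a doubling inequality $\Phi_2(R)\gtrsim\Phi_2(R/2)^2/R^\beta$ for some finite $\beta$ and all $R\geq R_0$, which upon iteration along $R_n:=2^n R_0$ yields $-\log\Phi_2(R_n)\leq C\cdot 2^n$, i.e., the averaged exponential lower bound with some rate $1/a>0$. The doubling combines two estimates derived from \eqref{eq:reg_3}: (i) restricting the $y$-integration in the convolution-type production term $B(x)=\int_0^{x/2}\Phi_2(y)\Phi_2(x-y)/\sqrt{y(x-y)}\dd y$ to $y$ near $x/2$ and using smoothness of $\Phi_2$ gives $B(x)\gtrsim\Phi_2(x/2)^2/x$, hence $\int_R^{R+1}B\dd x\gtrsim\Phi_2(R/2)^2/R$; (ii) integrating \eqref{eq:reg_3} over $(R,R+1)$ yields a conservation identity from which $\int_R^{R+1}B\dd x\lesssim R\Phi_2(R)$, where the destruction term $\int_R^{R+1}C\lesssim\mu(R)/\sqrt{R}$ is controlled by the uniform bound on $L(x):=\int_0^x\Phi_2(y)/\sqrt{y}\dd y$ and by the previously-established pointwise upper bound on $\Phi_2$.

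\textbf{Main obstacle.} The principal difficulty is closing the nonlinear inequality for $M_\lambda$ while preserving the factor $\lambda^2$ coming from the Taylor expansion of $\cosh(\lambda y)-1$: the crude bound $\cosh(\lambda y)-1\leq e^{\lambda y}/2$ destroys this factor and prevents closure. A secondary technical issue is the behavior of $\Phi_2$ near the origin, since weighted integrals such as $\int\Phi_2(y)e^{\lambda y}/\sqrt{y}\dd y$ and $L(x)=\int_0^x\Phi_2(y)/\sqrt{y}\dd y$ appear throughout the argument; the required local $L^1$-control of $\Phi_2/\sqrt{y}$ should follow either from Proposition~\ref{pr:regularity} directly or from a tailored choice of test function in \eqref{eq:selfsimilarprofileM+}.
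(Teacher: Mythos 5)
Your two-part plan (pointwise upper bound, averaged lower bound) matches the structure of the paper's Section \ref{sec:exponentialtail}, but the actual mechanisms you propose differ substantially and both contain real gaps.

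\textbf{Upper bound.} Your idea is to establish $M_\lambda=\int e^{\lambda x}\Phi_2\,\dd x<\infty$ directly, by testing with $e^{\lambda x}$ and Taylor-expanding $\cosh(\lambda y)-1$. This is morally equivalent to what the paper does, but the paper avoids the nonlinear closure problem you yourself flag as the ``main obstacle'': Lemma \ref{lm:momentestimateforrhoequaltwo} bounds the moments $m_\gamma=\int x^\gamma\Phi_2\,\dd x$ one at a time, using the test functions $(r^\gamma-x^\gamma)_+$ and a binomial recursion \eqref{eq:binomial} to close the induction $m_\gamma\leq\gamma^\gamma A^{\gamma+1}$. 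This moment-by-moment induction is precisely what sidesteps the quadratic inequality $M_\lambda\lesssim M_0+\lambda^2 M_\lambda^2$ that your argument faces; that inequality has two branches and does not a priori close without truncation and a continuity argument, which you do not supply. Your proposed upgrade from integral to pointwise bound via ``quantitative control on $\Phi_2'$ on unit intervals'' is also not what the paper does: the paper instead differentiates \eqref{eq:P1_19_2} (i.e.\ \eqref{eq:T2_2} with $\rho=2$), drops the negative term, estimates the remaining production integral on $\{x+y>r,\,x\vee y<r\}$ using Lemma \ref{lm:integrallimitnearzero} logarithmically near the diagonal, and iterates a weighted $L^\infty$ bound (\eqref{eq:Ppub_6}--\eqref{eq:Ppub_7}). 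Nothing resembling a local Harnack inequality ($\|\Phi_2\|_{L^\infty(r,r+1)}\lesssim\int_{r-1}^{r+1}\Phi_2$) is proved or used, and such an inequality is not an automatic consequence of Proposition \ref{pr:regularity} (smoothness on compact sets gives no uniformity in $r$).

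\textbf{Lower bound.} The doubling idea --- use the production term to seed mass at scale $R$ from mass at scale $R/2$ --- is the right intuition, and it is also the germ of the paper's recursion $I(b,R)\geq t\,(I(be^{-t/2},\tfrac45R))^2$ in Lemma \ref{lm:L67}. But your formulation has two genuine gaps. First, the estimate $B(x)\gtrsim\Phi_2(x/2)^2/x$ ``using smoothness of $\Phi_2$'' is not correct as stated: smoothness gives no \emph{lower} bound relating $\Phi_2(y)$ for $y$ near $x/2$ to $\Phi_2(x/2)$. You would need a Harnack comparison, which is delicate for this nonlocal equation and not available. The paper avoids this entirely by working throughout with the averaged, exponentially weighted quantity $I(b,R)=c\inf_{r\in[0,R]}\int_r^{r+1}e^{bx}\Phi_2\,\dd x$; the production is then lower-bounded by a product of two such unit-interval masses, no Harnack required. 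Second, integrating \eqref{eq:reg_3} over $(R,R+1)$ yields
\begin{equation*}
\int_R^{R+1}B\,\dd x\leq\tfrac12R\Phi_2(R)-\tfrac12(R+1)\Phi_2(R+1)-\tfrac12\mu(R)+\int_R^{R+1}D\,\dd x,
\end{equation*}
so after bounding $\int D\lesssim\mu(R)/\sqrt R$ the combined inequality has the shape $\Phi_2(R/2)^2/R\lesssim R\Phi_2(R)+\mu(R)/\sqrt R$. This tells you that \emph{one of} $\Phi_2(R)$, $\mu(R)$ is large, but to conclude a lower bound on $\mu(R)$ for \emph{all} $R\geq 0$ you must disentangle the two, and an iteration along $R_n=2^nR_0$ gives dyadic points only. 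The paper instead constructs explicit dual subsolutions of the adjoint inequality \eqref{eq:aelb_dual_2}, built from the fractional heat kernel of Lemma \ref{lm:fractionalheatequation} (Definition \ref{df:subsolution}, Lemmas \ref{lm:L65}--\ref{lm:L66}), and the comparison with $\Phi_2$ delivers the recursion directly at the level of $I(b,R)$. The final step (proof of Proposition \ref{pr:averagedlowerbound}) shows $\log I(b_n,R_n)>0$ for all $n$ using a convergent choice $t_n=n^{-2}\wedge B^{-1}$; this careful tuning has no counterpart in your sketch.

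In short: your upper bound strategy is the same in spirit as the paper's but lacks the device (moment induction) that closes the argument; your lower bound strategy is a genuinely different and more naive route, and it is blocked by the Harnack issue and the fact that the balance inequality mixes $\Phi_2(R)$ with $\mu(R)$.
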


\section{Existence of self-similar profiles}\label{sec:existence}
The proof of Theorem \ref{tm:existence} for the case $\rho=2$ was already given in \cite{KV15}, and the obtained profiles $\Phi_2$ turned out to have finite energy. Due to this finiteness of the energy the existence result for self-similar solutions to \eqref{eq:strongform} in \cite{KV15} can be seen as the analogue to the existence result for self-similar solutions with finite mass to the coagulation equation obtained in \cite{EMR05} and \cite{FL05}.

For the coagulation equation, self-similar solutions with infinite mass, i.e.~with fat tailed behaviour at infinity, have been obtained in \cite{NV13} for locally bounded kernels, and in \cite{NTV15} for a class of singular kernels, which in particular includes the classical Smoluchowski kernel for Brownian coagulation.

In this paper we construct self-similar solutions with fat tailed behaviour at infinity to \eqref{eq:strongform}, adapting the methods of \cite{NV13}. The main idea in the construction made in that paper, is that for fat tailed solutions the linear terms in the equation for the self-similar profile are dominant for large values of $x$. The effect of the nonlinear collision kernels can be seen as a nonlocal diffusive effect for large particles, which gives a lower order correction. Due to the fact that in coagulation equations the size of the particles is always increasing, the resulting diffusive effect is directed towards larger values. Conversely, in our case the collision kernel can transport particles to both larger or smaller values, and the resulting nonlocal diffusive effect is no longer directed. However, the operator describing this diffusive effect is more symmetric than in the case of coagulation. This has two main consequences. Firstly, the natural test functions required to study the transport of particles are those that are either convex or concave, while in the case of the coagulation equation the natural test functions were the monotone ones. Secondly, due to the symmetry of our collision kernel the singular terms in \eqref{eq:strongform} have a weaker effect, and many of the technicalities that had to be introduced in \cite{NTV15} can be avoided.

On the proof presented in the following, we would like to mention that large parts of our construction also work in the case $\rho=2$. However, technicalities aside, it is not a priori clear that this construction yields a nontrivial solution where not all the mass is concentrated in the origin.\\

We now restrict ourselves to $\rho\in(1,2)$. Introducing as in \cite{KV15} the notations $\Psi_\rho(x)=x\Phi_\rho(x)$ and $\vartheta(x)=\frac1x(\varphi(x)-\varphi(0))$, we can rewrite \eqref{eq:selfsimilarprofileM+} as
\begin{multline}\label{eq:selfsimilarpsiprofile}
\frac1\rho\int_{[0,\infty)}\left(x\vartheta'(x)+(2-\rho)\vartheta(x)\right)\Psi_\rho(x)\dd x\\
=\frac12\iint_{[0,\infty)^2}\frac{\Psi_\rho(x)\Psi_\rho(y)}{(xy)^{3/2}}\mathcal{D}_2^*[\vartheta](x,y)\dd x\dd y,
\end{multline}
where we recall the notation \eqref{eq:defcalDstar}. Now, we would like to prove existence of a solution to \eqref{eq:selfsimilarpsiprofile} by showing that there exists a stationary solution to
\begin{multline}\label{eq:selfsimilarpsiprofiledynamiceqn}
\int_{[0,\infty)}\vartheta(t,x)\Psi_\rho(t,x)\dd x-\int_{[0,\infty)}\vartheta(0,x)\Psi_\rho(0,x)\dd x\\
=\int_0^t\Bigg[\int_{[0,\infty)}\left(\vartheta_s(s,x)-\tfrac1\rho\left(x\vartheta_x(s,x)+(2-\rho)\vartheta(s,x)\right)\right)\Psi_\rho(s,x)\dd x\\
+\frac12\iint_{[0,\infty)^2}\frac{\Psi_\rho(s,x)\Psi_\rho(s,y)}{(xy)^{3/2}}\mathcal{D}_2^*[\vartheta(s,\cdot)](x,y)\dd x\dd y\Bigg]\dd s.
\end{multline}
In order to avoid technical difficulties due to the singularity of the kernel, we will consider a regularized version of \eqref{eq:selfsimilarpsiprofiledynamiceqn}. We then prove existence of stationary solutions to that equation by a Schauder type fixed point theorem, and finally show by a compactness result that by removing the regularization we obtain a solution to \eqref{eq:selfsimilarpsiprofile}.
\begin{as}\label{assumption}
Let $\rho\in(1,2)$ and $\varepsilon>0$ be fixed arbitrarily, let $a\in(0,\frac\varepsilon2)$ be arbitrary, and let $\phi\in C_c^\infty((-1,1))$ be a fixed even function such that $\phi\geq0$ and $\|\phi\|_{L^1(\mathbb{R})}=1$. For any $b>0$ we define $\phi_{b}(x)=\frac1{b}\phi(\frac{x}{b})$ for all $x\in\mathbb{R}$.
\end{as}
\begin{pr}\label{pr:wellposednessforregularizedevolution}
Under Assumption \ref{assumption}, there exist $R_0>0$ and a weakly-$*$ continuous semigroup $(S_a(t))_{t\geq0}$ on $\mathcal{Y}_\rho=\mathcal{Y}_\rho(R_0)$ such that if given $\Psi_0\in\mathcal{Y}_\rho$, then $\Psi_a(t,\cdot)= S_a(t)\Psi_0\in\mathcal{Y}_\rho$ satisfies
\begin{multline}\label{eq:evolutionofPsia}
\int_{[0,\infty)}\vartheta(t,x)\Psi_a(t,x)\dd x-\int_{[0,\infty)}\vartheta(0,x)\Psi_0(x)\dd x\\
=\int_0^t\left[\int_{[0,\infty)}\left(\vartheta_s(s,x)-\tfrac1\rho\left(x\vartheta_x(s,x)+(2-\rho)\vartheta(s,x)\right)\right)\Psi_a(s,x)\dd x\right.\\
\left.+\iint_{\{x>y>0\}}\frac{\Psi_a(s,x)(\phi_a\ast\Psi_a(s,\cdot))(y)}{((x+\varepsilon)(y+\varepsilon))^{3/2}}\mathcal{D}_2^*[\vartheta(s,\cdot)](x,y)\dd x\dd y\right]\dd s
\end{multline}
for all $t\geq0$ and all $\vartheta\in C^1([0,\infty):\mathcal{B}_1)$.
\end{pr}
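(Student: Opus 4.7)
The plan is to realize $S_a(t)\Psi_0$ as the unique global mild solution of the regularized Cauchy problem \eqref{eq:evolutionofPsia}, and then to verify that the resulting flow leaves $\mathcal{Y}_\rho(R_0)$ invariant for a sufficiently large threshold $R_0$. First I would identify the linear part of \eqref{eq:evolutionofPsia} with the generator $\mathcal{L}^*\vartheta(x)=-\tfrac1\rho(x\vartheta'(x)+(2-\rho)\vartheta(x))$ of the explicit rescaling semigroup $(e^{t\mathcal{L}^*}\vartheta)(x)=e^{-(2-\rho)t/\rho}\vartheta(e^{-t/\rho}x)$, which is strongly continuous on $\mathcal{B}_1$. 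By the change of variables $z=e^{t/\rho}x$ the adjoint flow on $\mathcal{X}_\rho$ preserves the $\rho$-norm and sends the functional $\int(1\wedge R/x)\mu(x)\,\dd x$ into $e^{(\rho-2)t/\rho}\int(1\wedge Re^{t/\rho}/x)\mu(x)\,\dd x$, which \emph{enlarges} the lower bound \eqref{eq:lowerboundinY} since $\lambda_\rho$ is nondecreasing. Writing \eqref{eq:evolutionofPsia} in Duhamel form
\begin{equation}\nonumber
\Psi_a(t)=e^{t\mathcal{L}}\Psi_0+\int_0^te^{(t-s)\mathcal{L}}Q_{a,\varepsilon}\big(\Psi_a(s)\big)\,\dd s,
\end{equation}
where $Q_{a,\varepsilon}(\mu)\in\mathcal{B}_1'$ is the signed measure defined in duality by the collision integral in \eqref{eq:evolutionofPsia}, converts the problem into a fixed-point equation.

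For local existence I would exploit the Lipschitz estimates that the regularization makes available. The bound $((x+\varepsilon)(y+\varepsilon))^{-3/2}\leq\varepsilon^{-3}$ and the compact support of $\phi_a$ ensure that $\phi_a*\mu$ is a locally bounded function with norms depending on $a$, $\varepsilon$ and $\|\mu\|_\rho$ only. Combined with the estimate $|\mathcal{D}_2^*[\vartheta](x,y)|\leq 2(x\wedge y)\|\vartheta\|_{\mathcal{B}_1}$, which follows from the fact that $\varphi(z)=z\vartheta(z)$ lies in $C^1([0,\infty])$ with $\|\varphi'\|_\infty\leq\|\vartheta\|_{\mathcal{B}_1}$, this provides a bilinear Lipschitz bound for $Q_{a,\varepsilon}$ in the $\mathcal{B}_0'$ norm on bounded subsets of $\mathcal{X}_\rho$. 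A Banach fixed-point argument in a suitable metric on $C([0,T];\{\|\mu\|_\rho\leq 2\})$ then yields local existence and uniqueness for every $\Psi_0\in\mathcal{X}_\rho$ with $\|\Psi_0\|_\rho\leq 1$.

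The heart of the proof will be the propagation of the two defining inequalities of $\mathcal{Y}_\rho(R_0)$. For the upper bound I would test \eqref{eq:evolutionofPsia} against smooth approximations of $\vartheta_R(x)=R^{\rho-2}(1\wedge R/x)$: the associated $\varphi_R(z)=z\vartheta_R(z)=R^{\rho-2}(z\wedge R)$ is concave, so $\mathcal{D}_2[\varphi_R]\leq 0$ and the collision term contributes nonpositively. A short computation reveals that at any $R_*$ realizing the supremum in the definition of $\|\Psi_a(t)\|_\rho$, the stationarity condition $\partial_R[R^{\rho-2}\int(1\wedge R/x)\Psi_a\,\dd x]_{R=R_*}=0$ forces the transport contribution $\int\mathcal{L}^*\vartheta_{R_*}\cdot\Psi_a\,\dd x$ to vanish identically, so a standard maximum-principle argument ensures that $\|\Psi_a(t)\|_\rho$ cannot exceed its initial value. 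For the lower bound \eqref{eq:lowerboundinY} the transport alone already produces the strict gain $\lambda_\rho(R/R_0)\mapsto\lambda_\rho(Re^{t/\rho}/R_0)$, while the loss caused by the collision term against $\vartheta_R$ is bounded by $C(\varepsilon,a,R)\|\Psi_a\|_\rho^2$ with $C$ independent of $R_0$; choosing $R_0$ large therefore allows the scaling gain to absorb this loss uniformly on every bounded window of $R$ on which $\lambda_\rho(R/R_0)>0$.

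These a priori bounds promote local existence to global existence with trajectories remaining in $\mathcal{Y}_\rho(R_0)$; uniqueness delivers the semigroup property $S_a(t+s)=S_a(t)\circ S_a(s)$, and weak-$*$ continuity of $t\mapsto S_a(t)\Psi_0$ into $(\mathcal{X}_\rho,w^*)$ follows from the Duhamel representation and the strong continuity of $e^{t\mathcal{L}^*}$ on $\mathcal{B}_1$. The principal obstacle will be the propagation of the lower bound: unlike the upper bound, $\mathcal{D}_2[\varphi_R]$ carries the wrong sign for this purpose, so one must quantitatively balance an essentially $R_0$-independent collision dissipation against the modest scaling gain $\lambda_\rho(Re^{t/\rho}/R_0)-\lambda_\rho(R/R_0)$. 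This balance is precisely what will fix the threshold $R_0$.
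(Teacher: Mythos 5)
Your fixed-point setup and the upper bound preservation are plausible and close in spirit to the paper, though not identical in execution. For the upper bound, the paper changes variables $X=xe^{s/\rho}$ to remove the transport term, tests the resulting equation against $\psi(X)=1\wedge R/X$ (using that $X\mapsto X\psi(X)$ is concave so $\mathcal{D}_2^*[\psi]\leq 0$), and applies Gronwall; the exponential growth in the new variable then exactly cancels against the Jacobian of the change of variables. Your critical-point/maximum-principle alternative (observing that the transport term tested against $\vartheta_{R_*}$ vanishes at a maximizer $R_*$, since it is proportional to $R_*f'(R_*)$ where $f(R)=R^{\rho-2}\int(1\wedge R/x)\Psi_a\,\dd x$) would work if the supremum is attained at an interior $R_*$ where $f$ is differentiable, but you would need to rule out the boundary cases $R\to 0$ and $R\to\infty$, which the paper's Gronwall route avoids entirely.

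The serious gap is in the propagation of the lower bound \eqref{eq:lowerboundinY}. You claim the collision loss tested against $\vartheta_R$ is bounded by $C(\varepsilon,a,R)\|\Psi_a\|_\rho^2$ with $C$ independent of $R_0$, and that choosing $R_0$ large lets the transport gain absorb it. This cannot close: the transport gain $\lambda_\rho(\xi e^{t/\rho})-\lambda_\rho(\xi)\approx t\,\tfrac{2-\rho}{4\rho}\xi^{-(2-\rho)/2}$ (with $\xi=R/R_0$) shrinks as $R_0\to\infty$ for fixed $R$, and as $R\to\infty$ for fixed $R_0$; a crude $R_0$-independent bound on the loss does not vanish fast enough in either regime, and restricting to a bounded window of $R$ is not allowed since \eqref{eq:lowerboundinY} must hold for every $R>0$. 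What makes the paper's argument work is the comparison with the fractional heat equation $u_t=\int y^{-\rho-1}\Delta_y^2 u\,\dd y$: Lemma \ref{lm:lowerboundfirstlemma} bounds $\int\vartheta\, S_a(t)\Psi_0\,\dd x$ from below by a convolution with the stable density $v_\rho$, exploiting that the diffusive action of the collision kernel on odd concave test functions is exactly of this nonlocal type and is \emph{slower} than that of the true fractional heat semigroup. Taylor-expanding the resulting convolution (Proposition \ref{pr:thelowerbound}) then shows the loss behaves like $t\,\kappa\, R_0^{-\rho}\xi^{-\rho}$, which is dominated by the $\xi^{-(2-\rho)/2}$ gain uniformly in $\xi>1$ once $R_0^\rho\geq\tfrac{4\rho^2\kappa}{2-\rho}$, precisely because $\rho>(2-\rho)/2$. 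Without this precise $\xi^{-\rho}$ decay of the collision loss — which your estimate does not furnish — the balance you invoke cannot be established.
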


\subsection{Construction of the semigroup}
To prove existence of an evolution semigroup for \eqref{eq:evolutionofPsia}, it is useful to consider a reformulation where the transport term is removed. Introducing the variables
\begin{align}\label{eq:firstchangeofvariables}
H_a(s,X)&=\Psi_a(s,x),&\psi(s,X)&=e^{-s/\rho}\vartheta(s,x),&X&=xe^{s/\rho},
\end{align}
we can write \eqref{eq:evolutionofPsia} as
\begin{multline}\label{eq:approximatemildsolution}
\int_{[0,\infty)}\psi(t,X)H_a(t,X)\dd X-\int_{[0,\infty)}\psi(0,X)H_a(0,X)\dd X\\
=\int_0^t\Bigg[\int_{[0,\infty)}\left(\psi_s(s,X)+\tfrac{\rho-1}{\rho}\psi(s,X)\right)H_a(s,X)\dd X\\
\Bigg.+\iint_{\{X>Y>0\}}\frac{e^{s/\rho}H_a(s,X)(\phi_{ae^{s/\rho}}\ast H_a(s,\cdot))(Y)}{((X+\varepsilon e^{s/\rho})(Y+\varepsilon e^{s/\rho}))^{3/2}}\\\times
\mathcal{D}_2^*[\psi(s,\cdot)](X,Y)\dd X\dd Y\Bigg]\dd s.
\end{multline}
To construct the evolution semigroup for \eqref{eq:evolutionofPsia} we thus construct a solution to \eqref{eq:approximatemildsolution}. To this end we prove existence and uniqueness for suitable mild solutions, which turn out to be weak solutions in the sense of \eqref{eq:approximatemildsolution}.
\begin{lm}\label{lm:localexistenceofapproximatemildsolutions}
Under Assumption \ref{assumption}, then given $H_0\in\mathcal{X}_\rho$, there exist $T>0$, depending on $\varepsilon$ and $\|H_0\|_\rho$, and a unique function $H_a\in C([0,T]:\mathcal{X}_\rho)$ that is a fixed point for the operator $\mathcal{T}_a$, from $C([0,T]:\mathcal{X}_\rho)$ to itself, defined by
\begin{multline}\label{eq:definitionofcontractiveoperator}
\mathcal{T}_a[H](t,X)=H_0(X)e^{-\int_0^tA_a(s)[H(s,\cdot)](X)\dd s}\\+\int_0^te^{-\int_s^tA_a(\sigma)[H(\sigma,\cdot)](X)\dd \sigma}B_a(s)[H(s,\cdot)](X)\dd s,
\end{multline}
where $A_a(s):\mathcal{X}_\rho\rightarrow C([0,\infty])$, for $s\in[0,T]$, is given by
\begin{equation}\label{eq:definitionofA}
A_a(s)[H](X)=\frac{2Xe^{s/\rho}}{(X+\varepsilon e^{s/\rho})^{3/2}}\int_0^X\frac{(\phi_{ae^{s/\rho}}\ast H)(Y)}{(Y+\varepsilon e^{s/\rho})^{3/2}}\dd Y-\frac{\rho-1}{\rho},
\end{equation}
and where $B_a(s):\mathcal{X}_\rho\rightarrow\mathcal{X}_\rho$, again for $s\in[0,T]$, is such that for all $\psi\in\mathcal{B}_0$ we have
\begin{multline}\nonumber
\int_{[0,\infty)}\psi(X)B_a(s)[H](X)\dd X=\iint_{\{X>Y>0\}}\frac{e^{s/\rho} H(X)(\phi_{ae^{s/\rho}}\ast H)(Y)}{((X+\varepsilon e^{s/\rho})(Y+\varepsilon e^{s/\rho}))^{3/2}}\\\times\left[(X+Y)\psi(X+Y)+(X-Y)\psi(X-Y)\right]\dd X\dd Y.
\end{multline}
Moreover, for initial data in $\{\|\mu\|_\rho\leq E_0\}\cap\mathcal{X}_\rho$, the constant $T>0$ depends only on $\varepsilon$ and $E_0$.
\end{lm}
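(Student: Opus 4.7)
The plan is to apply the Banach fixed point theorem to the map $\mathcal{T}_a$, which is the Duhamel formula for mild solutions of the integro-differential equation $\partial_t H = -A_a(t)[H]\,H + B_a(t)[H]$. Writing $E_0 = \|H_0\|_\rho \vee 1$, I will work in the closed ball
\[
\mathcal{K}_T = \{H \in C([0,T]:\mathcal{X}_\rho) : \textstyle\sup_{t\in[0,T]}\|H(t)\|_\rho \leq 2E_0\},
\]
equipped with the metric $d(H_1,H_2) = \sup_{t\in[0,T]}\|H_1(t)-H_2(t)\|_\rho$, where $\|\cdot\|_\rho$ is understood in the extended sense for signed measures.

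The key step is to prove two bilinear estimates. Since $a<\varepsilon/2$ and $\phi_{ae^{s/\rho}}$ is supported in $(-ae^{s/\rho},ae^{s/\rho})$, the inequality $Y+\varepsilon e^{s/\rho}\geq\tfrac12(Z+\varepsilon e^{s/\rho})$ holds whenever $|Y-Z|\leq ae^{s/\rho}$, so Fubini yields
\[
\int_{[0,\infty)}\frac{(\phi_{ae^{s/\rho}}\ast H)(Y)}{(Y+\varepsilon e^{s/\rho})^{3/2}}\dd Y \leq 2^{3/2}\int_{[0,\infty)}\frac{H(Z)}{(Z+\varepsilon e^{s/\rho})^{3/2}}\dd Z.
\]
Splitting the right-hand integral at $Z=\varepsilon e^{s/\rho}$ and invoking the elementary consequences $\int_{[0,R]}H(x)\dd x\leq\|H\|_\rho R^{2-\rho}$ and $\int_R^\infty H(x)x^{-1}\dd x\leq\|H\|_\rho R^{1-\rho}$ of \eqref{eq:rhonorm}, combined with the pointwise bound $X(X+\varepsilon e^{s/\rho})^{-3/2}\leq(\varepsilon e^{s/\rho})^{-1/2}$, I obtain uniformly on $s\in[0,T]$ that
\[
0 \leq A_a(s)[H](X)+\tfrac{\rho-1}{\rho} \leq C_1(\varepsilon,T)\|H\|_\rho, \qquad \|B_a(s)[H]\|_\rho\leq C_2(\varepsilon,T)\|H\|_\rho^2.
\]
The bound on $B_a$ is obtained by testing $B_a(s)[H]$ against $\chi_R(x)=1\wedge(R/x)\in\mathcal{B}_0$ and using the elementary inequality $\min(X+Y,R)+\min(X-Y,R)\leq 2X\chi_R(X)$ for $0<Y<X$; nonnegativity of $B_a(s)[H]$ as a Radon measure is immediate from $H\geq0$ and $\phi\geq0$.

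From here the self-mapping and contraction properties of $\mathcal{T}_a$ reduce to Gronwall-type arguments. Since $A_a\geq-\tfrac{\rho-1}{\rho}$, the exponential factor in \eqref{eq:definitionofcontractiveoperator} is bounded by $e^{(\rho-1)T/\rho}$, and so $\|\mathcal{T}_a[H](t)\|_\rho\leq e^{(\rho-1)T/\rho}(E_0+4TC_2 E_0^2)$, which is at most $2E_0$ provided $T$ is sufficiently small, depending only on $\varepsilon$ and $E_0$. Continuity in $t$ of $\mathcal{T}_a[H]$ follows from dominated convergence using the uniform bounds. For contraction, subtracting $\mathcal{T}_a[H_2]$ from $\mathcal{T}_a[H_1]$ splits the difference into three pieces coming from the initial exponential, the Duhamel exponential, and $B_a(s)[H_1]-B_a(s)[H_2]$; each is bilinear in the $H_i$ and contains exactly one factor of $H_1-H_2$, so the bilinear estimates together with $|e^{-u}-e^{-v}|\leq|u-v|$ for $u,v\geq0$ yield $d(\mathcal{T}_a[H_1],\mathcal{T}_a[H_2])\leq TC_3(\varepsilon,E_0)d(H_1,H_2)$, and a further reduction of $T$ produces a strict contraction. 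Banach's theorem then supplies the unique fixed point, and the uniform dependence of $T$ on $\varepsilon$ and $E_0$ alone follows from the uniform nature of the constants $C_1,C_2,C_3$.

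The main technical obstacle is the bilinear estimate on $B_a$: the inequality $\min(X+Y,R)+\min(X-Y,R)\leq 2X\chi_R(X)$ is elementary but case-dependent to verify, and it is essential that it produces a leading factor of $X$ which the regularization $X(X+\varepsilon e^{s/\rho})^{-3/2}\leq(\varepsilon e^{s/\rho})^{-1/2}$ can absorb. Without this regularization, or in the boundary case $\rho=2$, the corresponding estimate on the symmetric collision kernel would demand substantially more delicate analysis.
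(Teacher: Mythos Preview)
Your argument is correct and follows essentially the same route as the paper: both proofs run Banach's fixed point theorem on the ball of radius $2\|H_0\|_\rho$ (or $2E_0$), using the same key concavity inequality $(X+Y)\wedge R+(X-Y)\wedge R\leq 2(X\wedge R)$ to control $\|B_a(s)[H]\|_\rho$ and the same mollifier/regularization bounds to control $A_a$. One small slip: since $A_a\geq-\tfrac{\rho-1}{\rho}$ rather than $A_a\geq0$, the integrated exponents $u,v$ need not be nonnegative, so you should use $|e^{-u}-e^{-v}|\leq e^{(\rho-1)T/\rho}|u-v|$ (as the paper does in \eqref{eq:estimateondifferenceinA}) rather than $|e^{-u}-e^{-v}|\leq|u-v|$; this only changes $C_3$ by a bounded factor and does not affect the argument.
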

\begin{lm}\label{lm:weaksolution}
The fixed point $H_a\in C([0,T]:\mathcal{X}_\rho)$, obtained in Lemma \ref{lm:localexistenceofapproximatemildsolutions}, satisfies \eqref{eq:approximatemildsolution} for all $t\in[0,T]$ and $\psi\in C^1([0,T]:\mathcal{B}_0)$.
\end{lm}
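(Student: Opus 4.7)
I derive \eqref{eq:approximatemildsolution} from the fixed-point identity $H_a = \mathcal{T}_a[H_a]$ via the standard mild-to-weak passage: differentiate the Duhamel representation \eqref{eq:definitionofcontractiveoperator} in $t$ to produce a strong form for $H_a$, pair it against $\psi(t,\cdot) \in \mathcal{B}_0$, integrate in time, and unfold the definitions of $A_a$ and $B_a$ to recognize the resulting expression as the $\mathcal{D}_2^*$-pairing of \eqref{eq:approximatemildsolution}. The regularization $\varepsilon > 0$ and the mollification by $\phi_{ae^{s/\rho}}$ make every bound uniform on $[0,T]$ and every exchange of integrals routine.

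\textbf{Strong form and testing.} Because $\varepsilon > 0$, the factor $((X+\varepsilon e^{s/\rho})(Y+\varepsilon e^{s/\rho}))^{-3/2}$ is smooth and bounded on $[0,\infty]^2$, and $\phi_{ae^{s/\rho}}\ast H_a(s,\cdot)$ is a smooth, compactly supported function of $Y$ for every $s$. Together with the continuity $s \mapsto H_a(s,\cdot) \in \mathcal{X}_\rho$, this gives $s \mapsto A_a(s)[H_a(s,\cdot)] \in C([0,T]:C([0,\infty]))$ and $s \mapsto B_a(s)[H_a(s,\cdot)] \in C([0,T]:\mathcal{X}_\rho)$. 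Differentiating \eqref{eq:definitionofcontractiveoperator} at $H = H_a$ then yields, as an $\mathcal{X}_\rho$-valued identity valid for every $t$,
$$\partial_t H_a(t,X) = -A_a(t)[H_a(t,\cdot)](X)\, H_a(t,X) + B_a(t)[H_a(t,\cdot)](X).$$
For $\psi \in C^1([0,T]:\mathcal{B}_0)$, set $F(t) = \int_{[0,\infty)} \psi(t,X) H_a(t,X)\dd X$; the pairing is well-defined by the inclusion $\mathcal{X}_\rho \subset \mathcal{B}_0'$ from Lemma \ref{lm:inclusionandclosedness}. Substituting the Duhamel expression for $H_a(t,X)$ into $F(t)$ and differentiating under the integral --- justified by the continuities above, the bound $|\psi(t,X)| \leq \|\psi(t,\cdot)\|_{\mathcal{B}_0}/(1+X)$, and the uniform $\mathcal{X}_\rho$-norm of $H_a$ --- gives
$$F'(t) = \int \psi_t\, H_a\dd X - \int \psi\, A_a[H_a]\, H_a\dd X + \int \psi\, B_a[H_a]\dd X.$$
Integrating over $[0,t]$ recovers the left-hand side and first term of the right-hand side of \eqref{eq:approximatemildsolution}.

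\textbf{Algebraic rearrangement.} Unfolding \eqref{eq:definitionofA} and applying Fubini, one obtains
$$-\int \psi\, A_a[H_a]\, H_a\dd X = -2\iint_{\{X>Y>0\}} \frac{e^{s/\rho} X\psi(X) H_a(X)(\phi_{ae^{s/\rho}}\ast H_a)(Y)}{((X+\varepsilon e^{s/\rho})(Y+\varepsilon e^{s/\rho}))^{3/2}}\dd X\dd Y + \tfrac{\rho-1}{\rho}\int\psi\, H_a\dd X.$$
Because on $\{X>Y>0\}$ we have $|X-Y|=X-Y$ and $X\vee Y = X$, the definition \eqref{eq:defcalDstar} of $\mathcal{D}_2^*$ reads $\mathcal{D}_2^*[\psi](X,Y) = (X+Y)\psi(X+Y) + (X-Y)\psi(X-Y) - 2X\psi(X)$. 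Adding $\int \psi\, B_a[H_a]\dd X$, which by definition carries the first two $\psi$-values, the $-2X\psi(X)$ piece exactly recombines into the bracketed integral above, yielding the $\mathcal{D}_2^*$-pairing from \eqref{eq:approximatemildsolution}; the leftover $\tfrac{\rho-1}{\rho}\int\psi\, H_a\dd X$ accounts for the remaining linear term. Inserting this into $F(t)-F(0) = \int_0^t F'(s)\dd s$ gives \eqref{eq:approximatemildsolution}.

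\textbf{Main obstacle.} The delicate step is commuting $\partial_t$ with the pairing against $H_a(t,\cdot)$, which is a priori only a weakly-$*$ continuous measure-valued curve. I would handle this in two stages: first verify the identity for $\psi$ independent of $t$ (where the Duhamel formula can be paired term by term with a fixed $\chi\in\mathcal{B}_0$ and dominated convergence applies directly using the uniform $\mathcal{X}_\rho$-bound and the boundedness of $A_a[H_a]$ and the kernel); then extend to general $\psi \in C^1([0,T]:\mathcal{B}_0)$ by approximating $\psi(t,X)$ by Riemann sums $\sum_i \eta_i(t)\chi_i(X)$ with $\eta_i \in C^1$ and $\chi_i \in \mathcal{B}_0$, and passing to the limit using the uniform continuity of $t \mapsto H_a(t,\cdot)$ in $\mathcal{X}_\rho$ and of $\psi_t$ in $\mathcal{B}_0$. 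The Fubini exchanges in the algebraic step are straightforward but voluminous, benefiting throughout from the regularization $\varepsilon > 0$.
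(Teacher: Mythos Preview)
Your proposal is correct and follows essentially the same route as the paper: differentiate the Duhamel identity $H_a=\mathcal{T}_a[H_a]$ in time, pair with $\psi$, and recombine the $A_a$ and $B_a$ contributions into the $\mathcal{D}_2^*$-pairing plus the $\tfrac{\rho-1}{\rho}$ term. The only difference is the order of operations: the paper first pairs \eqref{eq:definitionofcontractiveoperator} against a test function $\varphi$ to obtain a scalar identity, and then differentiates that scalar function of $t$---recognizing the resulting $A_a$-term by substituting $\varphi=\psi\,A_a(t)[H_a(t,\cdot)]$ back into the paired Duhamel formula---whereas you first differentiate the measure-valued curve and then pair. The paper's order sidesteps precisely the obstacle you flag (justifying $\partial_t H_a$ in $\mathcal{X}_\rho$), since differentiating a scalar Duhamel integral only requires the pointwise-in-$X$ smoothness of $e^{-\int_s^t A_a\,d\sigma}$ and dominated convergence, both immediate from the regularization; your two-stage workaround is also fine but is exactly what the paper's ordering avoids.
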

\begin{proof}[Proof of Lemma \ref{lm:localexistenceofapproximatemildsolutions}]
To check that $\mathcal{T}_a$ is well-defined from $C([0,T]:\mathcal{X}_\rho)$ to itself, it suffices to check that $B_a(s)$ maps $\mathcal{X}_\rho$ into itself. To that end we note that
\begin{multline}\nonumber
\|B_a(s)[H]\|_\rho=\sup_{R>0}R^{\rho-2}\int_{[0,\infty)}\left(1\wedge\tfrac{R}{X}\right)B_a(s)[H](X)\dd X\\
\leq\frac2\varepsilon\int_{(0,\infty)}\left(\sup_{R>0}R^{\rho-2}\int_{(Y,\infty)}\left(1\wedge\tfrac{R}{X}\right)H(X)\dd X\right)\frac{(\phi_{ae^{s/\rho}}\ast H)(Y)}{Y+\varepsilon e^{s/\rho}}\dd Y\\
\leq\frac2\varepsilon\|H\|_\rho\int_{[0,\infty)}\left(\int_{(0,\infty)}\frac{\phi_{ae^{s/\rho}}(Y-Z)}{Y+\varepsilon e^{s/\rho}}\dd Y\right)H(Z)\dd Z,
\end{multline}
so using further that $|Y-Z|\leq ae^{s/\rho}<\frac12\varepsilon e^{s/\rho}$ for all $Y-Z\in{\rm supp}(\phi_{ae^{s/\rho}})$, we have
\begin{multline}\label{eq:estimateonB}
\|B_a(s)[H]\|_\rho\leq\frac2\varepsilon\|H\|_\rho\int_{[0,\infty)}\left(\int_{(0,\infty)}\frac{\phi_{ae^{s/\rho}}(Y-Z)}{Z+(\varepsilon-a)e^{s/\rho}}\dd Y\right)H(Z)\dd Z\\
\leq\frac2\varepsilon\|H\|_\rho\int_{[0,\infty)}\tfrac{1}{\frac12\varepsilon e^{s/\rho}}\left(1\wedge\tfrac{\frac12\varepsilon e^{s/\rho}}{Z}\right)H(Z)\dd Z
\leq\tfrac{2^\rho}{\varepsilon^\rho}e^{-s(\rho-1)/\rho}\|H\|_\rho^2.
\end{multline}
Using this estimate and exploiting the nonnegativity of the first term on the right hand side of \eqref{eq:definitionofA}, we find for any $t\in[0,T]$ that
\begin{multline}\nonumber
\|\mathcal{T}_a[H](t,\cdot)\|_\rho\leq e^{t(\rho-1)/\rho}\|H_0\|_\rho+\int_0^t e^{(t-s)(\rho-1)/\rho}\|B_a(s)[H(s,\cdot)]\|_\rho\dd s\\
\leq e^{t(\rho-1)/\rho}\left(\|H_0\|_\rho+\frac{2^\rho}{\varepsilon^\rho}\int_0^t e^{-2s(\rho-1)/\rho}\dd s\times \sup_{s\in[0,t]}\|H(s,\cdot)\|_\rho^2\right),
\end{multline}
implying that $\mathcal{T}_a$ maps the subset
\begin{equation}\nonumber
\mathcal{S}:=\left\{H\in C([0,T]:\mathcal{X}_\rho)\,\Big|\,\sup_{t\in[0,T]}\|H(t,\cdot)\|_\rho=:\|H\|_{T,\rho}\leq2\|H_0\|_\rho\right\}
\end{equation}
into itself, provided that $T>0$ is small enough. Note that for $\varepsilon>0$ fixed, $T>0$ can be chosen uniformly for $H_0\in\{\|\mu\|_\rho\leq E_0\}\cap\mathcal{X}_\rho$.

To check that the operator is actually strongly contractive on $\mathcal{S}$ for sufficiently small $T>0$, and thereby proving the lemma, we now first observe for $H_1^*,H_2^*\in\mathcal{X}_\rho$ and $\sigma\in[0,T]$ that
\begin{multline}\nonumber
\left\|A_a(\sigma)[H_1^*](\cdot)-A_a(\sigma)[H_2^*](\cdot)\right\|_\infty\\
\leq\sup_{X>0}\frac{2Xe^{\sigma/\rho}}{(X+\varepsilon e^{\sigma/\rho})^{3/2}}\left|\int_0^X\frac{(\phi_{ae^{\sigma/\rho}}\ast(H_1^*-H_2^*))(Y)}{(Y+\varepsilon e^{\sigma/\rho})^{3/2}}\dd Y\right|\\
\leq\frac2{\varepsilon}\int_{[0,\infty)}\left(\int_0^\infty\frac{\phi_{ae^{\sigma/\rho}}(Y-Z)}{Y+\varepsilon e^{\sigma/\rho}}\dd Y\right)|H_1^*-H_2^*|(Z)\dd Z\\
\leq\tfrac{2^\rho}{\varepsilon^\rho}e^{-\sigma(\rho-1)/\rho}\|H_1^*-H_2^*\|_\rho
\end{multline}
hence for $H_1,H_2\in\mathcal{S}$ and $0\leq s\leq t\leq T$ we have
\begin{multline}\label{eq:estimateondifferenceinA}
\left\|e^{-\int_s^tA_a(\sigma)[H_1(\sigma,\cdot)](\cdot)\dd \sigma}-e^{-\int_s^tA_a(\sigma)[H_2(\sigma,\cdot)](\cdot)\dd \sigma}\right\|_\infty\\
\leq e^{(t-s)(\rho-1)/\rho}\int_s^t\left\|A_a(\sigma)[H_1(\sigma,\cdot)](\cdot)-A_a(\sigma)[H_2(\sigma,\cdot)](\cdot)\right\|_\infty\dd \sigma\\
\leq\tfrac{2^\rho}{\varepsilon^\rho}(t-s)e^{(t-s)(\rho-1)/\rho}\|H_1-H_2\|_{T,\rho}.
\end{multline}
Again for $H_1^*,H_2^*\in\mathcal{X}_\rho$ we next note that
\begin{multline}\nonumber
\int_{(0,\infty)}\left(1\wedge\tfrac{R}{X}\right)|B_a(s)[H_1^*]-B_a(s)[H_2^*]|(X)\dd X\\
\leq\iint_{\{X>Y>0\}}\frac{e^{s/\rho}|H_1^*(X)(\phi_{ae^{s/\rho}}\ast H_1^*)(Y)-H_2^*(X)(\phi_{ae^{s/\rho}}\ast H_2^*)(Y)|}{((X+\varepsilon e^{s/\rho})(Y+\varepsilon e^{s/\rho}))^{3/2}}\\\times\left((X+Y)\wedge R+(X-Y)\wedge R\right)\dd X\dd Y,
\end{multline}
so analogous arguments as used to obtain \eqref{eq:estimateonB} give us that
\begin{multline}\label{eq:estimateondifferenceinB}
\|B_a(s)[H_1^*]-B_a(s)[H_2^*]\|_\rho\\
\leq\tfrac{2^\rho}{\varepsilon^\rho}e^{-s(\rho-1)/\rho}\left(\|H_1^*\|_\rho+\|H_2^*\|_\rho\right)\|H_1^*-H_2^*\|_\rho.
\end{multline}
Combining finally \eqref{eq:estimateonB}, \eqref{eq:estimateondifferenceinA} and \eqref{eq:estimateondifferenceinB}, we find for $H_1,H_2\in\mathcal{S}$ the estimate
\begin{equation}\nonumber
\|\mathcal{T}_a[H_1]-\mathcal{T}_a[H_2]\|_{T,\rho}\leq K(T)\|H_1-H_2\|_{T,\rho},
\end{equation}
with
\begin{equation}\nonumber
K(T)=\tfrac{2^\rho}{\varepsilon^2}Te^{T(\rho-1)/\rho}\|H_0\|_\rho\left(1+4\left(\tfrac{2^\rho}{\varepsilon^\rho}T\|H_0\|_\rho+e^{-T(\rho-1)/\rho}\right)\right)\xrightarrow{T\rightarrow0}0,
\end{equation}
and noting again that for $\varepsilon>0$ fixed we can again choose $T>0$ uniformly for $H_0\in\{\|\mu\|_\rho\leq E_0\}\cap\mathcal{X}_\rho$, the proof is complete.
\end{proof}
\begin{proof}[Proof of Lemma \ref{lm:weaksolution}]
By construction there holds $H_a=\mathcal{T}_a[H_a]$, so multiplying this identity by $\varphi\in C^1([0,T]:\mathcal{B}_0)$ and integrating with respect to $X$, we obtain for all $t\in[0,T]$ that
\begin{multline}\label{eq:mildsolution_1}
\int_{[0,\infty)}\varphi(t,X)H_a(t,X)\dd X=\int_{[0,\infty)}\varphi(t,X)H_0(X)e^{-\int_0^tA_a(s)[H_a(s,\cdot)](X)\dd s}\dd X\\+\int_0^t\int_{[0,\infty)}\varphi(t,X)e^{-\int_s^tA_a(\sigma)[H_a(\sigma,\cdot)](X)\dd \sigma}B_a(s)[H_a(s,\cdot)](X)\dd X\dd s.
\end{multline}
If now $\psi\in C^1([0,T]:\mathcal{B}_0)$ is arbitrary, then taking the time derivative of \eqref{eq:mildsolution_1} with $\varphi$ replaced by $\psi$, we get
\begin{multline}\label{eq:mildsolution_2}
\partial_t\left[\int_{[0,\infty)}\psi(t,X)H_a(t,X)\dd X\right]=\int_{[0,\infty)}\psi(t,X)B_a(t)[H_a(t,\cdot)](X)\dd X\\
+\int_{[0,\infty)}\big(\psi_t(t,X)-\psi(t,X)A_a(t)[H_a(t,\cdot)](X)\big)H_a(t,X)\dd X,
\end{multline}
where the last term on the right hand side is obtained by combining two terms, using the identity obtained from \eqref{eq:mildsolution_1} with $\varphi(t,X)=\psi(t,X)A_a(t)[H_a(t,\cdot)](X)$. Integrating \eqref{eq:mildsolution_2}, we then obtain \eqref{eq:approximatemildsolution}.
\end{proof}
We are now able to show local in time existence of solutions to \eqref{eq:evolutionofPsia} by construction, as well as an estimate of the norm $\|\cdot\|_\rho$ for these solutions.
\begin{pr}\label{pr:globalnormbound}
Under Assumption \ref{assumption} and supposing for $\Psi_0\in\mathcal{X}_\rho$ that $T>0$ and $H_a\in C([0,T]:\mathcal{X}_\rho)$ are as obtained in Lemma \ref{lm:localexistenceofapproximatemildsolutions} with $H_0=\Psi_0$, then the function $\Psi_a\in C([0,T]:\mathcal{X}_\rho)$, defined via
\begin{equation}\label{eq:changeofvariables}
\Psi_a(t,x)=H_a(t,X)\quad\text{and}\quad x=Xe^{-t/\rho},
\end{equation}
satisfies \eqref{eq:evolutionofPsia} for all $\vartheta\in C^1([0,T]:\mathcal{B}_1)$ and all $t\in[0,T]$, and for all $t\in[0,T]$ there holds $\|\Psi_a(t,\cdot)\|_\rho\leq\|\Psi_0\|_\rho$.
\end{pr}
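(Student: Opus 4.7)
The first assertion follows by transporting \eqref{eq:approximatemildsolution}, which holds by Lemma~\ref{lm:weaksolution}, through the change of variables \eqref{eq:firstchangeofvariables}. Given $\vartheta \in C^1([0,T]:\mathcal{B}_1)$, I set $\psi(s,X) := e^{-s/\rho}\vartheta(s,Xe^{-s/\rho})$; the identity
\begin{equation*}
(1+X)\psi(s,X) = (1+x)\vartheta(s,x) + (e^{-s/\rho} - 1)\vartheta(s,x), \qquad x = Xe^{-s/\rho},
\end{equation*}
shows $\psi \in C^1([0,T]:\mathcal{B}_0)$, so $\psi$ is an admissible test function for \eqref{eq:approximatemildsolution}. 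Three identities then drive the computation: (i) $\dd X = e^{s/\rho}\dd x$ converts each integral against $H_a$ into one against $\Psi_a$; (ii) since $Z\psi(s,Z) = z\vartheta(s,z)$ with $z = Ze^{-s/\rho}$, we have $\mathcal{D}_2^*[\psi(s,\cdot)](X,Y) = \mathcal{D}_2^*[\vartheta(s,\cdot)](x,y)$; and (iii) the mollifier rescales as $(\phi_{ae^{s/\rho}} \ast H_a(s,\cdot))(Y) = (\phi_a \ast \Psi_a(s,\cdot))(y)$. A chain-rule computation gives
\begin{equation*}
\psi_s + \tfrac{\rho-1}{\rho}\psi = e^{-s/\rho}\bigl[\vartheta_s - \tfrac{1}{\rho}(x\vartheta_x + (2-\rho)\vartheta)\bigr],
\end{equation*}
and combining the above with the scaling $(X+\varepsilon e^{s/\rho})(Y+\varepsilon e^{s/\rho}) = e^{2s/\rho}(x+\varepsilon)(y+\varepsilon)$ turns \eqref{eq:approximatemildsolution} into \eqref{eq:evolutionofPsia}.

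For the norm estimate it is cleanest to work in the $H_a$-variables. A direct rescaling gives, for every $R > 0$ and $R' := Re^{t/\rho}$,
\begin{equation*}
R^{\rho-2}\!\!\int\!\bigl(1 \wedge \tfrac{R}{x}\bigr)\Psi_a(t,x)\,\dd x = e^{-(\rho-1)t/\rho}(R')^{\rho-2}\!\!\int\!\bigl(1 \wedge \tfrac{R'}{X}\bigr)H_a(t,X)\,\dd X,
\end{equation*}
and taking the supremum over $R > 0$ yields the identity $\|\Psi_a(t,\cdot)\|_\rho = e^{-(\rho-1)t/\rho}\|H_a(t,\cdot)\|_\rho$. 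It therefore suffices to prove $\|H_a(t,\cdot)\|_\rho \leq e^{(\rho-1)t/\rho}\|H_0\|_\rho$ on $[0,T]$.

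To this end, for fixed $R > 0$ I test \eqref{eq:approximatemildsolution} against the time-independent function $\vartheta_R(X) := 1 \wedge R/X$. Although not $C^1$ at $X = R$, this function lies in $\mathcal{B}_0$: we have $(1+X)\vartheta_R(X) = 1 + X$ on $[0,R]$ and $R + R/X$ on $(R,\infty]$, joined continuously at $X = R$ with limit $R$ as $X \to \infty$. The key observation is that $\varphi_R(Z) := Z\vartheta_R(Z) = \min(Z,R)$ is concave, so the standard second-difference inequality for concave functions yields
\begin{equation*}
\mathcal{D}_2^*[\vartheta_R](X,Y) = \varphi_R(X+Y) + \varphi_R(X-Y) - 2\varphi_R(X) \leq 0 \quad\text{on } \{X > Y > 0\}.
\end{equation*}
Hence the double-integral term in \eqref{eq:approximatemildsolution} is nonpositive, while the remaining $\tfrac{\rho-1}{\rho}\psi$ contribution gives
\begin{equation*}
\int\!\vartheta_R H_a(t,X)\,\dd X \leq \int\!\vartheta_R H_0(X)\,\dd X + \tfrac{\rho-1}{\rho}\int_0^t\!\int\!\vartheta_R H_a(s,X)\,\dd X\,\dd s.
\end{equation*}
Gronwall gives $\int\vartheta_R H_a(t)\,\dd X \leq e^{(\rho-1)t/\rho}\int\vartheta_R H_0\,\dd X$; multiplying by $R^{\rho-2}$ and taking $\sup_{R > 0}$ produces the bound on $\|H_a(t,\cdot)\|_\rho$, which combined with the rescaling identity of the previous paragraph yields $\|\Psi_a(t,\cdot)\|_\rho \leq \|\Psi_0\|_\rho$. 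The main insight is that choosing $\varphi_R$ concave forces the sign on the quadratic term via $\mathcal{D}_2^*[\vartheta_R] \leq 0$; the only subtle verification is that $\vartheta_R$ lies in $\mathcal{B}_0$ (not $\mathcal{B}_1$), so no mollification of the corner at $X = R$ is required.
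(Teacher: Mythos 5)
Your proof is correct and follows essentially the same route as the paper: transport \eqref{eq:approximatemildsolution} through the change of variables to obtain \eqref{eq:evolutionofPsia}, observe that $\|\Psi_a(t,\cdot)\|_\rho = e^{-t(\rho-1)/\rho}\|H_a(t,\cdot)\|_\rho$, test with $\vartheta_R(X)=1\wedge\frac{R}{X}\in\mathcal{B}_0$, use concavity of $X\mapsto X\vartheta_R(X)$ to drop the nonpositive double-integral term, and apply Gronwall. You supply more detail on the change-of-variables verification and on why $\vartheta_R$ is admissible, but the substance is identical to the paper's argument.
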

\begin{proof}
To check that $\Psi_a$ satisfies \eqref{eq:evolutionofPsia} is an elementary computation [use \eqref{eq:firstchangeofvariables}], so we restrict ourselves to proving the estimate of the norm. We observe that
\begin{equation}\nonumber
\|\Psi_a(t,\cdot)\|_\rho=e^{-t(\rho-1)/\rho}\|H_a(t,\cdot)\|_\rho,
\end{equation}
so it suffices to check that
\begin{equation}\label{eq:normestimate}
\|H_a(t,\cdot)\|_\rho\leq e^{t(\rho-1)/\rho}\|H_0\|_\rho\text{ for all }t\in[0,T].
\end{equation}
By Lemma \ref{lm:weaksolution}, now $H_a$ satisfies \eqref{eq:approximatemildsolution} for all $t\in[0,T]$ and $\psi\in C^1([0,T]:\mathcal{B}_0)$, and we note for any $R>0$ that $\psi(X)=1\wedge\frac{R}{X}$ satisfies $\psi\in\mathcal{B}_0$. Moreover, the mapping $X\mapsto X\psi(X)$ is concave, so $\mathcal{D}_2^*[\psi]\leq0$, and there thus holds
\begin{multline}\nonumber
\int_{[0,\infty)}\left(1\wedge\tfrac{R}{X}\right)H_a(t,X)\dd X
\leq\int_{[0,\infty)}\left(1\wedge\tfrac{R}{X}\right)H_0(X)\dd X\\
+\frac{\rho-1}{\rho}\int_0^t\int_{[0,\infty)}\left(1\wedge\tfrac{R}{X}\right)H_a(s,X)\dd X\dd s.
\end{multline}
By Gronwall's lemma and multiplying by $R^{\rho-2}$ we then get
\begin{equation}\nonumber
R^{\rho-2}\int_{[0,\infty)}\left(1\wedge\tfrac{R}{X}\right)H_a(t,X)\dd X
\leq e^{t(\rho-1)/\rho}R^{\rho-2}\int_{[0,\infty)}\left(1\wedge\tfrac{R}{X}\right)H_0(X)\dd X,
\end{equation}
and \eqref{eq:normestimate} follows by taking the supremum over all $R>0$.
\end{proof}
We are now able to construct a family of operators on the unit ball of $\mathcal{X}_\rho$, which will turn out to be the semigroup required in Proposition \ref{pr:wellposednessforregularizedevolution}.
\begin{df}\label{df:definitionofsemigroup}
Under Assumption \ref{assumption}, we define the family $(S_a(t))_{t\geq0}$ of operators from the unit ball in $\mathcal{X}_\rho$ into itself as follows. Let $T>0$ be as obtained in Lemma \ref{lm:localexistenceofapproximatemildsolutions} such that for all $\Psi_0\in\{\|\mu\|_\rho\leq1\}\cap\mathcal{X}_\rho$ there exists a unique function $H_a\in C([0,T]:\mathcal{X}_\rho)$ that is a fixed point for the operator $\mathcal{T}_a$, given by \eqref{eq:definitionofcontractiveoperator} with $H_0=\Psi_0$. For $t\geq0$ we then set $S_a(t)\Psi_0=\Psi_a(t,\cdot)$ for $t\in[0,T]$, where $\Psi_a\in C([0,T]:\mathcal{X}_\rho)$ is defined via \eqref{eq:changeofvariables}, and then
\begin{equation}\label{eq:defofsemigroup}
S_a(t)\Psi_0= S_a(t-nT)\left(S_a(T)\right)^n\Psi_0\text{ for $t\in(nT,(n+1)T]$, $n\in\mathbb{N}$,}
\end{equation}
which is possible since $S(T)\Psi_*$ is in the unit ball for all $\Psi_*\in\{\|\mu\|_\rho\leq1\}\cap\mathcal{X}_\rho$ (cf.~Proposition \ref{pr:globalnormbound}).
\end{df}
\begin{pr}
Under Assumption \ref{assumption}, the family of operators $(S_a(t))_{t\geq0}$, as defined in Definition \ref{df:definitionofsemigroup}, has the semigroup property, i.e.
\begin{equation}\label{eq:semigroupproperty}
S_a(t_1+t_2)=S_a(t_1)S_a(t_2)\text{ for all }t_1,t_2\geq0.
\end{equation}
Moreover, given $\Psi_0\in\{\|\mu\|_\rho\leq1\}\cap\mathcal{X}_\rho$, then the function defined as $\Psi_a(t,x)=S_a(t)\Psi_0(x)$ satisfies \eqref{eq:evolutionofPsia} for all $t\geq0$ and all $\vartheta\in C^1([0,\infty):\mathcal{B}_1)$.
\end{pr}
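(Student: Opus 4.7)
The plan is to derive the semigroup property from uniqueness of solutions on short time intervals $[0,T]$, after which extension to all of $[0,\infty)$ is routine. The key structural observation is that equation \eqref{eq:evolutionofPsia} is \emph{autonomous}: apart from the unknown $\Psi_a$ and the test function, the integrand on the right-hand side has no explicit dependence on $s$ (note in particular that the mollifier $\phi_a$ is time-independent, unlike the factor $\phi_{ae^{s/\rho}}$ that only appears after the change of variables \eqref{eq:firstchangeofvariables}). Consequently, if $\Psi_a$ solves \eqref{eq:evolutionofPsia} on $[0,T]$ with initial datum $\Psi_0$ and $t_2\in[0,T]$, then $\tilde\Psi(s,x)=\Psi_a(s+t_2,x)$ satisfies \eqref{eq:evolutionofPsia} on $[0,T-t_2]$ with initial datum $\Psi_a(t_2,\cdot)=S_a(t_2)\Psi_0$, by a direct change of variable in the time integral.

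To conclude the short-time semigroup identity from this I need uniqueness of weak solutions to \eqref{eq:evolutionofPsia} in $C([0,T]:\mathcal{X}_\rho)$. Pulling the problem back through \eqref{eq:firstchangeofvariables}, any such $\Psi_a$ corresponds to $H_a(s,X)=\Psi_a(s,Xe^{-s/\rho})$ satisfying \eqref{eq:approximatemildsolution} for all $\psi\in C^1([0,T]:\mathcal{B}_0)$. I would then verify the equivalence of this weak formulation with the mild form $H_a=\mathcal{T}_a[H_a]$: the direction ``mild $\Rightarrow$ weak'' is already Lemma \ref{lm:weaksolution}, while the converse follows from a Duhamel-type computation obtained by testing \eqref{eq:approximatemildsolution} with $\psi(s,X)=\varphi(X)\exp\bigl(\int_s^tA_a(\sigma)[H_a(\sigma,\cdot)](X)\,\mathrm{d}\sigma\bigr)$ for arbitrary $\varphi\in\mathcal{B}_0$; this test function lies in the admissible class because $A_a(\sigma)[H_a](\cdot)$ is bounded uniformly in $X$, by essentially the same calculation as in \eqref{eq:estimateonB}. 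Once weak solutions are identified with fixed points of $\mathcal{T}_a$, Lemma \ref{lm:localexistenceofapproximatemildsolutions} delivers uniqueness, and combined with the shift invariance above this yields $S_a(t_1+t_2)\Psi_0=S_a(t_1)S_a(t_2)\Psi_0$ for all $t_1,t_2\geq 0$ with $t_1+t_2\leq T$ and $\Psi_0$ in the unit ball of $\mathcal{X}_\rho$.

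To propagate the identity to all $t_1,t_2\geq 0$ I would argue by induction on $\lfloor(t_1+t_2)/T\rfloor$, combining \eqref{eq:defofsemigroup} with the short-time statement at each step; Proposition \ref{pr:globalnormbound} ensures that $(S_a(T))^n\Psi_0$ remains in the unit ball, so iteration is legitimate. For the \emph{moreover} part, on each interval $[nT,(n+1)T]$ the map $s\mapsto S_a(s)\Psi_0$ coincides by construction with the $\Psi_a$ of Proposition \ref{pr:globalnormbound} associated to initial datum $(S_a(T))^n\Psi_0$, and hence satisfies \eqref{eq:evolutionofPsia} on that interval. Splitting the time integrals across the grid points $s=T,2T,\ldots$ and summing the local identities, while invoking the continuity $s\mapsto\Psi_a(s,\cdot)\in\mathcal{X}_\rho$ to match boundary values, produces \eqref{eq:evolutionofPsia} on any $[0,t]$. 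I expect the most delicate point to be the verification that the exponential test function used in the weak-to-mild step indeed lies in $C^1([0,T]:\mathcal{B}_0)$; this requires controlling the $s$-continuity of $A_a(\sigma)[H_a(\sigma,\cdot)](X)$ uniformly in $X\in[0,\infty]$, which can be handled either by a density argument on $\varphi$ or by a brief mollification in time of $H_a$, using the estimates already established in the proof of Lemma \ref{lm:localexistenceofapproximatemildsolutions}.
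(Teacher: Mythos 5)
Your approach is sound and genuinely different from the paper's, but you have a sign error and take on an extra burden of proof.

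\textbf{Comparison with the paper.} The paper never leaves the fixed-point (mild) level. It observes the identities $A_a(s)[H_a(s,\cdot)](X)=A_a(0)[\Psi_a(s,\cdot)](Xe^{-s/\rho})$ and the analogue for $B_a$, plugs these into the explicit formula $H_a=\mathcal{T}_a[H_a]$, and splits the resulting exponential integrals at $s=t_2$; this exhibits $H_*(s,xe^{s/\rho})=\Psi_a(t_2+s,x)$ directly as a fixed point of $\mathcal{T}_a$ with initial datum $\Psi_a(t_2,\cdot)$, after which uniqueness from Lemma~\ref{lm:localexistenceofapproximatemildsolutions} finishes the short-time case. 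You instead observe the autonomy of the weak formulation \eqref{eq:evolutionofPsia} and shift the time variable there, which is cleaner to state but forces you to prove that weak solutions in $C([0,T]:\mathcal{X}_\rho)$ are automatically fixed points of $\mathcal{T}_a$ --- a ``weak $\Rightarrow$ mild'' implication the paper deliberately sidesteps. The paper only ever needs ``mild $\Rightarrow$ weak'' (Lemma~\ref{lm:weaksolution}). Your route is conceptually transparent but requires the extra verification that the Duhamel test function lies in $C^1([0,T]:\mathcal{B}_1)$, which you correctly flag as the delicate point. In that verification, bear in mind that you must also check the $s$-dependence is $C^1$ uniformly on $[0,\infty]$, for which the Lipschitz estimate on $A_a$ from the proof of Lemma~\ref{lm:localexistenceofapproximatemildsolutions} combined with the norm-continuity of $t\mapsto H_a(t,\cdot)$ should suffice.

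\textbf{Sign error.} Your test function should carry a \emph{negative} exponent: $\psi(s,X)=\varphi(X)\exp\bigl(-\int_s^tA_a(\sigma)[H_a(\sigma,\cdot)](X)\,\mathrm{d}\sigma\bigr)$. With this choice $\psi_s=\psi A_a$, and in the weak identity (which after the $A_a/B_a$ decomposition of \eqref{eq:approximatemildsolution} reads $\partial_t\int\psi H_a=\int(\psi_s-\psi A_a)H_a+\int\psi B_a$, cf.~\eqref{eq:mildsolution_2}) the transport term cancels, leaving the Duhamel formula. With your positive exponent one instead gets $\psi_s-\psi A_a=-2\psi A_a$, and the computation does not close.

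\textbf{Remainder.} The induction on $\lfloor(t_1+t_2)/T\rfloor$ and the argument for the ``moreover'' part --- splitting the time integral at the grid points $nT$ and invoking Proposition~\ref{pr:globalnormbound} on each subinterval --- match what is needed and are consistent with \eqref{eq:defofsemigroup} and the local semigroup identity, so that part is fine.
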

\begin{proof}
For any $\Psi_0\in\{\|\mu\|_\rho\leq1\}\cap\mathcal{X}_\rho$, let $H_a\in C([0,T]:\mathcal{X}_\rho)$ be the unique fixed point to $\mathcal{T}_a$ with $H_0=\Psi_0$. Using then \eqref{eq:changeofvariables}, we find by careful computation that
\begin{multline}\label{eq:Arewrite}
A_a(s)[H_a(s,\cdot)](X)
=\frac{2Xe^{s/\rho}}{(X+\varepsilon e^{s/\rho})^{3/2}}\int_0^X\frac{(\phi_{ae^{s/\rho}}\ast H_a(s,\cdot))(Y)}{(Y+\varepsilon e^{s/\rho})^{3/2}}\dd Y-\frac{\rho-1}{\rho}\\
=\frac{2Xe^{-s/\rho}}{(Xe^{-s/\rho}+\varepsilon)^{3/2}}\int_0^{Xe^{-s/\rho}}\frac{(\phi_a\ast \Psi_a(s,\cdot))(y)}{(y+\varepsilon)^{3/2}}\dd y-\frac{\rho-1}{\rho}\\
=A_a(0)[\Psi_a(s,\cdot)](Xe^{-s/\rho}),
\end{multline}
and similarly we can check that
\begin{equation}\label{eq:Brewrite}
B_a(s)[H_a(s,\cdot)](X)=B_a(0)[\Psi_a(s,\cdot)](Xe^{-s/\rho}).
\end{equation}
Using now \eqref{eq:Arewrite} and \eqref{eq:Brewrite}, it follows by the definition of $H_a$ as the fixed point of $\mathcal{T}_a$ that for $t\in[0,T]$ and $X\geq0$ there holds
\begin{multline}\nonumber
H_a(t,X)=\Psi_0(X)e^{-\int_0^tA_a(0)[\Psi_a(s,\cdot)](Xe^{-s/\rho})\dd s}\\+\int_0^te^{-\int_s^tA_a(0)[\Psi_a(\sigma,\cdot)](Xe^{-\sigma/\rho})\dd \sigma}B_a(0)[\Psi_a(s,\cdot)](Xe^{-s/\rho})\dd s,
\end{multline}
hence by again \eqref{eq:changeofvariables} for $t\in[0,T]$ and $x\geq0$ there holds
\begin{multline}\nonumber
\Psi_a(t,x)=\Psi_0(xe^{t/\rho})e^{-\int_0^tA_a(0)[\Psi_a(s,\cdot)](xe^{(t-s)/\rho})\dd s}\\+\int_0^te^{-\int_s^tA_a(0)[\Psi_a(\sigma,\cdot)](xe^{(t-\sigma)/\rho})\dd \sigma}B_a(0)[\Psi_a(s,\cdot)](xe^{(t-s)/\rho})\dd s.
\end{multline}
For any $t_1,t_2\geq0$ with $t_1+t_2\leq T$ we then use the following decomposition
\begin{multline}\nonumber
\Psi_0(\cdot)e^{\int_0^{t_1+t_2}[\cdots]\dd s}+\int_0^{t_1+t_2}e^{\int_s^{t_1+t_2}[\cdots]\dd \sigma}[\cdots]\dd s\\
=\left(\Psi_0(\cdot)e^{\int_0^{t_2}[\cdots]\dd s}+\int_0^{t_2}e^{\int_s^{t_2}[\cdots]\dd \sigma}[\cdots]\dd s\right)e^{\int_{t_2}^{t_1+t_2}[\cdots]\dd s}\\
+\int_{t_2}^{t_1+t_2}e^{\int_s^{t_1+t_2}[\cdots]\dd \sigma}[\cdots]\dd s
\end{multline}
and after performing the changes of variables $s\rightarrow t_2+s$ and $\sigma\rightarrow t_2+\sigma$ in the integrals on the right hand side we obtain
\begin{multline}\nonumber
\Psi_a(t_1+t_2,x)=\Psi_a(t_2,xe^{t_1/\rho})e^{-\int_0^{t_1}A_a(0)[\Psi_a(t_2+s,\cdot)](xe^{(t_1-s)/\rho})\dd s}\\
+\int_0^{t_1}e^{-\int_{s}^{t_1}A_a(0)[\Psi_a(t_2+\sigma,\cdot)](xe^{(t_1-\sigma)/\rho})\dd \sigma}B_a(0)[\Psi_a(t_2+s,\cdot)](xe^{(t_1-s)/\rho})\dd s.
\end{multline}
We now see that $H_*(s,xe^{s/\rho})=\Psi_*(s,x):=\Psi_a(t_2+s,x)$ is a fixed point for the operator $\mathcal{T}_a$ with $H_0=\Psi_a(t_2,\cdot)$, and by the short time uniqueness of fixed points, obtained in Lemma \ref{lm:localexistenceofapproximatemildsolutions}, we thus find that
\begin{equation}\nonumber
S_a(t_1+t_2)\Psi_0=\Psi_a(t_1+t_2,\cdot)= S_a(t_1)\Psi_a(t_2,\cdot)= S_a(t_1)S_a(t_2)\Psi_0,
\end{equation}
which proves the semigroup property for $t_1,t_2\geq0$ with $t_1+t_2\leq T$.

Next we use the local semigroup property as derived above to observe for $t_1,t_2\in[0,T]$ with $t_1+t_2>T$ that
\begin{equation}\nonumber
S_a(t_1+t_2-T)S_a(T)=S_a(t_1+t_2-T)S_a(T-t_2)S_a(t_2)=S_a(t_1)S_a(t_2),
\end{equation}
so since the left hand side equals $S_a(t_1+t_2)$ by definition [cf.~\eqref{eq:defofsemigroup}], we have
\begin{equation}\label{eq:localsemigroupproperty}
S_a(t_1+t_2)=S_a(t_1)S_a(t_2)=S_a(t_2)S_a(t_1)\text{ for all }t_1,t_2\in[0,T].
\end{equation}
Using lastly \eqref{eq:defofsemigroup} and \eqref{eq:localsemigroupproperty} for arbitrary $t_1,t_2\geq0$, and writing $n_i$ for the integer part of $\frac{t_i}{T}$, we find that
\begin{multline}\label{eq:number}
S_a(t_1)S_a(t_2)
=S_a(t_1-n_1T)(S_a(T))^{n_1}S_a(t_2-n_2T)(S_a(T))^{n_2}\\
=S_a(t_1-n_1T)S_a(t_2-n_2T)(S_a(T))^{n_1+n_2}\\
=S_a(t_1+t_2-(n_1+n_2)T)(S_a(T))^{n_1+n_2}.
\end{multline}
If $t_1+t_2<(n_1+n_2+1)T$, then the right hand side of \eqref{eq:number} equals $S_a(t_1+t_2)$ by definition. On the other hand, if $t_1+t_2\geq(n_1+n_2+1)T$, then by again \eqref{eq:localsemigroupproperty} we have
\begin{multline}\nonumber
S_a(t_1+t_2-(n_1+n_2)T)(S_a(T))^{n_1+n_2}\\
=S_a(t_1+t_2-(n_1+n_2+1)T)(S_a(T))^{n_1+n_2+1},
\end{multline}
and here the right hand side equals $S_a(t_1+t_2)$ by definition. We therefore conclude that \eqref{eq:semigroupproperty} holds.
\end{proof}

\subsection{Two useful lemmas}
In this subsection we give two lemmas that will be useful for obtaining the lower bound in our proof of existence of a set $\mathcal{Y}_\rho=\mathcal{Y}_\rho(R_0)$ that is invariant under the previously defined evolution. These results will also be used in the final section of this paper.
\begin{lm}\label{lm:fractionalheatequation}
For any $\alpha\in(0,2)$ the fundamental solution $u^\alpha$ to the integro-differential equation
\begin{equation}\label{eq:ide}
u_t(t,x)=\int_{\mathbb{R}_+}y^{-\alpha-1}\Delta_y^2[u(t,\cdot)](x)\dd y,
\end{equation}
i.e.~the solution to \eqref{eq:ide} with initial datum $u(0,\cdot)=\delta_0$, is given by $u^\alpha(t,x)=t^{-1/\alpha}v_\alpha(xt^{-1/\alpha})$, where $v_\alpha\in C^\infty(\mathbb{R})$ is the probability density function that has characteristic function $\exp(-c_\alpha|k|^\alpha)$, with $c_\alpha=-2\Gamma(-\alpha)\cos(\frac{\alpha\pi}{2})$ if $\alpha\neq1$ and $c_1=\pi$. In particular, $v_\alpha$ is positive, symmetric, nonincreasing on $\mathbb{R}_+$, and it satisfies $\lim_{|z|\rightarrow\infty}|z|^{\alpha+1}v_\alpha(z)=1$.
\end{lm}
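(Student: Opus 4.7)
The plan is to diagonalize the equation via the Fourier transform in $x$, since the operator on the right-hand side of \eqref{eq:ide} is a Fourier multiplier: indeed, using that $\mathcal{F}[\Delta_y^2 u(t,\cdot)](k)=(e^{iky}+e^{-iky}-2)\hat{u}(t,k)=-2(1-\cos(ky))\hat{u}(t,k)$ and interchanging the $y$-integral with $\mathcal{F}$, the equation becomes
\begin{equation}\nonumber
\hat{u}_t(t,k)=-2\hat{u}(t,k)\int_0^\infty y^{-\alpha-1}(1-\cos(ky))\,\mathrm{d}y.
\end{equation}
After the substitution $z=|k|y$, I would reduce the multiplier to $-|k|^\alpha\cdot 2I_\alpha$, where $I_\alpha:=\int_0^\infty z^{-\alpha-1}(1-\cos z)\,\mathrm{d}z$, and the integral converges for $\alpha\in(0,2)$ since $1-\cos z\sim z^2/2$ near $0$.

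The first main technical step is to identify $2I_\alpha=c_\alpha$. For $\alpha\in(0,1)$, one integration by parts (with $v=-z^{-\alpha}/\alpha$, $u=1-\cos z$; the boundary terms vanish since $\alpha<2$) reduces the computation to $\int_0^\infty z^{-\alpha}\sin z\,\mathrm{d}z=\Gamma(1-\alpha)\cos(\alpha\pi/2)$, a standard Mellin transform, yielding $I_\alpha=\Gamma(1-\alpha)\cos(\alpha\pi/2)/\alpha=-\Gamma(-\alpha)\cos(\alpha\pi/2)$. The identity extends to $\alpha\in(1,2)$ by a second integration by parts (or by analytic continuation in $\alpha$), and $\lim_{\alpha\to 1}2I_\alpha=\pi$ gives the value $c_1=\pi$. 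Hence $\hat{u}_t(t,k)=-c_\alpha|k|^\alpha\hat{u}(t,k)$, so starting from $\hat{u}(0,k)=1$ we obtain $\hat{u}(t,k)=e^{-c_\alpha|k|^\alpha t}$. Inverting and performing the scaling $k\mapsto k\,t^{-1/\alpha}$ gives the representation $u^\alpha(t,x)=t^{-1/\alpha}v_\alpha(xt^{-1/\alpha})$, where $v_\alpha$ is the inverse Fourier transform of $\exp(-c_\alpha|k|^\alpha)$.

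It remains to read off the stated qualitative properties of $v_\alpha$. Smoothness follows from the fact that $e^{-c_\alpha|k|^\alpha}$ decays faster than any polynomial, so $v_\alpha\in C^\infty(\mathbb{R})$. Symmetry and reality of $v_\alpha$ follow from the evenness of $|k|^\alpha$. Positivity and unimodality (hence monotonicity on $\mathbb{R}_+$) are the classical facts that every symmetric $\alpha$-stable density is strictly positive and unimodal, which I would cite from the standard literature on stable laws (e.g.~Yamazato's unimodality theorem). Finally, the tail $|z|^{\alpha+1}v_\alpha(z)\to 1$ would be extracted from the classical asymptotic $\tilde v_\alpha(y)\sim\pi^{-1}\Gamma(\alpha+1)\sin(\alpha\pi/2)\,|y|^{-\alpha-1}$ valid for the density $\tilde v_\alpha$ with characteristic function $e^{-|k|^\alpha}$, combined with the scaling $v_\alpha(z)=c_\alpha^{-1/\alpha}\tilde v_\alpha(c_\alpha^{-1/\alpha}z)$. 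The resulting tail constant equals $\pi^{-1}c_\alpha\Gamma(\alpha+1)\sin(\alpha\pi/2)$, which by the identities $c_\alpha=2\alpha^{-1}\Gamma(1-\alpha)\cos(\alpha\pi/2)$, $\Gamma(\alpha+1)=\alpha\Gamma(\alpha)$, $2\sin(\alpha\pi/2)\cos(\alpha\pi/2)=\sin(\alpha\pi)$, and the reflection formula $\Gamma(\alpha)\Gamma(1-\alpha)=\pi/\sin(\alpha\pi)$ collapses to $1$; the case $\alpha=1$ is checked directly, since then $v_1(z)=\pi/(\pi^2+z^2)$.

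The main obstacle is really just bookkeeping around the special value $\alpha=1$, where both $\Gamma(-\alpha)$ and $\cos(\alpha\pi/2)$ develop a zero/pole structure and the formula $c_\alpha=-2\Gamma(-\alpha)\cos(\alpha\pi/2)$ must be interpreted as a limit; this is handled cleanly by direct evaluation $2I_1=\pi$ and by checking the tail constant for the resulting Cauchy density separately. The qualitative properties (positivity, unimodality, tail asymptotics) are classical and I would simply invoke the corresponding references rather than reprove them here.
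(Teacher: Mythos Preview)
Your proof is correct and follows essentially the same route as the paper: Fourier-transform the equation to obtain $\hat{u}_t=-c_\alpha|k|^\alpha\hat{u}$, identify $v_\alpha$ as a symmetric $\alpha$-stable density, and then invoke the classical theory of stable laws for positivity, unimodality and the tail asymptotics. You in fact supply more detail than the paper does---you compute $c_\alpha$ explicitly and verify that the tail constant collapses to $1$ via the reflection formula, whereas the paper simply states $c_\alpha$ and appeals to ``a standard contour deformation argument'' for the asymptotics and to Lukacs for unimodality.
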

\begin{proof}
Taking the Fourier transform of \eqref{eq:ide} gives us
\begin{equation}\nonumber
\hat{u}_t(t,k)=-c_\alpha|k|^\alpha\hat{u}(t,k),
\end{equation}
hence $u^\alpha$ is the inverse Fourier transform of $\exp(-c_\alpha|k|^\alpha t)$:
\begin{multline}\nonumber
u^\alpha(t,x)=\frac1{2\pi}\int_\mathbb{R}e^{ikx}e^{-c_\alpha|k|^\alpha t}\dd k=\frac1{t^{1/\alpha}}v_\alpha\left(\frac{x}{t^{1/\alpha}}\right)\\
\text{with }v_\alpha(z)=\frac1{2\pi}\int_\mathbb{R}e^{ikz}e^{-c_\alpha|k|^\alpha}\dd k
\end{multline}
Smoothness and symmetry of $v_\alpha$ are immediate, while for the remaining properties of $v_\alpha$ we note that $\exp(-c_\alpha|k|^\alpha)$ is the characteristic function of a symmetric stable probability distribution (cf. \cite[Thm.~5.7.3]{L70}). Now, \cite[Thm.~5.10.1]{L70} states that all stable distributions are unimodal, so since by symmetry the maximum of $v_\alpha$ is located at zero we have that $V_\alpha(x)=\int_{-\infty}^xv_\alpha(z)\dd z$ is concave for $x\geq0$. Therefore $v_\alpha'\leq0$ on $\mathbb{R}_+$ and it is shown that $v_\alpha$ is nonincreasing on $\mathbb{R}_+$. The asymptotics of $v_\alpha$ follow by a standard contour deformation argument, and strict positivity follows from combining the decay behaviour of $v_\alpha$ with the monotonicity result.
\end{proof}
Since we will frequently use solutions to \eqref{eq:ide} with odd initial data, we give the following lemma.
\begin{lm}\label{lm:oddfunctionsunderide}
For $\alpha\in(0,2)$, let $u^\alpha$ be the fundamental solution to \eqref{eq:ide}, let $u_0\in C(\mathbb{R})\cap L^1(\mathbb{R};|x|^{-\alpha-1}\dd x)$ be odd, and for $t>0$ let $u(t,\cdot):=[u_0\ast u^\alpha(t,\cdot)](\cdot)$. Then the following hold.
\begin{itemize}
\item{For all $t>0$, $u(t,\cdot)$ is odd and smooth.}
\item{{\em Maximum principle.}~If $u_0\geq0\,[\,\leq0\,]$ on $\mathbb{R}_+$, then $u(t,\cdot)\geq0\,[\,\leq0\,]$ on $\mathbb{R}_+$ for all $t>0$.}
\item{If $u_0$ is concave [convex] on $\mathbb{R}_+$, then $u(t,\cdot)$ is concave [convex] on $\mathbb{R}_+$ for all $t>0$, and in particular
\begin{equation}\label{eq:signofseconddifference}
\Delta_y^2[u(t,\cdot)](x)\leq0\,[\,\geq0\,]\text{ for all }x\geq0\text{, }y\in\mathbb{R}\text{ and }t\geq0.
\end{equation}
}
\end{itemize}
\end{lm}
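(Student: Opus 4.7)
The plan is to handle the three bullets in order, reducing the third to the second via a commutation identity.

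For the first bullet, since Lemma \ref{lm:fractionalheatequation} shows $u^\alpha(t,\cdot)$ is smooth and even (as $v_\alpha$ is symmetric), I would observe that the convolution of the odd $u_0$ with the even $u^\alpha(t,\cdot)$ is odd, and smoothness of $u(t,\cdot)$ follows from smoothness of $v_\alpha$ together with the decay $v_\alpha(z)=O(|z|^{-\alpha-1})$, which justifies differentiation under the integral sign.

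For the second bullet, using oddness of $u_0$ and symmetry of $u^\alpha(t,\cdot)$ I would first rewrite
\begin{equation*}
u(t,x)=\int_0^\infty u_0(y)\bigl[u^\alpha(t,x-y)-u^\alpha(t,x+y)\bigr]\dd y.
\end{equation*}
Then for $x,y\geq0$ one has $|x-y|\leq x+y$, so by the symmetry and monotonicity of $v_\alpha$ on $\mathbb{R}_+$ established in Lemma \ref{lm:fractionalheatequation}, the bracket is nonnegative, and the conclusion $u(t,\cdot)\geq0$ on $\mathbb{R}_+$ is immediate (the case $u_0\leq0$ on $\mathbb{R}_+$ is symmetric).

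For the third bullet, the key observation is that $\Delta_y^2$ acting on the $x$ variable commutes with convolution in $x$, so
\begin{equation*}
\Delta_y^2[u(t,\cdot)](x)=\bigl[(\Delta_y^2 u_0)*u^\alpha(t,\cdot)\bigr](x).
\end{equation*}
A short computation shows $\Delta_y^2 u_0$ is again odd in $x$ when $u_0$ is odd. If I can verify that $\Delta_y^2 u_0(x)\leq0$ for all $x\geq0$ and $y\in\mathbb{R}$ whenever $u_0$ is odd and concave on $\mathbb{R}_+$, then applying the second bullet to $\Delta_y^2 u_0$ in place of $u_0$ (after checking that it still lies in $C(\mathbb{R})\cap L^1(\mathbb{R};|x|^{-\alpha-1}\dd x)$) yields the claimed sign of $\Delta_y^2[u(t,\cdot)](x)$. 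Concavity of $u(t,\cdot)$ on $\mathbb{R}_+$ would then follow via midpoint concavity plus continuity; the convex/nonnegativity variant is identical after a sign flip.

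The hard part is the pointwise inequality $\Delta_y^2 u_0(x)\leq0$ for $x\geq0$ and $y\in\mathbb{R}$. By symmetry in $y\mapsto-y$ one may assume $y\geq0$. The case $x\geq y$ is immediate since $x\pm y\in\mathbb{R}_+$ and $u_0$ is concave there. In the nontrivial case $0\leq x<y$, oddness gives
\begin{equation*}
\Delta_y^2 u_0(x)=u_0(x+y)-u_0(y-x)-2u_0(x),
\end{equation*}
and here I would exploit that a continuous concave function on $[0,\infty)$ vanishing at the origin is subadditive: writing $x+y=(y-x)+2x$ and applying subadditivity once gives $u_0(x+y)\leq u_0(y-x)+u_0(2x)$, while $u_0(2x)\leq 2u_0(x)$ follows from concavity together with $u_0(0)=0$. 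Combining these two estimates yields the required inequality and closes the argument.
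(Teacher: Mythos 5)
Your proof is correct and follows essentially the same route as the paper: oddness via odd-even convolution, the maximum principle via the kernel rewrite together with symmetry and monotonicity of $v_\alpha$, and the concavity claim via commuting $\Delta_y^2$ through the evolution and then reducing to the pointwise inequality $\Delta_y^2 u_0\leq 0$ on $\mathbb{R}_+$, which both you and the paper establish from concavity together with $u_0(0)=0$ (subadditivity). The only cosmetic difference is in the last step: the paper splits $\Delta_y^2 u_0(x)$ as $\Delta_x^2 u_0(|y|)+2(u_0(|y|)-u_0(|y|-x)-u_0(x))$ while you write $x+y=(y-x)+2x$ and use $u_0(2x)\leq 2u_0(x)$; these are equivalent rearrangements of the same elementary inequalities.
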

\begin{proof}
For all $t>0$, $u_0(t,\cdot)$ is odd since it is the convolution of an odd and an even function, while smoothness follows from the fact that $u^\alpha(t,\cdot)$ is smooth for all $t>0$. Suppose now that $u_0\geq0\,[\,\leq0\,]$ on $\mathbb{R}_+$. For $x\geq0$ we then find, by the facts that $u_0$ is odd and that $u^\alpha(t,\cdot)$ is even for all $t>0$, that we can write
\begin{equation}\nonumber
u(t,x)=\int_{\mathbb{R}_+} u_0(y)\left(u^\alpha(t,x-y)-u^\alpha(t,x+y)\right)\dd y,
\end{equation}
and it follows that $u(t,\cdot)\geq0\,[\,\leq0\,]$ on $\mathbb{R}_+$ for all $t>0$, since $u^\alpha(t,\cdot)$ is even and monotonically decreasing on $\mathbb{R}_+$ ($u^\alpha(t,x-z)-u^\alpha(t,x+z)\geq0$ for $x,z\geq0$). We next restrict ourselves to the case where $u_0$ is concave on $\mathbb{R}_+$, since the other case is similar. Then, for all $y\in\mathbb{R}$ there holds $\Delta_y^2u_0\leq0$ on $\mathbb{R}_+$. For $|y|\leq x$ this follows immediately from the definition of concavity, while for $|y|>x>0$ we note, using that $u_0$ is odd, that
\begin{equation}\nonumber
\Delta_y^2u_0(x)=\Delta_x^2u_0(|y|)+2\big(u_0(|y|)-u_0(|y|-x)-u_0(x)\big)\leq0.
\end{equation}
Here the first term on the right hand side of the equality is nonpositive by the previous argument, and the remaining terms are nonpositive by
\begin{multline}\nonumber
u_0(|y|-x)+u_0(x)=u_0\left(\tfrac{x}{|y|}\times0+\tfrac{|y|-x}{|y|}|y|\right)+u_0\left(\tfrac{|y|-x}{|y|}\times0+\tfrac{x}{|y|}|y|\right)\\
\geq\left(\tfrac{x}{|y|}+\tfrac{|y|-x}{|y|}\right)u_0(0)+\left(\tfrac{|y|-x}{|y|}+\tfrac{x}{|y|}\right)u_0(|y|)=u_0(|y|),
\end{multline}
where we have used that $u_0(0)=0$ since $u_0$ is odd. Next, since the second difference operator is linear, it commutes with the integral operator on the right hand side of \eqref{eq:ide}, and since further the second difference of an odd function is odd, we find by the maximum principle proven above that \eqref{eq:signofseconddifference} holds. Additionally we then find that $u(t,\cdot)$ is concave on $\mathbb{R}_+$ since for all $x\geq0$ we have $u_{xx}(t,x)=\lim_{y\rightarrow0}\frac1{y^2}\Delta_y^2[u(t,\cdot)](x)\leq0$, where we recall that we may take two derivatives by smoothness of $u^\alpha(t,\cdot)$.
\end{proof}

\subsection{Invariance of $\mathcal{Y}_\rho(R_0)$}
Our goal in this subsection is to show that, for some suitable $R_0>0$, the semigroup introduced in Definition \ref{df:definitionofsemigroup} maps the set $\mathcal{Y}_\rho(R_0)$ (cf.~Definition \ref{df:defofY}) into itself. The proof of invariance of the lower bound [cf.~\eqref{eq:lowerboundinY}] shows strong similarities with the approach in \cite{NV13}. As mentioned before, the main difference here compared with the approach in that paper is that because of the form of the nonlocal diffusion operator in our collision kernel [cf.~\eqref{eq:notionofsolution}] it is convenient to use test functions that are concave, while in \cite{NV13} it was natural to use monotone test functions. A consequence of this is that in order to measure the size of $\Psi_a$ it is now natural to use the functionals given by \eqref{eq:functional}, as opposed to the functionals $\int_0^Rh(x)\dd x$ which were used in \cite{NV13}. 

We first derive the following estimate.
\begin{lm}\label{lm:lowerboundfirstlemma}
Under Assumption \ref{assumption}, let $(S_a(t))_{t\geq0}$ be the semigroup on the unit ball of $\mathcal{X}_\rho$ as defined in Definition \ref{df:definitionofsemigroup}. Let further $\vartheta\in C(\mathbb{R})$ be such that the mapping $z\mapsto z\vartheta(z)$ is odd, bounded, and concave on $\mathbb{R}_+$. Then for all $t\geq0$ and all $\Psi_0\in\{\|\mu\|_\rho\leq1\}\cap\mathcal{X}_\rho$ it holds that
\begin{multline}\label{eq:conservedlowerbound_1}
\int_{[0,\infty)}\vartheta(x)S_a(t)\Psi_0(x)\dd x\\
\geq e^{t(\rho-1)/\rho}\int_{[0,\infty)}\left(\int_\mathbb{R}\frac{y\vartheta(y)}{(\rho t)^{1/\rho}}v_\rho\left(\frac{xe^{-t/\rho}-y}{(\rho t)^{1/\rho}}\right)\dd y\right)\tfrac1x\Psi_0(x)\dd x,
\end{multline}
where $v_\rho$ is the self-similar profile associated to the fundamental solution of \eqref{eq:ide} with $\alpha=\rho$, which was obtained in Lemma \ref{lm:fractionalheatequation}.
\end{lm}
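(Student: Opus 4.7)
The plan is to apply the weak formulation \eqref{eq:evolutionofPsia} for $\Psi_a(s,\cdot)=S_a(s)\Psi_0$ against a carefully chosen time-dependent test function built from the fractional heat flow $u^\rho$ of Lemma \ref{lm:fractionalheatequation}. Let $\Lambda(\tau,z):=\bigl[u^\rho(\tau,\cdot)\ast(y\vartheta(y))\bigr](z)$; since $z\mapsto z\vartheta(z)$ is odd, bounded and concave on $\mathbb{R}_+$, Lemma \ref{lm:oddfunctionsunderide} yields that $\Lambda(\tau,\cdot)$ is smooth, odd, concave and nonnegative on $\mathbb{R}_+$, and satisfies $\Lambda_\tau=\int_0^\infty u^{-\rho-1}\Delta_u^2\Lambda\,\dd u\leq 0$ there. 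I then set
\[
\vartheta_*(s,x):=\frac{e^{(t-s)(\rho-1)/\rho}}{x}\,\Lambda\bigl(\rho(t-s),\,xe^{-(t-s)/\rho}\bigr),
\]
and verify by inspection that $\vartheta_*(t,x)=\vartheta(x)$ while $\vartheta_*(0,x)$ equals precisely the integrand on the right-hand side of \eqref{eq:conservedlowerbound_1}, so that $\int\vartheta_*(0,x)\Psi_0(x)\,\dd x$ reproduces the claimed lower bound.

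Setting $\Phi_*(s,x):=x\vartheta_*(s,x)$, this is a positive rescaling of $\Lambda(\rho(t-s),\cdot)$, hence odd and concave on $\mathbb{R}_+$, so that $\mathcal{D}_2^*[\vartheta_*(s,\cdot)](x,y)=\Delta_y^2\Phi_*(s,\cdot)|_x\leq 0$ on $\{x>y>0\}$ and the nonlocal term in \eqref{eq:evolutionofPsia} is nonpositive. With $\tau=\rho(t-s)$ and $z=xe^{-(t-s)/\rho}$, a direct calculation gives
\[
\vartheta_{*,s}-\tfrac1\rho\bigl(x\vartheta_{*,x}+(2-\rho)\vartheta_*\bigr)=-\rho\,e^{(t-s)(\rho-1)/\rho}\,\Lambda_\tau(\tau,z)/x\geq 0,
\]
the sign following from $\Lambda_\tau\leq 0$. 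Substituting the integral representation of $\Lambda_\tau$ and changing variable $u=ye^{-(t-s)/\rho}$ recasts this drift as $-\rho e^{t-s}\iint_{x,y>0}y^{-\rho-1}\Delta_y^2\Phi_*(s,\cdot)|_x\,\Psi_a(s,x)/x\,\dd x\,\dd y$.

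Plugging $\vartheta_*$ into \eqref{eq:evolutionofPsia} (after a routine truncation of $\vartheta$ so that $\vartheta_*\in C^1([0,t]:\mathcal{B}_1)$ and a passage to the limit) therefore reduces \eqref{eq:conservedlowerbound_1} to showing that $\int_0^t[\mathrm{drift}(s)+\mathrm{nonlocal}(s)]\,\dd s\geq 0$. Splitting the drift at $y=x$, its contribution on $\{x>y>0\}$ combines with the nonlocal term into
\[
\iint_{x>y>0}|\Delta_y^2\Phi_*(x)|\,\Psi_a(x)\Big[\tfrac{\rho e^{t-s}}{xy^{\rho+1}}-\tfrac{(\phi_a\ast\Psi_a)(y)}{((x+\varepsilon)(y+\varepsilon))^{3/2}}\Big]\,\dd x\,\dd y,
\]
while on $\{y>x>0\}$ the drift remains as a manifestly positive remainder that, upon swapping $x\leftrightarrow y$, contributes $\rho e^{t-s}\iint_{x>y>0}x^{-\rho-1}|\Delta_x^2\Phi_*(y)|\Psi_a(y)/y\,\dd x\,\dd y\geq 0$.

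The main obstacle lies in closing this comparison. The mass bound $\int_0^R\Psi_a\leq R^{2-\rho}$ inherited from $\|\Psi_a\|_\rho\leq 1$ (Proposition \ref{pr:globalnormbound}) together with the mollifier width $a<\varepsilon/2$ controls $(\phi_a\ast\Psi_a)(y)$, but a naive pointwise inequality against $y^{-\rho-1}$ fails for intermediate $y$. The strategy is therefore to use the positive $\{y>x\}$ remainder from the drift to absorb the deficit, relating $|\Delta_x^2\Phi_*(y)|$ to $|\Delta_y^2\Phi_*(x)|$ via concavity together with the subadditivity $\Phi_*(s,x+y)\leq\Phi_*(s,x)+\Phi_*(s,y)$ (which follows from concavity of $\Phi_*(s,\cdot)$ on $\mathbb{R}_+$ with $\Phi_*(s,0)=0$); once this combinatorial estimate is in place, the identity from \eqref{eq:evolutionofPsia} yields $\int\vartheta(x)\Psi_a(t,x)\,\dd x\geq\int\vartheta_*(0,x)\Psi_0(x)\,\dd x$, which is \eqref{eq:conservedlowerbound_1}.
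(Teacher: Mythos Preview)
Your setup is essentially the paper's: you build the same backward-in-time test function from the fractional heat flow, identify correctly that $\mathcal{D}_2^*[\vartheta_*]\leq 0$ and that the drift term equals $-\rho e^{t-s}\int_{\mathbb{R}_+}y^{-\rho-1}\Delta_y^2\Phi_*(s,x)\,\dd y$ times $\Psi_a(s,x)/x$. Where you diverge from the paper is in the comparison step, and there you leave a genuine gap.

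You try to compare $\rho e^{t-s}/(xy^{\rho+1})$ with $(\phi_a\ast\Psi_a)(y)/((x+\varepsilon)(y+\varepsilon))^{3/2}$ pointwise, acknowledge this fails, and then propose to absorb the deficit using the $\{y>x\}$ half of the drift. But that remainder carries $\Psi_a(s,y)$ and $|\Delta_x^2\Phi_*(s,y)|$, not $\Psi_a(s,x)$ and $|\Delta_y^2\Phi_*(s,x)|$; the ``combinatorial estimate'' you invoke to relate them is never stated, and subadditivity of $\Phi_*$ alone does not produce an inequality of the required shape with the right weights. As written the argument does not close.

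The paper avoids this entirely by an integration by parts in $y$. After bounding $(x+\varepsilon)^{-3/2}\leq x^{-1}(y+\varepsilon)^{-1/2}$ and extending the $y$-integral to all of $\mathbb{R}_+$ (licit since $\Delta_y^2\Phi_*\leq 0$), one writes the nonlocal term as
\[
\frac{1}{x}\int_{\mathbb{R}_+}\Bigl(\int_y^\infty\frac{(\phi_a\ast\Psi_a)(z)}{(z+\varepsilon)^2}\,\dd z\Bigr)\Bigl(\int_{x-y}^{x+y}\partial_w^2\Phi_*(s,w)\,\dd w\Bigr)\dd y,
\]
and now the norm bound $\|\Psi_a(s,\cdot)\|_\rho\leq 1$ gives the \emph{integrated} estimate $\int_y^\infty(\phi_a\ast\Psi_a)(z)/(z+\varepsilon)^2\,\dd z\leq y^{-\rho}$, which is exactly what the pointwise comparison could not deliver. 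Integrating by parts back yields $\rho\int_{\mathbb{R}_+}y^{-\rho-1}\Delta_y^2\Phi_*(s,x)\,\dd y$, and the sum of drift and nonlocal terms becomes $\rho(1-e^{t-s})\int y^{-\rho-1}\Delta_y^2\Phi_*\,\dd y$ times $\Psi_a/x$, which is manifestly nonnegative since both factors are nonpositive. This is the missing idea in your proposal.
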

\begin{proof}
Throughout this proof we let $t\geq0$ and $\Psi_0\in\{\|\mu\|_\rho\leq1\}\cap\mathcal{X}_\rho$ be fixed, and for all $s\in[0,t]$ we write $\Psi_a(s,x)=S_a(s)\Psi_0(x)$.

For $\vartheta\in C(\mathbb{R})$ fixed as in the statement of the lemma, we define
\begin{multline}\label{eq:defofu}
u(s,x)=e^{(t-s)(\rho-1)/\rho}[\varphi\ast u^\rho(\rho(t-s),\cdot)](xe^{(s-t)/\rho})\times\tfrac1x,\\
\text{ for $s\in[0,t]$ and $x\in\mathbb{R}$, and with $\varphi(y):=y\vartheta(y)$,}
\end{multline}
where $u^\rho$ is the fundamental solution of \eqref{eq:ide} with $\alpha=\rho$ as obtained in Lemma \ref{lm:fractionalheatequation}. We then note that $u(t,\cdot)=\vartheta$, and that $u(0,x)$ is equal to the integral with respect to $y$ in the right hand side of \eqref{eq:conservedlowerbound_1}. Therefore \eqref{eq:conservedlowerbound_1} can be written as
\begin{equation}\nonumber
\int_{[0,\infty)}u(t,x)\Psi_a(t,x)\dd x\geq\int_{[0,\infty)}u(0,x)\Psi_0(x)\dd x,
\end{equation}
so, using $u$ as a test function in \eqref{eq:evolutionofPsia}, we see that \eqref{eq:conservedlowerbound_1} is equivalent to
\begin{multline}\label{eq:conlowbnd_2b}
\int_0^t\Bigg[\int_{[0,\infty)}\left(u_s(s,x)-\tfrac1\rho\left(x u_x(s,x)+(2-\rho)u(s,x)\right)\right)\Psi_a(s,x)\dd x\\
+\int_{[0,\infty)}\left(\int_0^x\frac{(\phi_a\ast\Psi_a(s,\cdot))(y)}{((x+\varepsilon)(y+\varepsilon))^{3/2}}\mathcal{D}_2^*[u(s,\cdot)](x,y)\dd y\right)\Psi_a(s,x)\dd x\Bigg]\dd s\geq0,
\end{multline}
where we recall that $\mathcal{D}_2^*$ is defined in \eqref{eq:defcalDstar}.

Next, for $x\geq y\geq0$ we note that $(x+\varepsilon)^{-3/2}\leq\frac1x(y+\varepsilon)^{-1/2}$, and, defining $U(s,x):=xu(s,x)$, that $\mathcal{D}_2^*[u(s,\cdot)](x,y)=\Delta_y^2[U(s,\cdot)](x)$. Further, since $U(s,\cdot)$ is odd, bounded and concave on $\mathbb{R}_+$ for all $s\geq0$ (cf.~Lemma \ref{lm:oddfunctionsunderide}), by \eqref{eq:signofseconddifference} it holds for all $x,y\geq0$ that $\Delta_y^2[U(s,\cdot)](x)\leq0$, which together yields the estimate
\begin{multline}\label{eq:keyest_1}
\int_0^x\frac{(\phi_a\ast\Psi_a(s,\cdot))(y)}{((x+\varepsilon)(y+\varepsilon))^{3/2}}\mathcal{D}_2^*[u(s,\cdot)](x,y)\dd y\\
\geq\frac1x\int_{\mathbb{R}_+}\frac{(\phi_a\ast\Psi_a(s,\cdot))(y)}{(y+\varepsilon)^2}\Delta_y^2[U(s,\cdot)](x)\dd y.
\end{multline}
By an integration by parts, and using \eqref{eq:rewriteofsecdifprime} for $\partial_y[\Delta_y^2[U(s,\cdot)](x)]$, we can now write the right hand side of \eqref{eq:keyest_1} as
\begin{equation}\label{eq:keyest_1b}
\frac1x\int_{\mathbb{R}_+}\left(\int_y^\infty\frac{(\phi_a\ast\Psi_a(s,\cdot))(z)}{(z+\varepsilon)^2}\dd z\right)\left(\int_{x-y}^{x+y}U_{ww}(s,w)\dd w\right)\dd y,
\end{equation}
where we note that the integral with respect to $w$ on the right hand side is nonpositive for $x,y\geq0$, since $U(s,\cdot)$ is odd and concave on $\mathbb{R}_+$ (note that $\int_{-a}^aU''(w)\dd w=0$ for $a\geq0$). To find a lower bound for \eqref{eq:keyest_1b} we thus need an upper bound for the integral with respect to $z$. We thereto note for $z\geq y$ that $(z+\varepsilon)^{-2}\leq\frac1{y^2}(1\wedge\frac{y}{z+\varepsilon})$, and by expanding the domain of integration we find
\begin{multline}\label{eq:keyest_2}
\int_y^\infty\frac{(\phi_a\ast\Psi_a(s,\cdot))(z)}{(z+\varepsilon)^2}\dd z
\leq\frac1{y^2}\int_{[0,\infty)}\left(1\wedge\tfrac{y}{z+\varepsilon}\right)(\phi_a\ast\Psi_a(s,\cdot))(z)\dd z\\
=\frac1{y^2}\int_{[0,\infty)}\left(\int_{[0,\infty)}\left(1\wedge\tfrac{y}{z+\varepsilon}\right)\phi_a(z-x)\dd z\right)\Psi_a(s,x)\dd x,
\end{multline}
where the equality holds by Fubini. Now, since for all $z-x\in{\rm supp}(\phi_a)$ there holds $|z-x|\leq a<\frac\varepsilon2$, we have for all $x\geq0$ that
\begin{equation}\nonumber
\int_{[0,\infty)}\left(1\wedge\tfrac{y}{x+(z-x+\varepsilon)}\right)\phi_a(z-x)\dd z
\leq\left(1\wedge\tfrac{y}{x+\frac\varepsilon2}\right)\int_{\mathbb{R}}\phi_a(z-x)\dd z
\leq\left(1\wedge\tfrac{y}{x}\right),
\end{equation}
which, using the definition of the norm, we can use to estimate the right hand side of \eqref{eq:keyest_2} by
\begin{equation}\nonumber
\frac1{y^2}\int_{[0,\infty)}\left(1\wedge\tfrac{y}{x}\right)\Psi_a(s,x)\dd x\leq\tfrac1{y^2}\|\Psi_a(s,\cdot)\|_\rho\,y^{2-\rho}\leq y^{-\rho}.
\end{equation}
Combining then the previous estimates, and recalling the nonpositivity of the integral with respect to $w$ in \eqref{eq:keyest_1b}, we obtain
\begin{multline}\label{eq:keyest_3}
\int_{[0,\infty)}\left(\int_0^x\frac{(\phi_a\ast\Psi_a(s,\cdot))(y)}{((x+\varepsilon)(y+\varepsilon))^{3/2}}\mathcal{D}_2^*[u(s,\cdot)](x,y)\dd y\right)\Psi_a(s,x)\dd x\\
\geq\int_{[0,\infty)}\left(\int_{\mathbb{R}_+}y^{-\rho}\times\int_{x-y}^{x+y}U_{ww}(s,w)\dd w\ \dd y\right)\tfrac1x\Psi_a(s,x)\dd x,
\end{multline}
where, by an integration by parts in the integral with respect to $y$, the right hand side of \eqref{eq:keyest_3} can be rewritten as
\begin{equation}\label{eq:keyest_4}
\int_{[0,\infty)}\left(\rho\int_{\mathbb{R}_+}y^{-\rho-1}\Delta_y^2[U(s,\cdot)](x)\dd y\right)\tfrac1x\Psi_a(s,x)\dd x.
\end{equation}

Also, since $u^\rho$ is the fundamental solution of \eqref{eq:ide}, we note that $U(s,x)=xu(s,x)$ by construction satisfies [cf.~\eqref{eq:defofu}]
\begin{equation}\label{eq:ide2}
U_s(s,x)+\tfrac{\rho-1}{\rho}U(s,x)-\tfrac1\rho xU_x(s,x)=-\rho e^{t-s}\int_{\mathbb{R}_+}y^{-\rho-1}\Delta_y^2[U(s,\cdot)](x)\dd y.
\end{equation}
Checking then that the left hand side of \eqref{eq:ide2} can be rewritten as
\begin{equation}\nonumber
x\left(u_s(s,x)-\tfrac1\rho\left(x u_x(s,x)+(2-\rho)u(s,x)\right)\right),
\end{equation}
we find that the first integral between square brackets on the left hand side of \eqref{eq:conlowbnd_2b} equals
\begin{equation}\label{eq:keyest_5}
\int_{[0,\infty)}\left(-\rho e^{t-s}\int_{\mathbb{R}_+}y^{-\rho-1}\Delta_y^2[U(s,\cdot)](x)\dd y\right)\tfrac1x\Psi_a(s,x)\dd x.
\end{equation}

Concluding, since the first and second integral between square brackets on the left hand side of \eqref{eq:conlowbnd_2b} can be estimated from below by \eqref{eq:keyest_5} and \eqref{eq:keyest_4} respectively, the left hand side of \eqref{eq:conlowbnd_2b} can be bounded from below by
\begin{equation}\nonumber
\int_0^t\left(\int_{[0,\infty)}\left(\rho(1-e^{t-s})\int_{\mathbb{R}_+}y^{-\rho-1}\Delta_y^2[U(s,\cdot)](x)\dd y\right)\tfrac1x\Psi_a(s,x)\dd x\right)\dd s,
\end{equation}
which is nonnegative since both $1-e^{t-s}$ and $\Delta_y^2[U(s,\cdot)](x)$ are nonpositive on the domain of integration, while all other terms are nonnegative. This proves that \eqref{eq:conlowbnd_2b} holds, and since \eqref{eq:conservedlowerbound_1} and \eqref{eq:conlowbnd_2b} are equivalent the proof is complete.
\end{proof}
The following two lemmas will be useful in the actual proof of invariance of \eqref{eq:lowerboundinY} under the evolution \eqref{eq:evolutionofPsia}, where we will use a suitable function $\vartheta$ in \eqref{eq:conservedlowerbound_1}.
\begin{lm}\label{lm:integrationbyparts}
For $\rho\in(1,2)$ and $\Psi\in\mathcal{X}_\rho$, it holds for all odd $\Theta\in C^2([-\infty,\infty])$ that satisfy $\lim_{x\rightarrow\infty}\Theta'(x)x^{2-\rho}=0$ that
\begin{equation}\label{eq:integrationbyparts}
\int_{[0,\infty)}\Theta(x)\cdot\tfrac1x\Psi(x)\dd x=-\int_{[0,\infty)}\Theta''(x)\left(\int_{[0,\infty)}\left(1\wedge\tfrac{x}{z}\right)\Psi(z)\dd z\right)\dd x.
\end{equation}
\end{lm}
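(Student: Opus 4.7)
The strategy is a two-step integration by parts pivoting on the function
$$F(x) := \int_{[0,\infty)}\bigl(1\wedge\tfrac{x}{z}\bigr)\Psi(z)\dd z,$$
which is precisely the inner integral on the right hand side of \eqref{eq:integrationbyparts}. First I would record that $F$ is nondecreasing and concave on $[0,\infty)$ (being a superposition of such maps in $x$), that $F(0)=0$ because $\Psi(\{0\})=0$ for $\rho\in(1,2)$, and that the bound $F(x)\leq\|\Psi\|_\rho x^{2-\rho}$ is immediate from the definition of $\|\cdot\|_\rho$. A Fubini--Tonelli computation then shows that for all $y\geq 0$,
$$\int_0^y F_2(w)\dd w = F(y),\qquad F_2(w):=\int_{(w,\infty)}\tfrac{1}{z}\Psi(z)\dd z,$$
so $F$ is absolutely continuous with a.e.\ derivative $F_2$. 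This is the way to sidestep the fact that $\Psi$ need not have a pointwise density.

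Next, using oddness of $\Theta$ (which gives $\Theta(0)=0$) to write $\Theta(x)=\int_0^x\Theta'(y)\dd y$, another Fubini--Tonelli yields
$$\int_{[0,\infty)}\Theta(x)\tfrac{1}{x}\Psi(x)\dd x = \int_0^\infty\Theta'(y)F_2(y)\dd y.$$
To justify the swap I would split $x\in[0,1]$, where $|\Theta(x)|/x\leq\|\Theta'\|_\infty$, from $x\in[1,\infty)$, where $|\Theta(x)|/x\leq\|\Theta\|_\infty/x$, and invoke $\Psi([0,1])\leq\|\Psi\|_\rho$ and $\int_{(1,\infty)}\frac{1}{x}\Psi(x)\dd x\leq\|\Psi\|_\rho$ (both obtained by taking $R=1$ in the definition of $\|\cdot\|_\rho$) to conclude absolute integrability of the iterated integral.

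Finally, an integration by parts on $[0,R]$, legitimate because $F$ is absolutely continuous, gives
$$\int_0^R\Theta'(y)F_2(y)\dd y = \Theta'(R)F(R) - \int_0^R\Theta''(y)F(y)\dd y$$
after using $F(0)=0$. Passing to $R\to\infty$: the boundary term is controlled by $\Theta'(R)F(R)\leq\|\Psi\|_\rho\Theta'(R)R^{2-\rho}\to 0$ from the hypothesis on $\Theta'$, and combining with the Fubini identity of the previous step yields \eqref{eq:integrationbyparts}.

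The main obstacle I anticipate is precisely the interplay of the two Fubini applications with the potentially infinite total mass of $\Psi$ (which is the reason $\rho$ is strictly less than $2$); the universal growth bound $F(x)\leq\|\Psi\|_\rho x^{2-\rho}$, combined with the decay hypothesis $\Theta'(x)x^{2-\rho}\to 0$, is what forces both the boundary contribution at infinity and the tail integrals to behave correctly. All other points are routine once these are in place.
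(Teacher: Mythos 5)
Your proof takes essentially the same route as the paper: identify $F(x)=\int_0^x F_2(y)\dd y$ (so $F$ is absolutely continuous with $F'=F_2$ a.e.), then integrate by parts and control the boundary terms using $F(0)=0$, the bound $F(x)\leq\|\Psi\|_\rho x^{2-\rho}$, and the hypothesis $\Theta'(x)x^{2-\rho}\to0$. The only real difference is cosmetic: the paper's first integration by parts produces a boundary term $\Theta(x)F_2(x)$ whose vanishing as $x\to0$ requires the short expansion $\Theta(x)=\Theta'(0)x+o(x)$, whereas you replace that step by the Fubini exchange $\int\Theta(x)\tfrac1x\Psi(x)\dd x=\int\Theta'(y)F_2(y)\dd y$, which sidesteps that boundary term.

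There is, however, a slip in how you justify the Fubini exchange. You verify $\int|\Theta(x)|\tfrac1x\Psi(x)\dd x<\infty$, but Tonelli needs $\iint_{\{0<y<x\}}|\Theta'(y)|\tfrac1x\Psi(x)\dd y\,\dd x<\infty$, and $\int_0^x|\Theta'(y)|\dd y$ can be much larger than $|\Theta(x)|$ when $\Theta'$ changes sign. After splitting, the delicate piece is $\int_1^\infty|\Theta'(y)|F_2(y)\dd y$: combining $F_2(y)\leq\|\Psi\|_\rho y^{1-\rho}$ with $\Theta'(y)y^{2-\rho}\to0$ only gives $|\Theta'(y)|F_2(y)=o(1/y)$, which need not be integrable (e.g.~$\Theta'(y)\sim y^{\rho-2}\sin(y)/\log y$ is consistent with all the hypotheses, including $\Theta\in C^2([-\infty,\infty])$). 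The paper's presentation avoids this by implicitly integrating by parts on finite intervals $[a,R]$ and passing to the limit, so only conditional convergence is invoked. To repair your version, either carry out the Fubini exchange on $[a,R]\times[0,R]$ and then let $a\to0$, $R\to\infty$ (using the same boundary control), or simply note that in the sole application of this lemma --- Proposition \ref{pr:thelowerbound} with $\Theta$ as in Lemma \ref{lm:inbThetadef} --- one has $\Theta'\geq0$ on $[0,\infty)$, so Tonelli applies directly and your argument goes through unchanged.
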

\begin{proof}
Observing that
\begin{equation}\nonumber
\int_{[0,\infty)}\left(1\wedge\tfrac{x}{z}\right)\Psi(z)\dd z=\int_0^x\int_y^\infty\tfrac1z\Psi(z)\dd z\dd y,
\end{equation}
it is clear that \eqref{eq:integrationbyparts} follows by integrating by parts twice, provided that the boundary values vanish. Since $\Theta$ is odd with bounded first derivative, we have $\Theta(x)=\Theta'(0)x+o(x^2)$ as $x\rightarrow0$, so
\begin{multline}\nonumber
\left|\Theta(x)\int_x^\infty\tfrac1z\Psi(z)\dd z\right|\leq2|\Theta'(0)|\int_{[0,\infty)}\left(1\wedge\tfrac{x}{z}\right)\Psi(z)\dd z\\
\leq2|\Theta'(0)|\|\Psi\|_\rho\cdot x^{2-\rho}\rightarrow0\text{ as }x\rightarrow0,
\end{multline}
where the second inequality holds by definition of the norm. Notice further that by our choice of $\Theta$, and using again the definition of the norm, we have
\begin{equation}\nonumber
\left|\Theta'(x)\int_{[0,\infty)}\left(1\wedge\tfrac{x}{z}\right)\Psi(z)\dd z\right|\leq\|\Psi\|_\rho\cdot|\Theta'(x)|x^{2-\rho}\rightarrow0\text{ as }x\rightarrow\infty.
\end{equation}
The claim then follows as the remaining boundary values vanish trivially.
\end{proof}
\begin{lm}\label{lm:inbThetadef}
For $\rho\in(1,2)$, let $v_\rho$ be the self-similar profile associated to the fundamental solution of \eqref{eq:ide} with $\alpha=\rho$. Then for all $\theta_1,\theta_2>0$ the function
\begin{equation}\label{eq:Theta}
\Theta(x)=\int_\mathbb{R}y\left(1\wedge\left|\tfrac{\theta_1}{y}\right|\right)v_\rho\left(\tfrac{x-y}{\theta_2}\right)\tfrac{\dd y}{\theta_2}
\end{equation}
is odd, smooth, and satisfies $\lim_{x\rightarrow\infty}\Theta'(x)x^{2-\rho}=0$ and
\begin{equation}\label{eq:signofThetadoubleprime}
-\Theta''(x)=\left(v_\rho\left(\tfrac{x-\theta_1}{\theta_2}\right)-v_\rho\left(\tfrac{x+\theta_1}{\theta_2}\right)\right)\tfrac1{\theta_2}\geq0\text{ for }x\geq0.
\end{equation}
\end{lm}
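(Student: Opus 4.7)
The plan is to recognize the function $f(y) := y(1 \wedge |\theta_1/y|) = \operatorname{sgn}(y)\min(|y|,\theta_1)$ and rewrite $\Theta = f \ast \phi$, where $\phi(z) := \tfrac{1}{\theta_2} v_\rho(z/\theta_2)$. The function $f$ is odd, bounded by $\theta_1$, and Lipschitz with weak derivative $f'(y) = \mathbf{1}_{(-\theta_1,\theta_1)}(y)$. The function $\phi$ is even, smooth, and decays like $|z|^{-\rho-1}$ at infinity by Lemma \ref{lm:fractionalheatequation}. From these two observations all four assertions will follow in an essentially mechanical way.

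The first step is oddness of $\Theta$: the convolution of an odd function with an even function is odd. Smoothness is immediate from differentiation under the integral sign, justified by boundedness of $f$ and decay of all derivatives of $\phi$ (the latter coming from smoothness of $v_\rho$ combined with the contour-deformation estimate $|z|^{\rho+1}v_\rho(z) \to 1$ and corresponding bounds on $v_\rho^{(n)}$).

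The main step is the computation of $\Theta''$. Since $f$ is continuous and $\phi$ and its derivatives decay at infinity, an integration by parts in $\Theta'(x) = \int_\mathbb{R} f(y)\phi'(x-y)\,\dd y$ gives
\begin{equation}\nonumber
\Theta'(x) = \int_\mathbb{R} f'(y)\phi(x-y)\,\dd y = \int_{-\theta_1}^{\theta_1} \phi(x-y)\,\dd y = \int_{x-\theta_1}^{x+\theta_1}\phi(u)\,\dd u,
\end{equation}
and the fundamental theorem of calculus then yields $\Theta''(x) = \phi(x+\theta_1) - \phi(x-\theta_1)$, which is exactly \eqref{eq:signofThetadoubleprime} after inserting the definition of $\phi$. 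Nonnegativity of $-\Theta''(x)$ for $x \geq 0$ follows from the fact (Lemma \ref{lm:fractionalheatequation}) that $v_\rho$ is even and nonincreasing on $\mathbb{R}_+$, combined with the inequality $|x-\theta_1| \leq x+\theta_1$ valid for $x,\theta_1 \geq 0$.

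Finally, the asymptotic $\Theta'(x)x^{2-\rho} \to 0$ reduces, via the representation $\Theta'(x) = \int_{x-\theta_1}^{x+\theta_1}\phi(u)\,\dd u$ and the tail estimate $v_\rho(z) \sim |z|^{-\rho-1}$, to the statement $\Theta'(x) \sim 2\theta_1\theta_2^{\rho} x^{-\rho-1}$, so that $\Theta'(x)x^{2-\rho}$ is of order $x^{1-2\rho}$, which vanishes as $x \to \infty$ since $\rho > 1$. I do not foresee any genuine obstacle; the only minor care needed is to justify the integration by parts across the jump points $\pm\theta_1$ of $f'$, but this is routine because $f$ itself is continuous (even Lipschitz), so the boundary terms of the integration by parts are absent.
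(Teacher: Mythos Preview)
Your proof is correct and follows essentially the same approach as the paper: both recognize $\Theta$ as a convolution of an odd with an even function, compute $\Theta'$ via integration by parts to obtain $\Theta'(x)=\int_{-\theta_1}^{\theta_1}v_\rho((x-y)/\theta_2)\,\dd y/\theta_2$, differentiate once more for \eqref{eq:signofThetadoubleprime}, and use the tail behaviour of $v_\rho$ from Lemma~\ref{lm:fractionalheatequation} for the decay of $\Theta'(x)x^{2-\rho}$. The only cosmetic difference is that for the last step the paper bounds $\Theta'(x)$ by $\int_{(x-\theta_1)/\theta_2}^\infty v_\rho(z)\,\dd z\sim\tfrac1\rho\theta_2^\rho x^{-\rho}$, whereas you use the sharper pointwise asymptotic $\Theta'(x)\sim 2\theta_1\theta_2^\rho x^{-\rho-1}$; both give the required limit since $\rho>1$.
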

\begin{proof}
That $\Theta$ is odd follows from the fact that it is the convolution of an odd and an even function, while smoothness follows since $v_\rho\in C^\infty(\mathbb{R})$. We now note that $\partial_x f(x-y)=-\partial_y f(x-y)$, so differentiating \eqref{eq:Theta} we obtain
\begin{equation}\label{eq:Thetaprime}
\Theta'(x)
=-\int_\mathbb{R}y\left(1\wedge\left|\tfrac{\theta_1}{y}\right|\right)\left[v_\rho\left(\tfrac{x-y}{\theta_2}\right)\right]_y\tfrac{\dd y}{\theta_2}\\
=\int_{-\theta_1}^{\theta_1}v_\rho\left(\tfrac{x-y}{\theta_2}\right)\tfrac{\dd y}{\theta_2},
\end{equation}
where the second equality follows by integration by parts. Differentiating \eqref{eq:Thetaprime} once more, we then obtain the equality in \eqref{eq:signofThetadoubleprime}, while the nonnegativity follows from the symmetry and monotonicity properties of $v_\rho$. Finally, by symmetry and using the tail behaviour of $v_\rho$ (cf.~Lemma \ref{lm:fractionalheatequation}), we find that
\begin{equation}\nonumber
\Theta'(x)=\int_{(x-\theta_1)/\theta_2}^{(x+\theta_1)/\theta_2}v_\rho(z)\dd z
\leq\int_{(x-\theta_1)/\theta_2}^\infty v_\rho(z)\dd z\sim\frac1\rho\frac{\theta_2^\rho}{x^\rho}\text{ as }x\rightarrow\infty,
\end{equation}
hence $\Theta'(x)x^{2-\rho}\leq\frac1\rho\theta_2^\rho\cdot x^{2(1-\rho)}\rightarrow0$ as $x\rightarrow\infty$.
\end{proof}
We are now able to prove the following.
\begin{pr}\label{pr:thelowerbound}
Under Assumption \ref{assumption}, there exists some $R_0>0$, independent of $a$ and $\varepsilon$, such that the set $\mathcal{Y}_\rho=\mathcal{Y}_\rho(R_0)$ is invariant under the evolution of the semigroup $(S_a(t))_{t\geq0}$ as defined in Definition \ref{df:definitionofsemigroup}.
\end{pr}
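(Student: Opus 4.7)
The plan is to check the two defining properties of $\mathcal{Y}_\rho(R_0)$ separately. The bound $\|S_a(t)\Psi_0\|_\rho \le 1$ is already in hand from Proposition \ref{pr:globalnormbound}, so the entire work concentrates on propagating the lower bound \eqref{eq:lowerboundinY}. First I would apply Lemma \ref{lm:lowerboundfirstlemma} with the test function $\vartheta_R(x) = 1 \wedge R/x$, extended oddly to $\mathbb{R}$; then $z \mapsto z\vartheta_R(z) = \operatorname{sign}(z)(|z| \wedge R)$ is odd, bounded, and piecewise linear, hence concave on $\mathbb{R}_+$, so the lemma applies and the inner $y$-integral on its right-hand side is precisely $\Theta(xe^{-t/\rho})$, where $\Theta$ is the function of Lemma \ref{lm:inbThetadef} with $\theta_1 = R$ and $\theta_2 = (\rho t)^{1/\rho}$.

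Next I would pass from $\Theta$ to an integral against $-\Theta''$ by applying Lemma \ref{lm:integrationbyparts} to $\tilde\Theta(x) := \Theta(xe^{-t/\rho})$, which inherits the required oddness, smoothness, and decay of $\tilde\Theta'(x)x^{2-\rho}$. Using the nonnegativity $-\tilde\Theta''(x) \ge 0$ on $\mathbb{R}_+$ from Lemma \ref{lm:inbThetadef}, together with the hypothesis $J_{\Psi_0}(x) := \int (1 \wedge x/z)\Psi_0(z)\,\dd z \ge x^{2-\rho}\lambda_\rho(x/R_0)$, one lower-bounds the result by $-\int \tilde\Theta''(x)\, x^{2-\rho}\lambda_\rho(x/R_0)\,\dd x$. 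A change of variables $w = xe^{-t/\rho}$ collapses every exponential prefactor, and the invariance claim reduces to proving the purely analytic inequality
\begin{equation*}
\int_0^\infty (-\Theta''(w))\, w^{2-\rho}\, \lambda_\rho\bigl(we^{t/\rho}/R_0\bigr)\,\dd w \;\ge\; R^{2-\rho}\, \lambda_\rho(R/R_0) \qquad (\dagger)
\end{equation*}
uniformly in $R, t > 0$ for a suitable $R_0$ depending only on $\rho$.

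The main obstacle is $(\dagger)$, which no longer involves $\Psi_0$ and is a property of the stable density $v_\rho$ alone. The case $R \le R_0$ is trivial since $\lambda_\rho(R/R_0) = 0$, so take $R > R_0$ and set $r = R/R_0 > 1$, $\tau = (\rho t)^{1/\rho}/R$. The scaling identity $\Theta(Rs) = R\,\tilde\Theta_\tau(s)$, where $\tilde\Theta_\tau$ denotes the function of Lemma \ref{lm:inbThetadef} with $\theta_1 = 1$ and $\theta_2 = \tau$, together with the explicit form $w^{2-\rho}\lambda_\rho(w/R_0) = (w^{2-\rho} - R_0^{(2-\rho)/2}w^{(2-\rho)/2})_+$, rewrites $(\dagger)$ as the dimensionless comparison
\begin{equation*}
\int_0^\infty (-\tilde\Theta_\tau''(s)) \bigl[s^{2-\rho} - r^{-(2-\rho)/2}\, s^{(2-\rho)/2}\bigr]_+ \,\dd s \;\ge\; 1 - r^{-(2-\rho)/2}
\end{equation*}
for all $r > 1$ and $\tau > 0$. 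I would analyse this via the probabilistic representation
$\int_0^\infty (-\tilde\Theta_\tau''(s))\, s^\alpha\,\dd s = \mathbb{E}\bigl[\operatorname{sign}(1+\tau Z)|1+\tau Z|^\alpha\bigr]$ with $Z \sim v_\rho$, noting that at $\tau = 0$ equality holds already with $1$ in place of $r^{-(2-\rho)/2}$. The hard part is showing that the deficit introduced by $\tau > 0$ in the $\alpha = 2-\rho$ moment is dominated by the corresponding deficit in the $\alpha = (2-\rho)/2$ moment, rescaled by the fixed factor $R_0^{(2-\rho)/2}$, and that this comparison survives uniformly across all $\tau$ — the small-$\tau$ regime being controlled by the heavy-tailed contribution (the second moment of $v_\rho$ diverging), and the large-$\tau$ regime by the odd smoothing that transports mass across the origin. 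The square-root exponent $(2-\rho)/2$ appearing in the definition of $\lambda_\rho$ is tailored exactly to make both regimes absorbable, and $R_0$ is finally chosen (depending only on $\rho$, hence independently of $a$ and $\varepsilon$) so that the bound holds throughout.
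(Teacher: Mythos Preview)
Your reduction to $(\dagger)$ is correct and matches the paper's route through Lemmas~\ref{lm:lowerboundfirstlemma}, \ref{lm:integrationbyparts} and~\ref{lm:inbThetadef}. The gap is in what follows.

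First, a concrete slip: in passing from $(\dagger)$ to your dimensionless comparison you have lost the factor $e^{t/\rho}$ that sits inside $\lambda_\rho$ on the left-hand side. The coefficient of $s^{(2-\rho)/2}$ inside the bracket should be $(re^{t/\rho})^{-(2-\rho)/2}$, not $r^{-(2-\rho)/2}$. This is not cosmetic. At $t=0$ the two sides of $(\dagger)$ are \emph{equal}, so any slack must come from the $t$-dependence; the $e^{t/\rho}$ is precisely the transport gain that provides it. The inequality you actually wrote, with $r^{-(2-\rho)/2}$ on both sides, starts from equality at $\tau=0$ and has only the diffusive perturbation acting---there is no reason it should hold for small $\tau>0$, and your closing paragraph does not supply one. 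Note also that since $t=(rR_0\tau)^\rho/\rho$, the missing factor cannot be absorbed into $r$ and $\tau$ alone; your ``dimensionless'' form still depends on $R_0$, which is in fact how $R_0$ eventually gets chosen.

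Second, and more structurally: you try to establish $(\dagger)$ for all $t>0$, but this is both unnecessary and where your argument is weakest. The semigroup property (already proved for $(S_a(t))_{t\ge0}$ on the unit ball) means it suffices to verify invariance of $\mathcal{Y}_\rho$ for $t$ in some short interval $(0,\delta]$; equivalently, one only needs $(\dagger)$ to first order as $t\to0^+$. This is what the paper does. Writing $\xi=R/R_0$ and expanding both effects linearly in $t$, the transport gain coming from $e^{t/\rho}$ inside $\lambda_\rho$ contributes $+\tfrac{2-\rho}{2\rho}\,t\,\xi^{-(2-\rho)/2}+o(t)$, while the diffusive perturbation (estimated via the tail $v_\rho(z)\sim|z|^{-1-\rho}$) is bounded below by $-\tfrac{\rho\kappa}{R_0^{\rho}}\,t\,\xi^{-\rho}$ for a constant $\kappa$ depending only on $\rho$. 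Since $\rho>(2-\rho)/2$, the exponent in the loss is more negative, and choosing $R_0^\rho\ge 4\rho^2\kappa/(2-\rho)$ makes the net $O(t)$ term nonnegative for every $\xi>1$. That is the entire mechanism; no large-$\tau$ analysis is needed, and your sketch of one (``odd smoothing that transports mass across the origin'') does not correspond to an argument that would close.
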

\begin{proof}
Since the semigroup $(S_a(t))_{t\geq0}$ maps the unit ball of $\mathcal{X}_\rho$ into itself, we only need to prove preservation of the lower bound [cf.~\eqref{eq:lowerboundinY}] for some $R_0>0$. Let thus $R_0>0$ be arbitrary for now, and fix any $\Psi_0\in\mathcal{Y}_\rho=\mathcal{Y}_\rho(R_0)$. We will then write $\Psi_a(t,x)=S(t)\Psi_0(x)$ for all $t\geq0$.

Now, for any $R>0$ the function $\vartheta(z)=1\wedge|\frac{R}{z}|$ satisfies the assumptions from Lemma \ref{lm:lowerboundfirstlemma}, so by \eqref{eq:conservedlowerbound_1} and the change of variables $y\rightarrow ye^{-t/\rho}$ we find for all $t\in[0,T]$ and all $R>0$ that
\begin{multline}\label{eq:lbnd_2_0}
\int_{[0,\infty)}\left(1\wedge\tfrac{R}{x}\right)\Psi_a(t,x)\dd x\\
\geq e^{t(\rho-2)/\rho}\int_{[0,\infty)}\left(\int_\mathbb{R}\frac{y\left(1\wedge|\frac{Re^{t/\rho}}{y}|\right)}{(\rho te^t)^{1/\rho}}v_\rho\left(\frac{x-y}{(\rho te^t)^{1/\rho}}\right)\dd y\right)\tfrac1x\Psi_0(x)\dd x,
\end{multline}
where we recall that $v_\rho$ is the self-similar profile associated to the fundamental solution of \eqref{eq:ide} with $\alpha=\rho$ (cf.~Lemma \ref{lm:fractionalheatequation}). Multiplying then \eqref{eq:lbnd_2_0} by $R^{\rho-2}$, and using Lemmas \ref{lm:integrationbyparts} and \ref{lm:inbThetadef}, we obtain
\begin{multline}\label{eq1}
R^{\rho-2}\int_{[0,\infty)}\left(1\wedge\tfrac{R}{x}\right)\Psi_a(t,x)\dd x\\
\geq \left(Re^{t/\rho}\right)^{\rho-2}\int_{[0,\infty)}\left(v_\rho\left(\tfrac{x-Re^{t/\rho}}{(\rho te^t)^{1/\rho}}\right)-v_\rho\left(\tfrac{x+Re^{t/\rho}}{(\rho te^t)^{1/\rho}}\right)\right)\\
\times\left(\int_{[0,\infty)}\left(1\wedge\tfrac{x}{z}\right)\Psi_0(z)\dd z\right)\tfrac{\dd x}{(\rho te^t)^{1/\rho}},
\end{multline}
which, by the inequality in \eqref{eq:signofThetadoubleprime} and using the lower bound on $\Psi_0\in\mathcal{Y}_\rho$ (cf.~Definition \ref{df:defofY}), we can bound from below by
\begin{multline}\label{eq2}
\left(Re^{t/\rho}\right)^{\rho-2}\int_{[0,\infty)}\left(v_\rho\left(\tfrac{x-Re^{t/\rho}}{(\rho te^t)^{1/\rho}}\right)-v_\rho\left(\tfrac{x+Re^{t/\rho}}{(\rho te^t)^{1/\rho}}\right)\right)x^{2-\rho}\lambda_\rho\big(\tfrac{x}{R_0}\big)\tfrac{\dd x}{(\rho te^t)^{1/\rho}}\\
=\left(Re^{t/\rho}\right)^{\rho-2}\int_{\mathbb{R}}v_\rho\left(\tfrac{x-Re^{t/\rho}}{(\rho te^t)^{1/\rho}}\right)x|x|^{1-\rho}\lambda_\rho\big(\tfrac{x}{R_0}\big)\tfrac{\dd x}{(\rho te^t)^{1/\rho}}\\
=\left(\tfrac{Re^{t/\rho}}{R_0}\right)^{\rho-2}u\left(\tfrac{\rho te^t}{R_0^\rho},\tfrac{Re^{t/\rho}}{R_0}\right),
\end{multline}
where $u$ is the solution to \eqref{eq:ide} with $\alpha=\rho$ and $u(0,x)=x|x|^{1-\rho}\lambda_\rho(x)$. In view of the fact that the left hand side of \eqref{eq1} is nonnegative for all $R>0$, and since the right hand side of \eqref{eq2} does not depend on $\Psi_0$ any more, it now only remains to show that, for all $R>R_0$, with $R_0>0$ chosen appropriately, the left hand side of \eqref{eq2} is bounded from below by $\lambda_\rho(\frac{R}{R_0})$ as $t\rightarrow0$, or equivalently that
\begin{equation}\label{eq2b}
\left(\xi e^{t/\rho}\right)^{\rho-2}u\left(\tfrac{\rho te^t}{R_0^\rho},\xi e^{t/\rho}\right)\geq\lambda_\rho(\xi)\text{ as }t\rightarrow0,\quad\text{for all }\xi>1.
\end{equation}
Let now $u^*$ be the solution to \eqref{eq:ide} with $\alpha=\rho$ and
\begin{equation}\nonumber
u^*(0,x)=x|x|^{1-\rho}\left(1-|x|^{-(2-\rho)/2}\right)\in C(\mathbb{R})\cap L^1(\mathbb{R};|x|^{-\rho-1}\dd x).
\end{equation}
We then observe that $u^*(0,\cdot)\leq u(0,\cdot)$ on $\mathbb{R}_+$, so by the maximum principle in Lemma \ref{lm:oddfunctionsunderide} we have that $u^*(\tau,\cdot)\leq u(\tau,\cdot)$ on $\mathbb{R}_+$ for all $\tau>0$, and in particular
\begin{equation}\label{eq:lowerboundeq0}
\left(\xi e^{t/\rho}\right)^{\rho-2}u\left(\tfrac{\rho te^t}{R_0^\rho},\xi e^{t/\rho}\right)\geq\left(\xi e^{t/\rho}\right)^{\rho-2}u^*\left(\tfrac{\rho te^t}{R_0^\rho},\xi e^{t/\rho}\right)
\end{equation}
for $\xi>1$ and $t>0$. Now, for $x\geq0$ and $\tau>0$, we expand the convolution of $u^*(0,\cdot)$ with the fundamental solution to \eqref{eq:ide} to obtain
\begin{multline}\nonumber
x^{\rho-2}u^*(\tau,x)=x^{\rho-2}\int_\mathbb{R}(x-y)\left|x-y\right|^{1-\rho}\left(1-\left|x-y\right|^{-(2-\rho)/2}\right)v_\rho\left(\tfrac{y}{\tau^{1/\rho}}\right)\tfrac{\dd y}{\tau^{1/\rho}}\\
=\int_\mathbb{R}\left(1-\tfrac{y}{x}\right)\left|1-\tfrac{y}{x}\right|^{1-\rho}\left(1-x^{-(2-\rho)/2}\left|1-\tfrac{y}{x}\right|^{-(2-\rho)/2}\right)v_\rho\left(\tfrac{y}{\tau^{1/\rho}}\right)\tfrac{\dd y}{\tau^{1/\rho}},
\end{multline}
which, using the notation
\begin{equation}\nonumber
W(x,\zeta):=\left(1-\zeta\right)\left|1-\zeta\right|^{1-\rho}\left(1-x^{-(2-\rho)/2}\left|1-\zeta\right|^{-(2-\rho)/2}\right),
\end{equation}
and the facts that $\int_\mathbb{R}v_\rho(z)\dd z=1$ and $\int_\mathbb{R}zv_\rho(z)\dd z=0$, yields
\begin{multline}\label{eq:lowerboundeq1}
x^{\rho-2}u^*(\tau,x)=W(x,0)\\+\int_\mathbb{R}\left(W\left(x,\tfrac{y}{x}\right)-W(x,0)-\tfrac{y}{x}W_\zeta(x,0)\right)v_\rho\left(\tfrac{y}{\tau^{1/\rho}}\right)\tfrac{\dd y}{\tau^{1/\rho}},
\end{multline}
Next, we estimate $v_\rho$ by its tail behaviour (cf.~Lemma \ref{lm:fractionalheatequation}), to find that we can bound the absolute value of the second term on the right hand side of \eqref{eq:lowerboundeq1} by
\begin{multline}\nonumber
\int_\mathbb{R}\left|W\left(x,\tfrac{y}{x}\right)-W(x,0)-\tfrac{y}{x}W_\zeta(x,0)\right|\left|\tfrac{y}{\tau^{1/\rho}}\right|^{-1-\rho}\tfrac{\dd y}{\tau^{1/\rho}}\\
\leq\frac{\tau}{x^\rho}\int_\mathbb{R}\left|W(x,\zeta)-W(x,0)-\zeta W_\zeta(x,0)\right|\left|\zeta\right|^{-1-\rho}\dd \zeta=:\frac{\tau}{x^\rho}K(x),
\end{multline}
so noticing then that $K(x)$ can be uniformly bounded for $x\geq1$ by a constant $\kappa>0$, we obtain that
\begin{equation}\label{eq:lowerboundeq2}
x^{\rho-2}u^*(\tau,x)\geq\left(1-x^{-(2-\rho)/2}\right)-\frac{\tau\kappa}{x^\rho}\text{ for all }x\geq1\text{ and all }\tau>0.
\end{equation}
Combining thus \eqref{eq:lowerboundeq0} and \eqref{eq:lowerboundeq2} we find for $\xi>1$ and $t>0$ that
\begin{multline}\label{eq:lowerboundeq3}
\left(\xi e^{t/\rho}\right)^{\rho-2}u\left(\tfrac{\rho te^t}{R_0^\rho},\xi e^{t/\rho}\right)
\geq\left(1-\left(\xi e^{t/\rho}\right)^{-(2-\rho)/2}\right)-\frac{\frac{\rho te^t}{R_0^\rho}\cdot\kappa}{\left(\xi e^{t/\rho}\right)^\rho}\\
=\left(1-\xi^{-(2-\rho)/2}\right)+\left(1-e^{-t(2-\rho)/(2\rho)}\right)\xi^{-(2-\rho)/2}-t\tfrac{\rho\kappa}{R_0^\rho}\xi^{-\rho},
\end{multline}
where we note that the first term on the right hand side equals $\lambda_\rho(\xi)$ for $\xi>1$ (since it is positive). Expanding then finally the exponential in the second term, we can bound the right hand side of \eqref{eq:lowerboundeq3}, for $t\rightarrow0$, from below by
\begin{equation}\nonumber
\lambda_\rho(\xi)+t\left(\tfrac{2-\rho}{4\rho}\xi^{-(2-\rho)/2}-\tfrac{\rho\kappa}{R_0^\rho}\xi^{-\rho}\right),
\end{equation}
where, provided that $R_0>0$ is such that $R_0^\rho\geq\frac{4\rho^2\kappa}{2-\rho}$, the second term is nonnegative for $\xi>1$, which implies that \eqref{eq2b} holds, and thus proves the claim.
\end{proof}

\subsection{Proof of Proposition \ref{pr:wellposednessforregularizedevolution}}
\begin{proof}[Proof of Proposition \ref{pr:wellposednessforregularizedevolution}]
Let $(S_a(t))_{t\geq0}$ be the semigroup on the unit ball in $\mathcal{X}_\rho$ that was defined in Definition \ref{df:definitionofsemigroup}. Following Proposition \ref{pr:thelowerbound} there then exists some $R_0>0$ such that the set $\mathcal{Y}_\rho=\mathcal{Y}_\rho(R_0)$ (cf.~Definition \ref{df:defofY}) is invariant under the evolution of $(S_a(t))_{t\geq0}$. It thus remains to check, for any fixed $t>0$, that the mapping $\Psi_0\mapsto\Psi_a(t,\cdot):=S_a(t)\Psi_0$ is weakly-$*$ continuous, which by continuity of the change of variables \eqref{eq:changeofvariables} is equivalent to checking weak-$*$ continuity of $\Psi_0\mapsto H_a(t,\cdot)$.

Now, for any two measures $\Psi_1,\Psi_2\in\{\|\mu\|_\rho\leq1\}\cap\mathcal{X}_\rho$, we write $\Psi_a^i(s,\cdot):=S_a(s)\Psi_i$, $i\in\{1,2\}$, for all $s\in[0,t]$, and we let $H_a^i\in C([0,t]:\mathcal{X}_\rho)$ be defined via the change of variables \eqref{eq:changeofvariables}, which are functions that satisfy \eqref{eq:approximatemildsolution} for all $\psi\in C^1([0,t]:\mathcal{B}_0)$. What we need to show is that for any $\psi^*\in\mathcal{B}_0$ and any $\delta>0$ small, there exists a weakly-$*$ open set $\mathcal{U}=\mathcal{U}(\psi^*,\delta)$ such that $\Psi_1-\Psi_2\in\mathcal{U}$ implies
\begin{equation}\label{eq:continuityofsemigroup_1}
\left|\int_{[0,\infty)}\psi^*(X)H_a^1(t,X)\dd X-\int_{[0,\infty)}\psi^*(X)H_a^2(t,X)\dd X\right|<\delta.
\end{equation}
By a density argument we may restrict ourselves to $\psi^*\in\mathcal{B}_1$, and by \eqref{eq:approximatemildsolution} we know that
\begin{multline}\label{eq:continuityofsemigroup_2}
\int_{[0,\infty)}\psi^*(X)\left(H_a^1(t,X)-H_a^2(t,X)\right)\dd X\\=\int_{[0,\infty)}\psi(0,X)\left(\Psi_1(X)-\Psi_2(X)\right)\dd X,
\end{multline}
if $\psi\in C^1([0,t]:\mathcal{B}_1)$ satisfies $\psi(t,\cdot)=\psi^*$ on $[0,\infty]$, and, for all $s\in(0,t)$,
\begin{multline}\nonumber
0=\int_{[0,\infty)}\left(\psi_s(s,X)+\tfrac{\rho-1}{\rho}\psi(s,X)\right)\left(H_a^1(s,X)-H_a^2(s,X)\right)\dd X\\
+q(H_a^1,\psi,s)-q(H_a^2,\psi,s)
\end{multline}
with
\begin{multline}\nonumber
q(H_a^i,\psi,s)=\iint_{\{X>Y>0\}}\frac{e^{s/\rho}H_a^i(s,X)(\phi_{ae^{s/\rho}}\ast H_a^i(s,\cdot))(Y)}{((X+\varepsilon e^{s/\rho})(Y+\varepsilon e^{s/\rho}))^{3/2}}\\\times\mathcal{D}_2^*[\psi(s,\cdot)](X,Y)\dd X\dd Y.
\end{multline}
Using the identity $h_1h_1^*-h_2h_2^*=\frac12(h_1-h_2)(h_1^*+h_2^*)+\frac12(h_1+h_2)(h_1^*-h_2^*)$, we now rewrite the difference $q(H_a^1,\psi,s)-q(H_a^2,\psi,s)$ as
\begin{multline}\nonumber
\frac12\iint_{\{X>Y>0\}}\Big[(H_a^1(s,X)-H_a^2(s,X))(\phi_{ae^{s/\rho}}\ast (H_a^1(s,\cdot)+H_a^2(s,\cdot)))(Y)\\
+(H_a^1(s,X)+H_a^2(s,X))(\phi_{ae^{s/\rho}}\ast (H_a^1(s,\cdot)-H_a^2(s,\cdot)))(Y)\Big]\\\times\frac{e^{s/\rho}\mathcal{D}_2^*[\psi(s,\cdot)](X,Y)}{((X+\varepsilon e^{s/\rho})(Y+\varepsilon e^{s/\rho}))^{3/2}}\dd X\dd Y,
\end{multline}
which in turn equals
\begin{multline}\nonumber
\int_{[0,\infty)}\left(\ell(H_a^1,\psi;s,X)+\ell(H_a^2,\psi;s,X)+\ell^*(H_a^1,\psi;s,X)+\ell^*(H_a^2,\psi;s,X)\right)\\\times\left(H_a^1(s,X)-H_a^2(s,X)\right)\dd X,
\end{multline}
with
\begin{equation}\nonumber
\ell(H_a^i,\psi;s,X)=\frac12\int_0^X\frac{e^{s/\rho}(\phi_{ae^{s/\rho}}\ast H_a^i(s,\cdot))(Y)}{((X+\varepsilon e^{s/\rho})(Y+\varepsilon e^{s/\rho}))^{3/2}}\mathcal{D}_2^*[\psi(s,\cdot)](X,Y)\dd Y,
\end{equation}
and
\begin{multline}\nonumber
\ell^*(H_a^i,\psi;s,X)=\frac12\int_{[0,\infty)}\phi_{ae^{s/\rho}}(Y-X)\int_{[Y,\infty]}\frac{e^{s/\rho}H_a^i(s,X')}{((X'+\varepsilon e^{s/\rho})(Y+\varepsilon e^{s/\rho}))^{3/2}}\\\times\mathcal{D}_2^*[\psi(s,\cdot)](X',Y)\dd X'\dd Y.
\end{multline}
We thus obtain the linear backward in time boundary value problem
\begin{equation}\nonumber
\begin{cases}
\psi_s=-\tfrac{\rho-1}{\rho}\psi-\left(\ell(H_a^1,\psi)+\ell(H_a^2,\psi)+\ell^*(H_a^1,\psi)+\ell^*(H_a^2,\psi)\right),\\
\psi(t,\cdot)=\psi^*,
\end{cases}
\end{equation}
which can be uniquely solved in $C^1([0,t],\mathcal{B}_1)$ by a standard fixed point argument, since $\ell(H_a^i,\cdot)$ and $\ell^*(H_a^i,\cdot)$ are bounded linear operators from $\mathcal{B}_1$ to itself. Moreover, by estimate \eqref{eq:normestimate} we find that there exists a constant $C>0$, independent of $\psi^*$, such that $\|\psi(0,\cdot)\|_{\mathcal{B}_1}\leq C\|\psi^*\|_{\mathcal{B}_1}$. Now, by compactness of $\|\cdot\|_{\mathcal{B}_1}$-bounded sets in $\mathcal{B}_0$, we can select finitely many $\omega_1,\ldots,\omega_n\in\mathcal{B}_0$ such that $\min_i\|\psi(0,\cdot)-\omega_i\|_{\mathcal{B}_0}<\tfrac13\delta$. For any $i\in\{1,\ldots,n\}$ we then write the right hand side of \eqref{eq:continuityofsemigroup_2} as
\begin{multline}\nonumber
\int_{[0,\infty)}\left(\psi(0,X)-\omega_i(X)\right)\left(\Psi_1(X)-\Psi_2(X)\right)\dd X,\\+\int_{[0,\infty)}\omega_i(X)\left(\Psi_1(X)-\Psi_2(X)\right)\dd X,
\end{multline}
so defining finally
\begin{equation}\nonumber
\mathcal{U}=\left\{\mu\in\mathcal{M}([0,\infty))\,:\,\|\mu\|_\rho<\infty\text{ and }\max_i\left|\int_{[0,\infty)}\omega_i(X)\mu(X)\dd X\right|<\tfrac13\delta\right\},
\end{equation}
it follows, by choosing $i\in\{1,\ldots,n\}$ such that $\|\psi(0,\cdot)-\omega_i\|_{\mathcal{B}_0}$ is minimal, that if $\Psi_1-\Psi_2\in\mathcal{U}$, then \eqref{eq:continuityofsemigroup_1} holds, which completes the proof.
\end{proof}

\subsection{Proof of Theorem \ref{tm:existence}}
\begin{proof}[Proof of Theorem \ref{tm:existence}]
In view of Remark \ref{rk:restriction} we restrict ourselves to the case $\rho\in(1,2)$. For any $\varepsilon>0$ fixed and $a\in(0,\frac\varepsilon2)$ arbitrary, let $R_0>0$ and $(S_a(t))_{t\geq0}$ be as obtained in Proposition \ref{pr:wellposednessforregularizedevolution}. Then, by a variant of Tychonoff's theorem (cf.~\cite[Thm.~1.2]{EMR05}) there exists some $\Psi_a^\varepsilon\in\mathcal{Y}_\rho$ for which for all $t\geq0$ there holds $S(t)\Psi_a^\varepsilon=\Psi_a^\varepsilon$, and which for all $\vartheta\in\mathcal{B}_1$ satisfies
\begin{multline}\label{eq:Tex_1}
\frac1\rho\int_{[0,\infty)}\left(x\vartheta'(x)+(2-\rho)\vartheta(x)\right)\Psi_a^\varepsilon(x)\dd x\\
=\iint_{\{x>y>0\}}\frac{\Psi_a^\varepsilon(x)(\phi_a\ast\Psi_a^\varepsilon)(y)}{((x+\varepsilon)(y+\varepsilon))^{3/2}}\mathcal{D}_2^*[\vartheta](x,y)\dd x\dd y.
\end{multline}
Further, since $\mathcal{Y}_\rho$ is compact and independent of $a$, there exist $a_n\rightarrow0$ and $\Psi^\varepsilon\in\mathcal{Y}_\rho$ such that $\Psi_{a_n}^\varepsilon\wsc\Psi^\varepsilon$ in $\mathcal{X}_\rho$, and we will see that $\Psi^\varepsilon$ satisfies
\begin{multline}\label{eq:Tex_2}
\frac1\rho\int_{[0,\infty)}\left(x\vartheta'(x)+(2-\rho)\vartheta(x)\right)\Psi^\varepsilon(x)\dd x\\
=\frac12\iint_{[0,\infty)^2}\frac{\Psi^\varepsilon(x)\Psi^\varepsilon(y)}{((x+\varepsilon)(y+\varepsilon))^{3/2}}\mathcal{D}_2^*[\vartheta](x,y)\dd x\dd y,
\end{multline}
for all $\vartheta\in\mathcal{B}_1$. Indeed, writing $a$ for $a_n$ and $a\rightarrow0$ for $a_n\rightarrow0$, for any $\vartheta\in\mathcal{B}_1$ fixed it follows from the definition of weak-$*$ convergence that the left hand side of \eqref{eq:Tex_1} converges to the left hand side of \eqref{eq:Tex_2}. Convergence of the right hand side is more tricky since in the limit there might be a nontrivial contribution along the diagonal $\{x=y\geq0\}$. Expanding the convolution and using Fubini, we first of all rewrite the right hand side of \eqref{eq:Tex_1} as
\begin{multline}\nonumber
\int_{[0,\infty)}\frac{\Psi_a^\varepsilon(x)}{(x+\varepsilon)^{3/2}}\left(\int_0^x\left(\int_{[0,\infty)}\phi_a(y-z)\Psi_a^\varepsilon(z)\dd z\right)\frac{\mathcal{D}_2^*[\vartheta](x,y)}{(y+\varepsilon)^{3/2}}\dd y\right)\dd x\\
=\iint_{[0,\infty)^2}\frac{\Psi_a^\varepsilon(x)\Psi_a^\varepsilon(z)}{((x+\varepsilon)(z+\varepsilon))^{3/2}}\left(\int_0^x\frac{\phi_a(y-z)\mathcal{D}_2^*[\vartheta](x,y)}{((z+\varepsilon)^{-1}(y+\varepsilon))^{3/2}}\dd y\right)\dd x\dd z,
\end{multline}
which by symmetrization equals
\begin{equation}\label{eq:T1_2}
\frac12\iint_{[0,\infty)^2}\frac{\Psi_a^\varepsilon(x)\Psi_a^\varepsilon(z)}{((x+\varepsilon)(z+\varepsilon))^{3/2}}\mathcal{E}_a(x,z)\dd x\dd z,
\end{equation}
with
\begin{equation}\nonumber
\mathcal{E}_a(x,z)
=\int_{-\infty}^x\frac{\phi_a(y-z)\mathcal{D}_2^*[\vartheta](x,y)}{((z+\varepsilon)^{-1}(y+\varepsilon))^{3/2}}\dd y
+\int_{-\infty}^z\frac{\phi_a(y-x)\mathcal{D}_2^*[\vartheta](z,y)}{((x+\varepsilon)^{-1}(y+\varepsilon))^{3/2}}\dd y,
\end{equation}
where the extension of the domains of integration until $-\infty$ is possible if we extend $\mathcal{D}_2^*[\vartheta]$ to a continuous function on $\mathbb{R}^2$ by setting $\mathcal{D}_2^*[\vartheta]=0$ on $\mathbb{R}^2\setminus\mathbb{R}_+^2$. Observing next that
\begin{multline}\nonumber
\int_{-\infty}^x\phi_a(y-z)\dd y+\int_{-\infty}^z\phi_a(y-x)\dd y
=\int_{-\infty}^{x-z}\phi_a(y)\dd y+\int_{-\infty}^{z-x}\phi_a(y)\dd y\\
=\int_{-\infty}^{x-z}\phi_a(y)\dd y+\int_{x-z}^\infty\phi_a(-y)\dd y=1\text{ for all }x,z\in\mathbb{R},
\end{multline}
where the last equality holds since $\phi_a$ is even, we then find by continuity and symmetry of $\mathcal{D}_2^*[\vartheta]$ that for all $\delta>0$ there exists some $a_\delta>0$ such that
\begin{equation}\nonumber
\left|\mathcal{E}_a(x,z)-\mathcal{D}_2^*[\vartheta](x,z)\right|<\delta\text{ for all }a\in(0,a_\delta)\text{ and all }x,z\geq0.
\end{equation}
Using then the fact that $(x+\varepsilon)^{-3/2}\leq\varepsilon^{-3/2}(1+\frac{x}{\varepsilon})^{-1}\leq\varepsilon^{-3/2}(1\wedge\frac{\varepsilon}{x})$ for $x\geq0$, the definition of the norm $\|\cdot\|_\rho$, and the fact that $\|\Psi_a^\varepsilon\|_\rho=1$ for all $\Psi_a^\varepsilon\in\mathcal{Y}_\rho$, we obtain for $a\in(0,a_\delta)$ that
\begin{multline}\nonumber
\left|\frac12\iint_{[0,\infty)^2}\frac{\Psi_a^\varepsilon(x)\Psi_a^\varepsilon(z)}{((x+\varepsilon)(z+\varepsilon))^{3/2}}\left(\mathcal{E}_a(x,z)-\mathcal{D}_2^*[\vartheta](x,z)\right)\dd x\dd z\right|\\
\leq\frac\delta2\left(\varepsilon^{-3/2}\int_{[0,\infty)}\left(1\wedge\tfrac{\varepsilon}{x}\right)\Psi_a^\varepsilon(x)\dd x\right)^2
\leq\tfrac12\varepsilon^{1-2\rho}\times\delta.
\end{multline}
Further, as $(x+\varepsilon)^{-3/2}\leq r^{-1/2}\varepsilon^{-1}(1\wedge\frac{\varepsilon}{x})$ for $x\geq r>0$, we similarly get that
\begin{multline}\nonumber
\left|\frac12\iint_{[0,\infty)^2\setminus[0,r]^2}\frac{\Psi_a^\varepsilon(x)\Psi_a^\varepsilon(z)}{((x+\varepsilon)(z+\varepsilon))^{3/2}}\mathcal{D}_2^*[\vartheta](x,z)\dd x\dd z\right|\\
\leq4\|\vartheta\|_{\mathcal{B}_0}\left(\int_{[0,\infty)}\frac{\Psi_a^\varepsilon(x)}{(x+\varepsilon)^{3/2}}\dd x\right)\left(\int_{(r,\infty)}\frac{\Psi_a^\varepsilon(x)}{(x+\varepsilon)^{3/2}}\dd x\right)\\
\leq4\|\vartheta\|_{\mathcal{B}_0}\varepsilon^{(3-4\rho)/2}\times r^{-1/2},
\end{multline}
so recalling that on compact squares the finite sums of products of single variable functions are dense in the uniform topology in the continuous functions, it follows that in the limit $a\rightarrow0$, \eqref{eq:T1_2} becomes
\begin{equation}\nonumber
\frac12\iint_{[0,r]^2}\frac{\Psi^\varepsilon(x)\Psi^\varepsilon(y)}{((x+\varepsilon)(y+\varepsilon))^{3/2}}\mathcal{D}_2^*[\vartheta](x,y)\dd x\dd y+O(\delta)+O(r^{-1/2})
\end{equation}
as $\delta\rightarrow0$ and $r\rightarrow\infty$, and taking these limits we obtain that $\Psi^\varepsilon$ indeed satisfies \eqref{eq:Tex_2} for all $\vartheta\in\mathcal{B}_1$.\\

In order to be able to take the limit $\varepsilon\rightarrow0$ we will need the following estimate. We show that there is a constant $K>0$, independent of $\varepsilon$, such that
\begin{equation}\label{eq:Texlim_1}
\int_{(0,z]}\frac{\sqrt{x}\Psi^\varepsilon(x)}{(x+\varepsilon)^{3/2}}\dd x\leq K\cdot z^{1-\rho/2}\text{ for all }z>0.
\end{equation}
To that end, we note that $x\vartheta'(x)+(2-\rho)\vartheta(x)=[x\vartheta(x)]_x-(\rho-1)\vartheta(x)$, and choosing $\vartheta\in\mathcal{B}_1$ such that the mapping $z\mapsto z\vartheta(z)$ is nondecreasing and concave we obtain from \eqref{eq:Tex_2} that
\begin{multline}\label{eq:Texlim_2}
\tfrac{\rho-1}\rho\int_{[0,\infty)}\vartheta(x)\Psi^\varepsilon(x)\dd x\\
\geq-\frac12\iint_{[0,\infty)^2}\frac{\Psi^\varepsilon(x)\Psi^\varepsilon(y)}{((x+\varepsilon)(y+\varepsilon))^{3/2}}\mathcal{D}_2^*[\vartheta](x,y)\dd x\dd y,
\end{multline}
where $\mathcal{D}_2^*[\vartheta]\leq0$. Now, by an approximation argument \eqref{eq:Texlim_2} also holds with $\vartheta_r(x)=(1\wedge\frac{r}{x})$, for arbitrary $r>0$. We then use the definition of the norm $\|\cdot\|_\rho$, and the fact that $\|\Psi^\varepsilon\|_\rho=1$, to estimate the left hand side of \eqref{eq:Texlim_2} by $\frac{\rho-1}{\rho}r^{2-\rho}$. Noting furthermore that the second difference of an affine function is zero, we find for $x,y\geq0$ that
\begin{multline}\nonumber
\mathcal{D}_2^*[\vartheta_r](x,y)=\mathcal{D}_2[z\vartheta_r(z)-r](x,y)=-\mathcal{D}_2[(r-z)_+](x,y)\\
=-\big[(r-(x+y))_++(r-|x-y|)_+-2(r-(x\vee y))_+\big]\\
=-\big[(x+y-r)_+\wedge(r-|x-y|)_+\big],
\end{multline}
and it follows from \eqref{eq:Texlim_2} that
\begin{equation}\label{eq:Texlim_3}
\tfrac{\rho-1}{\rho}r^{2-\rho}\geq\frac12\iint_{[0,\infty)^2}\frac{\sqrt{x}\Psi^\varepsilon(x)}{(x+\varepsilon)^{3/2}}\frac{\sqrt{y}\Psi^\varepsilon(y)}{(y+\varepsilon)^{3/2}}\tfrac{(x+y-r)_+\wedge(r-|x-y|)_+}{\sqrt{xy}}\dd x\dd y.
\end{equation}
Set now $\alpha=\frac13(\sqrt7+1)>1$, which solves $1-(\alpha^2-\alpha)=\frac12\alpha^2$, and notice that $(xy)^{-1/2}(r-|x-y|)\geq (\alpha^2r)^{-1}(r-(\alpha^2-\alpha)r)=\frac12$ for $x,y\in(\alpha r,\alpha^2 r]$. We then restrict the domain of integration on the right hand side of \eqref{eq:Texlim_3} to $(\alpha r,\alpha^2 r]^2$ to obtain
\begin{equation}\nonumber
\frac14\left(\int_{(\alpha r,\alpha^2 r]}\frac{\sqrt{x}\Psi^\varepsilon(x)}{(x+\varepsilon)^{3/2}}\dd x\right)^2\leq\tfrac{\rho-1}{\rho}r^{2-\rho},
\end{equation}
hence there holds
\begin{equation}\nonumber
\int_{(\alpha^{-1}r,r]}\frac{\sqrt{x}\Psi^\varepsilon(x)}{(x+\varepsilon)^{3/2}}\dd x\leq\left(2\alpha^{\rho-2}\sqrt{\tfrac{\rho-1}{\rho}}\right)r^{1-\rho/2}\text{ for all }r>0,
\end{equation}
and using the decomposition $(0,z]=\bigcup_{j=0}^\infty(\alpha^{-j-1}z,\alpha^{-j}z]$ we obtain \eqref{eq:Texlim_1}.\\

Now, as $\mathcal{Y}_\rho$ is independent of $\varepsilon$, there also exist $\varepsilon_n\rightarrow0$ and $\Psi_\rho\in\mathcal{Y}_\rho$ such that $\Psi^{\varepsilon_n}\wsc\Psi_\rho$ in $\mathcal{X}_\rho$. Writing then $\varepsilon$ for $\varepsilon_n$ and $\varepsilon\rightarrow0$ for $\varepsilon_n\rightarrow0$, the left hand side of \eqref{eq:Tex_2}, by definition of weak-$*$ convergence, converges to
\begin{multline}\label{eq:Tex_3}
\frac1\rho\int_{[0,\infty)}\left(x\vartheta'(x)+(2-\rho)\vartheta(x)\right)\Psi_\rho(x)\dd x\\
=\frac1\rho\int_{[0,\infty)}\big(x[x\vartheta(x)]_x-(\rho-1)x\vartheta(x)\big)\tfrac1x\Psi_\rho(x)\dd x.
\end{multline}
We next check that \eqref{eq:Texlim_1} carries over to the limit. Let thereto $\eta_\delta\in C(\mathbb{R})$, for $\delta>0$, be nondecreasing with $\eta_\delta=0$ on $(-\infty,\delta)$ and $\eta_\delta=1$ on $(2\delta,\infty)$. Using then \eqref{eq:Texlim_1}, for all $z>0$ we obtain
\begin{multline}\label{eq:Texlim_4}
\int_{(0,z]}\tfrac1x\Psi_\rho(x)\dd x=\lim_{\delta\rightarrow0}\int_{(0,z+2\delta]}\tfrac1x\Psi_\rho(x)(\eta_\delta(x)-\eta_\delta(x-z))\dd x\\
=\lim_{\delta\rightarrow0}\lim_{\varepsilon\rightarrow0}\int_{(0,z+2\delta]}\frac{\sqrt{x}\Psi_\rho(x)}{(x+\varepsilon)^{3/2}}(\eta_\delta(x)-\eta_\delta(x-z))\dd x
\leq K\cdot z^{1-\rho/2}.
\end{multline}
Now, with $\eta_\delta$ as above and using \eqref{eq:Texlim_1}, we find for any $\psi\in C([0,\infty])$ that
\begin{multline}\nonumber
\left|\int_{[0,\infty)}\psi(x)\frac{\sqrt{x}\Psi^\varepsilon(x)}{(x+\varepsilon)^{3/2}}\dd x-\int_{[0,\infty)}\frac{\eta_\delta(x)\psi(x)}{(1+\frac\varepsilon{x})^{3/2}}\frac1x\Psi^\varepsilon(x)\dd x\right|\\
\leq\int_{(0,2\delta]}\frac{\sqrt{x}\Psi^\varepsilon(x)}{(x+\varepsilon)^{3/2}}\psi(x)\dd x\leq 2^{1-\rho/2}K\|\psi\|_{C([0,\infty])}\times\delta^{1-\rho/2},
\end{multline}
so similarly using \eqref{eq:Texlim_4}, and since $\Psi_\rho\in\mathcal{X}_\rho$ implies $\Psi_\rho(\{0\})=0$, we obtain
\begin{equation}\nonumber
\lim_{\varepsilon\rightarrow0}\int_{[0,\infty)}\psi(x)\frac{\sqrt{x}\Psi^\varepsilon(x)}{(x+\varepsilon)^{3/2}}\dd x
=\int_{[0,\infty)}\psi(x)\tfrac1x\Psi_\rho(x)\dd x.
\end{equation}
We thus find that the finite measures $\frac{\sqrt{x}\Psi^\varepsilon(x)}{(x+\varepsilon)^{3/2}}\dd x$ converge with respect to the weak-$*$ topology in $(C([0,\infty]))'$ to some $\Phi_\rho\in\mathcal{X}_2$ that satisfies $\Phi_\rho(x)=\frac1x\Psi_\rho(x)$ for $x>0$. Recall now that $|\mathcal{D}_2^*[\vartheta](x,y)|\leq4\|\vartheta\|_{\mathcal{B}_0}$ for all $x,y\geq0$, and also $|\mathcal{D}_2^*[\vartheta](x,y)|\leq2\|\vartheta\|_{\mathcal{B}_1}(x\wedge y)$. Then $(xy)^{-1/2}\mathcal{D}_2^*[\vartheta](x,y)$ is bounded and continuous on $[0,\infty]^2$, and vanishes uniformly on the boundary, and we thus obtain that the right hand side of \eqref{eq:Tex_2} converges to the right hand side of
\begin{multline}\label{eq:Tex_last}
\frac1\rho\int_{[0,\infty)}\big(x[x\vartheta(x)]_x-(\rho-1)x\vartheta(x)\big)\Phi_\rho(x)\dd x\\
=\iint_{[0,\infty)^2}\frac{\Phi_\rho(x)\Phi_\rho(y)}{\sqrt{xy}}\mathcal{D}_2^*[\vartheta](x,y)\dd x\dd y,
\end{multline}
which $\Phi_\rho$ satisfies for all $\vartheta\in\mathcal{B}_1$ [cf.~\eqref{eq:Tex_3}]. To conclude, we note that by appropriate approximation it can be shown that $\Phi_\rho$ satisfies \eqref{eq:Tex_last} for all $\vartheta$ for which the mapping $z\mapsto z\vartheta(z)$ is differentiable and constant from a certain point onwards towards infinity. The proof is then completed by the observation that any $\varphi\in C_c^1([0,\infty))$ defines such a function via $\varphi(x)-\varphi(0)= x\vartheta(x)$.
\end{proof}

\section{Regularity, and a decay result}\label{sec:regularity}
We start this section with a useful integrability estimate, which can be seen as an improvement on \eqref{eq:Texlim_1}.
\begin{lm}\label{lm:integrallimitnearzero}
Given $\rho\in(1,2]$, if $\Phi_\rho\in\mathcal{X}_2$ satisfies \eqref{eq:selfsimilarprofileM+} for all $\varphi\in C_c^1([0,\infty))$, then there exists a constant $C>0$ such that
\begin{equation}\label{eq:integrallimitnearzero}
\int_{(0,R]}\Phi_\rho(x)\dd x\leq C\cdot\sqrt{R}\text{ for all }R>0.
\end{equation}
\end{lm}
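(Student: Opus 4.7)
The plan is to mimic the derivation of the auxiliary bound \eqref{eq:Texlim_1} from the existence proof, but now working directly with the limit measure $\Phi_\rho$. The key is to use the test function $\varphi(x)=\min(x,r)$. This function is not in $C_c^1([0,\infty))$, but it is Lipschitz and bounded, so by a standard mollification-plus-cutoff argument (cutting $\min(x,r)$ off smoothly outside a large ball, and using that $\Phi_\rho\in\mathcal{X}_2$ is a finite measure) \eqref{eq:selfsimilarprofileM+} continues to hold for it.

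Plugging in $\varphi(x)=\min(x,r)$, a direct case analysis of $\mathcal{D}_2[\varphi](x,y)=\varphi(x+y)+\varphi(|x-y|)-2\varphi(x\vee y)$ (the same that produced $\mathcal{D}_2^*[\vartheta_r]$ in the existence proof) gives
\begin{equation}\nonumber
\mathcal{D}_2[\varphi](x,y)=-\bigl((x+y-r)_+\wedge(r-|x-y|)_+\bigr)\leq0.
\end{equation}
Using $\varphi(0)=0$ and $x\varphi'(x)=x\chi_{(0,r)}(x)$, the left-hand side of \eqref{eq:selfsimilarprofileM+} becomes $\frac{2-\rho}{\rho}\int_{(0,r)}x\Phi_\rho\,dx-\frac{\rho-1}{\rho}r\int_{[r,\infty)}\Phi_\rho\,dx$, and rearranging yields
\begin{equation}\nonumber
\frac12\iint_{[0,\infty)^2}\frac{\Phi_\rho(x)\Phi_\rho(y)}{\sqrt{xy}}\bigl((x+y-r)_+\wedge(r-|x-y|)_+\bigr)\,dx\,dy\leq\tfrac{\rho-1}{\rho}\,r\,\|\Phi_\rho\|_2,
\end{equation}
since the discarded $x\Phi_\rho$-term has the favourable sign. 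So the whole symmetric quadratic expression is bounded by a constant times $r$.

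Next I would exploit this bound by restricting the double integral to the square $(\alpha r,\alpha^2 r]^2$, where $\alpha=\frac13(1+\sqrt7)>1$ is the root of $1-(\alpha^2-\alpha)=\alpha^2/2$ that was already used in the existence proof. On that square one has $(x+y-r)_+\geq(2\alpha-1)r$ and $(r-|x-y|)_+\geq\alpha^2 r/2$, while $(xy)^{-1/2}\geq(\alpha^2 r)^{-1}$, so the integrand is bounded below by an absolute positive constant. This gives
\begin{equation}\nonumber
\Bigl(\int_{(\alpha r,\alpha^2 r]}\Phi_\rho(x)\,dx\Bigr)^2\leq C r,
\end{equation}
hence $\int_{(s,\alpha s]}\Phi_\rho\leq C'\sqrt s$ for every $s>0$ (with $C'$ independent of $\rho$ up to harmless constants).

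Finally, decomposing $(0,R]=\bigcup_{j=0}^{\infty}(\alpha^{-(j+1)}R,\alpha^{-j}R]$ and summing the geometric series $\sum_j\alpha^{-(j+1)/2}=(\sqrt\alpha-1)^{-1}$ produces $\int_{(0,R]}\Phi_\rho\leq C\sqrt R$, as required. There is no serious obstacle here; the only thing to be careful with is the admissibility of $\varphi(x)=\min(x,r)$ as a test function, which is handled by the standard approximation argument indicated at the end of the proof of Theorem~\ref{tm:existence}. The heart of the argument is the same annular-restriction trick used there, applied now to a lower bound for the bilinear term instead of an upper bound.
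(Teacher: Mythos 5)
Your proof is correct and takes essentially the same route as the paper's. The paper tests with the nonincreasing convex function $\varphi(x)=(r-x)_+$ rather than $\min(x,r)$, but since the two differ by a constant (and a sign) this is immaterial for both sides of \eqref{eq:selfsimilarprofileM+}; the favourable-sign drop, the restriction to $(\alpha r,\alpha^2 r]^2$ with $\alpha=\frac13(1+\sqrt7)$, and the dyadic decomposition of $(0,R]$ are identical to the paper's argument.
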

\begin{proof}
Given any nonincreasing convex function $\varphi\in C_c^1([0,\infty))$, we obtain from \eqref{eq:selfsimilarprofileM+} that there holds
\begin{multline}\label{eq:ienz_1}
\tfrac{\rho-1}\rho\int_{[0,\infty)}\left(\varphi(0)-\varphi(x)\right)\Phi_\rho(x)\dd x\\
\geq\frac12\iint_{[0,\infty)^2}\frac{\Phi_\rho(x)\Phi_\rho(y)}{\sqrt{xy}}\mathcal{D}_2[\varphi](x,y)\dd x\dd y,
\end{multline}
so using appropriate arguments to approximate $\varphi(x)=(r-x)_+$, with $r>0$ arbitrary, and arguing as in the proof of \eqref{eq:Texlim_1}, we find from \eqref{eq:ienz_1} that
\begin{equation}\label{eq:ienz_2}
\tfrac{\rho-1}\rho\|\Phi_\rho\|_2\cdot r\geq\tfrac12\mathcal{I}[\Phi_\rho](r),
\end{equation}
with
\begin{equation}\label{eq:defofcalI}
\mathcal{I}[\Phi_\rho](r):=\iint_{\mathbb{R}_+^2}\frac{\Phi_\rho(x)\Phi_\rho(y)}{\sqrt{xy}}(x+y-r)_+\wedge(r-|x-y|)_+\dd x\dd y.
\end{equation}
Also as in the proof of \eqref{eq:Texlim_1}, we then restrict the domain of integration on the right hand side of \eqref{eq:ienz_2} to $(\alpha r,\alpha^2 r]^2$, with $\alpha=\frac13(\sqrt7+1)$, to obtain
\begin{equation}\nonumber
\frac14\left(\int_{(\alpha r,\alpha^2r]}\Phi_\rho(x)\dd x\right)^2\leq\tfrac{\rho-1}\rho\|\Phi_\rho\|_2\cdot r,
\end{equation}
hence there holds
\begin{equation}\nonumber
\int_{(\alpha^{-1}r,r]}\Phi_\rho(x)\dd x\leq\tfrac{2}{\alpha}\sqrt{\tfrac{\rho-1}\rho\|\Phi_\rho\|_2}\cdot\sqrt{r}\text{ for all }r>0,
\end{equation}
and \eqref{eq:integrallimitnearzero} follows by the decomposition $(0,R]=\bigcup_{j=0}^\infty(\alpha^{-j-1}R,\alpha^{-j}R]$.
\end{proof}
We now first show that self-similar profiles are H\"older continuous.
\begin{lm}\label{lm:regularity}
Given $\rho\in(1,2]$, if $\Phi_\rho\in\mathcal{X}_2$ satisfies \eqref{eq:selfsimilarprofileM+} for all $\varphi\in C_c^1([0,\infty))$, then it is absolutely continuous with respect to Lebesgue measure, its Radon-Nykodim derivative is locally $\alpha$-H\"older continuous on $(0,\infty)$ for any $\alpha<\frac12$, and it actually satisfies \eqref{eq:selfsimilarprofileL1} for all $\varphi\in C_c^1([0,\infty))$.
\end{lm}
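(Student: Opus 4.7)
My approach is to extract from \eqref{eq:selfsimilarprofileM+} an integral identity that expresses the cumulative $F(s):=\int_0^s\Phi_\rho$ as the solution of a first-order linear ODE whose forcing involves $\Phi_\rho$ only through a regularizing convolution-type functional $K(s)$, and then to show that the regularity of $K$ forces H\"older regularity of $\Phi_\rho$ on $(0,\infty)$.

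I would first test \eqref{eq:selfsimilarprofileM+} against $\varphi\in C_c^2([0,\infty))$ (dense in $C_c^1$), use $\Delta_y^2\varphi(x)=\int_\mathbb R(y-|w-x|)_+\varphi''(w)\,dw$ from \eqref{eq:rewriteofsecdif}, and the identities $\varphi'(0)=-\int_0^\infty\varphi''$ and $\varphi(x)-\varphi(0)=-\int_0^\infty(x\wedge w)\varphi''(w)\,dw$ implied by compact support of $\varphi$. After Fubini, \eqref{eq:selfsimilarprofileM+} becomes
\[
\int_0^\infty\varphi''(s)\Bigl\{\tfrac{1}{\rho}\bigl[(\rho-2)W(s)+(\rho-1)sN(s)\bigr]-K(s)\Bigr\}\,ds=0,
\]
where $W(s)=\int_0^s x\Phi_\rho(x)\,dx$, $N(s)=\int_{(s,\infty)}\Phi_\rho$, and
\[
K(s):=\iint_{\{x>y>0\}}\tfrac{\Phi_\rho(x)\Phi_\rho(y)}{\sqrt{xy}}(y-|s-x|)_+\,dx\,dy.
\]
Since $\varphi''$ is arbitrary in $C_c([0,\infty))$, the bracket vanishes identically (using right-continuity of each term and agreement at $s=0$), yielding the master identity $(\rho-2)W(s)+(\rho-1)sN(s)=\rho K(s)$ for every $s\geq0$.

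Substituting $W=sF-\int_0^s F$ and $N=F(\infty)-F$ converts the master identity into the linear ODE $sH'(s)+(\rho-2)H(s)=(\rho-1)F(\infty)\,s-\rho K(s)$, with $H(s):=\int_0^s F$ and $H(0)=0$. The integrating factor $s^{\rho-3}$ gives $s^{\rho-2}H(s)=\int_0^s t^{\rho-3}g(t)\,dt$, with $g(t)=(\rho-1)F(\infty)t-\rho K(t)$; convergence near $0$ is ensured by $F(t)\leq C\sqrt t$ from Lemma~\ref{lm:integrallimitnearzero}. Differentiating this representation expresses $\Phi_\rho=F'$ on $(0,\infty)$ as an explicit functional of $K$, $K'$, and their $s$-averages, so the regularity of $\Phi_\rho$ reduces to that of $K$ and $K'$.

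Next I would prove $K\in C^{1,\alpha}_{\mathrm{loc}}((0,\infty))$ for every $\alpha<1/2$. For a compact $[r_0,r_1]\subset(0,\infty)$, the constraint $|s-x|<y$ together with $x>y$ naturally splits the $y$-integration into $y<r_0/2$ (where $x\in(s-y,s+y)\subset[r_0/2,2r_1]$ is bounded away from $0$, the singular weight $1/\sqrt{xy}$ is controlled, and windowed masses of $\Phi_\rho$ are handled via Lemma~\ref{lm:integrallimitnearzero}) and $y\geq r_0/2$ (where the tail bound $N(y)\leq\|x\Phi_\rho\|_\rho y^{1-\rho}$ arising from $x\Phi_\rho\in\mathcal X_\rho$, Proposition~\ref{pr:rescaling}, takes over). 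Estimating $K'(s+h)-K'(s)$ by threshold-splitting at $y\asymp|h|^\theta$ and using the 1-Lipschitzness of the kernel in $s$ gives the desired H\"older bound. Substituting this into the formula from the previous step yields $F\in C^{1,\alpha}_{\mathrm{loc}}((0,\infty))$, hence $\Phi_\rho$ is locally $\alpha$-H\"older on $(0,\infty)$ and absolutely continuous with respect to Lebesgue measure. Equation \eqref{eq:selfsimilarprofileL1} then follows from \eqref{eq:selfsimilarprofileM+} since absolute continuity kills the diagonal $\{x=y\}$ and a density argument extends from $C_c^2$ to $C_c^1$ test functions. The main obstacle is this final regularity step: with only Lemma~\ref{lm:integrallimitnearzero} and the $\mathcal X_\rho$-tail estimate available, balancing the near-diagonal and bulk regimes against the singular weight $1/\sqrt{xy}$ while optimizing the threshold $y\asymp|h|^\theta$ to recover the full exponent $1/2$ is the technical crux.
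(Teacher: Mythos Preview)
Your master identity is correct and is exactly \eqref{eq:P1_19_2} in the paper (your $K$ equals $\tfrac12\mathcal I[\Phi_\rho]$). Deriving it directly from $C_c^2$ test functions, without first knowing continuity, is a nice observation; it immediately yields that $sF(s)$ is continuous, hence $\Phi_\rho$ has no atoms on $(0,\infty)$. Up to this point your route is different from the paper's but sound.

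The gap is in the step ``$K\in C^{1,\alpha}_{\rm loc}$''. You argue by ``threshold-splitting at $y\asymp|h|^\theta$ and using the $1$-Lipschitzness of the kernel in $s$'', but the $1$-Lipschitz bound applies to the kernel $(y-|s-x|)_+$ of $K$, not to that of $K'$. After differentiating, the kernel of $K'$ is the step function $\mathrm{sgn}(x-s)\mathbf 1_{\{|x-s|<y\}}$, and $K'(s+h)-K'(s)$ is governed by integrals of $\tfrac{\Phi_\rho(x)\Phi_\rho(y)}{\sqrt{xy}}$ over strips of width $|h|$ in $x$. For instance, one of these strips contributes
\[
\int_{(s,s+h)}\frac{\Phi_\rho(dx)}{\sqrt x}\int_{(x-s,\,x)}\frac{\Phi_\rho(dy)}{\sqrt y},
\]
and since $\int_\epsilon^1 y^{-1/2}\Phi_\rho(dy)\sim C|\log\epsilon|$ by Lemma~\ref{lm:integrallimitnearzero}, the inner integral blows up as $x\downarrow s$. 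To control this you must bound $\Phi_\rho((s,s+h))$, i.e.\ you need a local smallness estimate for $\Phi_\rho$ on short intervals away from $0$---which is precisely the regularity you are proving. Lemma~\ref{lm:integrallimitnearzero} alone gives nothing here (it only controls mass near the origin), and your appeal to Proposition~\ref{pr:rescaling} is circular in the paper's logic. In short, $K'$ need not even be well defined at the measure level, and the H\"older estimate for $K'$ cannot be obtained in one shot.

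What your scheme actually supports is a bootstrap: the $1$-Lipschitz kernel gives $|K(s+h)-K(s)|\le C|h|\,|\log|h||$ from Lemma~\ref{lm:integrallimitnearzero} alone, hence $F$ is log-Lipschitz on compacta; feeding $\Phi_\rho(I)\le C|I|\,|\log|I||$ back into the strip integrals then makes $K'$ finite and H\"older, and so on. The paper avoids this iteration by a different two-step mechanism: first a duality estimate (testing with $\varphi(x)=-\int_x^\infty z^{-1}\chi(z)\,dz$) gives $\Phi_\rho\in L^q_{\rm loc}$ for every $q<\infty$, and then a fractional Sobolev bound $\|\Delta_y^2\varphi\|_{L^2}\le C\,y^\gamma\|\varphi\|_{H^\gamma}$ yields $(x\Phi_\rho)'\in H^{-\gamma}$, whence $\Phi_\rho\in C^{0,1/2-\gamma}$ by embedding. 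Your ODE route is more elementary and self-contained, but as written it is missing the intermediate density step that breaks the circularity.
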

\begin{proof}
Given $\chi\in C_c^\infty((0,\infty))$, we set $\varphi(x)=-\int_x^\infty\frac1z\chi(z)\dd z$ and use this function in \eqref{eq:selfsimilarprofileM+} to obtain
\begin{multline}\label{eq:L42_1}
\int_{(0,\infty)}\chi(x)\Phi_\rho(x)\dd x-(\rho-1)\int_{(0,\infty)}\int_0^x\tfrac1z\chi(z)\dd z\Phi_\rho(x)\dd x\\
=\frac\rho2\iint_{[0,\infty)^2}\frac{\Phi_\rho(x)\Phi_\rho(y)}{\sqrt{xy}}\left(\int_{x\vee y}^{x+y}\tfrac1z\chi(z)\dd z-\int_{|x-y|}^{x\vee y}\tfrac1z\chi(z)\dd z\right)\dd x\dd y.
\end{multline}
Writing then $\Sigma_\chi={\rm supp}(\chi)$ and $\varsigma_\chi=\frac12\min(\Sigma_\chi)$, we first of all note that
\begin{multline}\nonumber
\left|\int_{(0,\infty)}\int_0^x\tfrac1z\chi(z)\dd z\Phi_\rho(x)\dd x\right|
\leq\Phi_\rho([2\varsigma_\chi,\infty))\times\int_{\Sigma_\chi}\tfrac1z|\chi(z)|\dd z\\
\leq\left(\Phi_\rho([2\varsigma_\chi,\infty))\times\|\tfrac1z\|_{L^q(\Sigma_\chi)}\right)\|\chi\|_{L^p(\Sigma_\chi)},
\end{multline}
with $p\in[1,\infty)$ and $q=\frac{p}{p-1}$, and similarly we find that
\begin{multline}\nonumber
\left|\iint_{[\varsigma_\chi,\infty)^2}\frac{\Phi_\rho(x)\Phi_\rho(y)}{\sqrt{xy}}\left(\int_{x\vee y}^{x+y}\tfrac1z\chi(z)\dd z-\int_{|x-y|}^{x\vee y}\tfrac1z\chi(z)\dd z\right)\dd x\dd y\right|\\
\leq\tfrac1{\varsigma_\chi}(\Phi_2([\varsigma_\chi,\infty)))^2\times2\int_{\Sigma_\chi}\tfrac1z|\chi(z)|\dd z\\
\leq\left(\tfrac2{\varsigma_\chi}(\Phi_2([\varsigma_\chi,\infty)))^2\times\|\tfrac1z\|_{L^q(\Sigma_\chi)}\right)\|\chi\|_{L^p(\Sigma_\chi)}.
\end{multline}
Noticing now that the term between brackets in the double integral on the right hand side of \eqref{eq:L42_1} vanishes on $\{x+y\leq2\varsigma_\chi\}$, by symmetry it remains only to estimate the integral over $(x,y)\in[\varsigma_\chi,\infty]\times(0,\varsigma_\chi]$. We thereto let $p\in(1,\infty)$ be arbitrary, $r\in(\frac{p}{p-1},\infty)$ large, and $q\in(1,\infty]$ such that $1=\frac1p+\frac1q+\frac1r$, and we obtain that
\begin{multline}\nonumber
\left|\iint_{[\varsigma_\chi,\infty)\times(0,\varsigma_\chi]}\frac{\Phi_\rho(x)\Phi_\rho(y)}{\sqrt{xy}}\left(\int_x^{x+y}\tfrac1z\chi(z)\dd z-\int_{x-y}^x\tfrac1z\chi(z)\dd z\right)\dd(x,y)\right|\\
\leq\int_{[\varsigma_\chi,\infty)}\frac{\Phi_\rho(x)}{\sqrt{x}}\dd x\times\int_{(0,\varsigma_\chi]}2\|\tfrac1z\chi(z)\|_{L^{\frac{r}{r-1}}(\Sigma_\chi)}y^{\frac1r-\frac12}\Phi_\rho(y)\dd y,
\end{multline}
which, using for the integral with respect to $y$ a dyadic decomposition and Lemma \ref{lm:integrallimitnearzero}, can be bounded by a constant times $\|\chi\|_{L^p(\Sigma_\chi)}$. Note here that the dependence on $\chi$ of the constants in the preceding estimates is limited to dependence on $\varsigma_\chi$. Combining thus the preceding estimates and using a density argument, we then find for any $p\in(1,\infty)$ and any $K\subset(0,\infty]$ compact that
\begin{equation}\label{eq:Lqdualityestimate}
\left|\int_{K}\chi(x)\Phi_\rho(x)\dd x\right|\leq C(\Phi_\rho,\min(K),p,\rho)\|\chi\|_{L^p(K)}\text{ for all }\chi\in L^p(K),
\end{equation}
with $C(\Phi_\rho,k,p,\rho)\leq c(\Phi_\rho,p,\rho)O(k^{-\frac1p}\Phi_2([k,\infty)))$ as $k\rightarrow\infty$. By duality it now follows that $\Phi_\rho\in\bigcap_{q\in(1,\infty)}L_{\rm loc}^q((0,\infty])$, and since $\Phi_\rho\in\mathcal{X}_2$ is finite we have $\Phi_\rho\in L^1(0,\infty)$. Moreover, from the dependence on $\min(K)$ of the constant $C$ in \eqref{eq:Lqdualityestimate} we find for all $q\in[1,\infty)$ that $\|\Phi_\rho\|_{L^q(r,\infty)}\leq O(r^{\frac1q-1}\|\Phi_2\|_{L^1(r,\infty)})$ as $r\rightarrow\infty$. Note further that the contribution of integrals over lines to the double integral on the right hand side of \eqref{eq:selfsimilarprofileM+} is zero for Lebesgue integrable functions, so $\Phi_\rho$ actually satisfies \eqref{eq:selfsimilarprofileL1} for all $\varphi\in C_c^1([0,\infty))$.

For the remaining continuity claim we fix $\gamma\in(0,\frac12)$ and $[a,b]\subset(0,\infty)$ arbitrarily, and we start by showing that for any $\varphi\in C_c^\infty(\mathbb{R})$ with support in $[a,b]$ we have
\begin{equation}\label{eq:sobolevestimate}
\left|\int_\mathbb{R}\varphi'(x)\,x\Phi_\rho(x)\dd x\right|\leq C(\Phi_\rho,a,b,\gamma,\rho)\|\varphi\|_{H^\gamma(\mathbb{R})}.
\end{equation}
Indeed, by \eqref{eq:selfsimilarprofileL1} we immediately have 
\begin{multline}\nonumber
\left|\int_{(0,\infty)}\varphi'(x)\,x\Phi_\rho(x)\dd x\right|\leq(\rho-1)\|\Phi_\rho\|_{L^2(a,b)}\|\varphi\|_{L^2(\mathbb{R})}\\
+\rho\left|\iint_{S_1}\frac{\Phi_\rho(x)\Phi_\rho(y)}{\sqrt{xy}}\Delta_y^2\varphi(x)\dd x\dd y\right|
+\rho\left|\iint_{S_2}\Big[\cdots\Big]\dd x\dd y\right|,
\end{multline}
where we have split the double integral over the domains
\begin{equation}\nonumber
S_1=\{(y\vee(a-y))<x<y+b<2b\}\quad\text{and}\quad S_2=\left\{x>y>\left(\tfrac{x}{2}\vee b\right)\right\}.
\end{equation}
For the integral over $S_1$ we now first note for all $y\in[0,b]$ that
\begin{equation}\nonumber
\left|\int_{(y\vee(a-y),y+b)}\Phi_\rho(x)\Delta_y^2\varphi(x)\dd x\right|\leq2\|\Phi_\rho\|_{L^2(\frac{a}{2},2b)}\|\Delta_y^2\varphi\|_{L^2(\mathbb{R})},
\end{equation}
and since $\|\Delta_y^2\varphi\|_{L^2(\mathbb{R})}\leq C(\gamma)\|\varphi\|_{H^\gamma(\mathbb{R})}\,y^\gamma$ (cf.~\cite{S90}, \cite{T78,T83}) we obtain
\begin{equation}\nonumber
\left|\iint_{S_1}\Big[\cdots\Big]\dd x\dd y\right|\leq C(\Phi_\rho,a,b)C(\gamma)\sqrt{\frac2a}\int_{(0,b)}y^{\gamma-\frac12}\Phi_\rho(y)\dd y\times\|\varphi\|_{H^\gamma(\mathbb{R})},
\end{equation}
where the remaining integral with respect to $y$ is bounded (use Lemma \ref{lm:integrallimitnearzero} and a dyadic decomposition of the interval $(0,b)$). For the integral over $S_2$ we note that the second difference of $\varphi$ is now completely given by $\varphi(x-y)$. Applying thus H\"older's and Young's inequalities we obtain
\begin{multline}\nonumber
\left|\iint_{S_2}\Big[\cdots\Big]\dd x\dd y\right|\leq\frac1b\|\Phi_\rho\|_{L^2(b,\infty)}\left\|\int_{(b,\infty)}\Phi_\rho(y)|\varphi(z-y)|\dd y\right\|_{L^2(b,\infty)}\\
\leq C(\Phi_\rho,b)\|\Phi_\rho\|_{L^1(0,\infty)}\|\varphi\|_{L^2(\mathbb{R})},
\end{multline}
and it follows that \eqref{eq:sobolevestimate} holds. We lastly fix any $\zeta\in C_c^\infty((a,b))$ with $\zeta(x)=\frac1x$ for $x\in I_{a,b}:=[\frac{2a+b}{3},\frac{a+2b}{3}]$, and we set $\Theta(x):=\zeta(x)x\Phi_\rho(x)$. Given then any $\varphi\in C^\infty(\mathbb{R})$, we use \eqref{eq:sobolevestimate} to get
\begin{multline}\nonumber
\left|\int_\mathbb{R}\varphi'(x)\Theta(x)\dd x\right|
\leq\left|\int_\mathbb{R}(\zeta\varphi)'(x)\,x\Phi_\rho(x)\dd x\right|+\left|\int_\mathbb{R}\zeta'(x)\varphi(x)\,x\Phi_\rho(x)\dd x\right|\\
\leq C(\Phi_\rho,a,b,\gamma,\rho)\|\zeta\varphi\|_{H^\gamma(\mathbb{R})}+\|z\zeta'(z)\|_\infty\|\Phi_\rho\|_{L^2(a,b)}\|\varphi\|_{L^2(\mathbb{R})},
\end{multline}
and since $C^\infty(\mathbb{R})$ is dense in $H^\gamma(\mathbb{R})$ we obtain
\begin{equation}\nonumber
\left|\int_\mathbb{R}\varphi'(x)\Theta(x)\dd x\right|\leq C(\Phi_\rho,a,b,\gamma,\zeta,\rho)\|\varphi\|_{H^\gamma(\mathbb{R})}\text{ for all }\varphi\in H^\gamma(\mathbb{R}),
\end{equation}
hence $\Theta'\in H^{-\gamma}(\mathbb{R})=(H^\gamma(\mathbb{R}))'$. Therefore $\Theta\in H^{1-\gamma}(\mathbb{R})\subset C^{0,\frac12-\gamma}(\mathbb{R})$ (cf.~\cite[Thm.~19.6(b)]{DK10} and \cite[Sec.~2.7.1 Rk.~2]{T78} resp.) and since $\Theta=\Phi_\rho$ on $I_{a,b}$ we have $\Phi_\rho\in C^{0,\frac12-\gamma}(I_{a,b})$. We then complete the proof by observing that $\gamma\in(0,\frac12)$ was chosen arbitrarily, and that for any $K\subset(0,\infty)$ compact there are $a,b\in(0,\infty)$ such that $K\subset I_{a,b}$.
\end{proof}
We are now able to prove Propositions \ref{pr:regularity} and \ref{pr:rescaling}.
\begin{proof}[Proof of Proposition \ref{pr:regularity}]
By Lemma \ref{lm:regularity} we have that $\Phi_\rho\in\bigcap_{\alpha<\frac12}C^{0,\alpha}((0,\infty))$, and $\Phi_\rho$ satisfies \eqref{eq:selfsimilarprofileL1} for all $\varphi\in C_c^1([0,\infty))$. We will prove smoothness via a bootstrap argument, for which we need to rewrite \eqref{eq:selfsimilarprofileL1}. For any $\delta>0$, let $\eta_\delta\in C([0,\infty])$ be a nondecreasing function with $\eta_\delta=0$ on $[0,\delta)$ and $\eta_\delta=1$ on $(2\delta,\infty]$. For fixed $\varphi\in C_c^\infty((0,\infty))$ it then holds by dominated convergence that
\begin{multline}\label{eq:reg_1}
\int_{(0,\infty)}\left(\tfrac1\rho[x\varphi(x)]_x-\varphi(x)\right)\Phi_\rho(x)\dd x\\
=\lim_{\delta\rightarrow0}\iint_{\{x>y>0\}}\frac{\eta_\delta(x)\Phi_\rho(x)\eta_\delta(y)\Phi_\rho(y)}{\sqrt{xy}}\Delta_y^2\varphi(x)\dd x\dd y,
\end{multline}
and if $\delta<\frac14\min({\rm supp}(\varphi))$, then
\begin{multline}\label{eq:reg_2}
\iint_{\{x>y>0\}}\frac{\eta_\delta(x)\Phi_\rho(x)\eta_\delta(y)\Phi_\rho(y)}{\sqrt{xy}}\Delta_y^2\varphi(x)\dd x\dd y\\
=\int_{(0,\infty)}\left(\int_0^{x/2}\frac{\eta_\delta(y)\Phi_\rho(y)}{\sqrt{y}}\left[\frac{\Phi_\rho(x+y)}{\sqrt{x+y}}+\frac{\Phi_\rho(x-y)}{\sqrt{x-y}}-2\frac{\Phi_\rho(x)}{\sqrt{x}}\right]\dd y\right.\\
\left.+\int_{x/2}^\infty\frac{\Phi_\rho(y)\Phi_\rho(x+y)}{\sqrt{y(x+y)}}\dd y-2\frac{\Phi_\rho(x)}{\sqrt{x}}\int_{x/2}^x\frac{\Phi_\rho(y)}{\sqrt{y}}\dd y\right)\varphi(x)\dd x.
\end{multline}
Using then the local H\"older regularity of $\Phi_\rho$ and the integral estimate from Lemma \ref{lm:integrallimitnearzero}, we find that $\frac{\Phi_\rho(y)}{\sqrt{y}}[\frac{\Phi_\rho(x+y)}{\sqrt{x+y}}+\frac{\Phi_\rho(x-y)}{\sqrt{x-y}}-2\frac{\Phi_\rho(x)}{\sqrt{x}}]$ is integrable with respect to $y$ near zero, and we are able to take the limit $\delta\rightarrow0$ in the right hand side of \eqref{eq:reg_2}. Combining \eqref{eq:reg_1} and \eqref{eq:reg_2}, we thus obtain that $\Phi_\rho$ satisfies \eqref{eq:reg_3}, where the derivative on the left hand is still taken in the distributional sense.

Suppose now that $\Phi_\rho\in C^{k,\alpha}((0,\infty))$ for some $k\in\mathbb{N}_0$ and $\alpha\in(0,1)$. To show that $\Phi_\rho\in C^{k+1,\alpha-\epsilon}((0,\infty))$ for some arbitrarily small $\epsilon>0$, it then suffices to check that the right hand side of \eqref{eq:reg_3} is in $C^{k,\alpha-\epsilon}((0,\infty))$, and since the second and third terms are actually even more regular it is enough to check this for the first. Moreover, writing $f(x)=\frac{\Phi_\rho(x)}{\sqrt{x}}$ we observe that
\begin{equation}\nonumber
f(\tfrac12x)\left[f^{(\ell)}(\tfrac32x)+f^{(\ell)}(\tfrac12x)-2f^{(\ell)}(x)\right]\in C^{k-\ell,\alpha}((0,\infty))\text{ for }\ell=0,1,\ldots,k,
\end{equation}
and we can restrict ourselves to proving that $f\in C^{0,\alpha}((0,\infty))$ implies
\begin{equation}\label{eq:reg_4}
\int_0^{x/2}\frac{\Phi_\rho(y)}{\sqrt{y}}\big[f(x+y)+f(x-y)-2f(x)\big]\dd y=:F(x)\in C^{0,\alpha-\epsilon}((0,\infty).
\end{equation}
Let thereto $K\subset[k_1,k_2]\subset(0,\infty)$ be compact, and let $\kappa>0$ be a constant such that $|f(x)-f(y)|\leq\kappa|x-y|^\alpha$ for all $x,y\in[\frac12k_1,2k_2]$. For $x_1,x_2\in K$ with $x_1\leq x_2$ there then holds
\begin{multline}\nonumber
\left|F(x_1)-F(x_2)\right|
\leq\left|\int_{x_1/2}^{x_2/2}\frac{\Phi_\rho(y)}{\sqrt{y}}\Delta_y^2f(x_2)\dd y\right|\\
+\int_0^{x_1/2}\frac{\Phi_\rho(y)}{\sqrt{y}}\left|\Delta_y^2f(x_1)-\Delta_y^2f(x_2)\right|\dd y,
\end{multline}
where the first term on the right hand side is bounded by a constant times $|x_1-x_2|$. Writing further $\xi=\min\{\frac12x_1,x_2-x_1\}$, we find that
\begin{multline}\nonumber
\int_0^{x_1/2}\frac{\Phi_\rho(y)}{\sqrt{y}}\left|\Delta_y^2f(x_1)-\Delta_y^2f(x_2)\right|\dd y\\
\leq4\kappa\left(\int_0^{\xi}\frac{\Phi_\rho(y)}{\sqrt{y}}y^\alpha\dd y+|x_1-x_2|^\alpha\int_\xi^{x_1/2}\frac{\Phi_\rho(y)}{\sqrt{y}}\dd y\right),
\end{multline}
which, using dyadic decompositions of the domains of integration and the estimate from Lemma \ref{lm:integrallimitnearzero}, can be bounded by a constant times
\begin{equation}\nonumber
|x_1-x_2|^\alpha(1+\log|x_1-x_2|)\leq|x_1-x_2|^{\alpha-\epsilon}\text{ as }|x_1-x_2|\rightarrow0,
\end{equation}
with $\epsilon>0$ arbitrarily small, and we have shown \eqref{eq:reg_4}. By induction it then follows that $\Phi_\rho\in C^\infty((0,\infty))$.\\

To lastly prove our positivity claim we suppose that there is some $x\in(0,\infty)$ such that $\Phi_\rho(x)=0$. As we have $\Phi_\rho\geq0$ on $(0,\infty)$ there then holds $\Phi_\rho'(x)=0$, and it follows from \eqref{eq:reg_3} that $\Phi_\rho(y)\Phi_\rho(x+y)=0$ for all $y\in(0,\infty)$, and that $\Phi_\rho(y)\Phi_\rho(x-y)=0$ for all $y\in(0,x)$. As a consequence of the latter identity we find that $\Phi_\rho(\frac{x}{2})=0$, hence $\Phi_\rho(2^{-n}x)=0$ for all $n\in\mathbb{N}$ by induction. From the iterated first identity we therefore have
\begin{equation}\nonumber
\Phi_\rho(y)\Phi_\rho(2^{-n}x+y)=0\text{ for all }y\in(0,\infty)\text{ and all }n\in\mathbb{N},
\end{equation}
so local uniform continuity implies $(\Phi_\rho(y))^2=0$ for all $y\in K$ with $K\subset(0,\infty)$ compact, hence $\Phi_\rho=0$ on $(0,\infty)$, and we conclude that $\Phi_\rho$ is indeed either strictly positive or identically zero on $(0,\infty)$.
\end{proof}

\begin{proof}[Proof of Proposition \ref{pr:rescaling}.]
Since the rescaling statement is an easy exercise, we restrict ourselves to proving that $x\Phi_\rho(x)\in\mathcal{X}_\rho$. Moreover, in view of Remark \ref{rk:restriction} we restrict ourselves to the case $\rho\in(1,2)$, but see also Lemma \ref{lm:momentestimateforrhoequaltwo}.

Now, we first show that
\begin{equation}\label{eq:P1_19_1}
R^{\rho-2}\int_{[0,\infty)}\left(1\wedge\tfrac{R}{x}\right)x\Phi_\rho(x)\dd x=\frac\rho2\int_0^R\mathcal{I}[\Phi_\rho](r)r^{\rho-3}\dd r\text{ for all }R>0,
\end{equation}
with $\mathcal{I}$ given by \eqref{eq:defofcalI}. To that end we note that by continuity of $\Phi_\rho$ (cf.~Proposition \ref{pr:regularity}), after an approximation argument we can use $\varphi(x)=(r-x)_+$, with $r>0$ arbitrary, directly in \eqref{eq:selfsimilarprofileL1} to obtain
\begin{equation}\label{eq:P1_19_2}
\frac1\rho\left((\rho-2)\int_0^rx\Phi_\rho(x)\dd x+(\rho-1)r\int_r^\infty\Phi_\rho(x)\dd x\right)=\tfrac12\mathcal{I}[\Phi_\rho](r).
\end{equation}
Using then the estimate from Lemma \ref{lm:integrallimitnearzero} in a dyadic decomposition for the first integral on the left hand side of \eqref{eq:P1_19_2}, we find that for small $r>0$ the second term is dominant and of order $O(r)$ as $r\rightarrow0$. As a consequence the product of \eqref{eq:P1_19_2} and $r^{\rho-3}$ is integrable near zero, and for any $R>0$ we find that the right hand side of \eqref{eq:P1_19_1} equals
\begin{equation}\nonumber
\int_0^R\left[r^{\rho-2}\int_0^rx\Phi_\rho(x)\dd x+r^{\rho-1}\int_r^\infty\Phi_\rho(x)\dd x\right]_r\dd r,
\end{equation}
hence \eqref{eq:P1_19_1} holds.

We next note that $\mathcal{I}[\Phi_\rho]\geq0$, so the right hand side of \eqref{eq:P1_19_1} is nondecreasing as a function of $R$, hence the supremum over $R>0$ is given by the limit $R\rightarrow\infty$. Observing lastly that $(xy)^{-1/2}(x+y-r)_+\wedge(r-|x-y|)_+\leq1$ for all $x,y\geq0$ and $r>0$, we obtain the uniform bound $\mathcal{I}[\Phi_\rho]\leq\|\Phi_\rho\|_2^2$, hence the right hand side of \eqref{eq:P1_19_1} is bounded as a function of $R$, and we conclude that $\|x\Phi_\rho(x)\|_\rho<\infty$.
\end{proof}

\section{Power law asymptotics - Theorem \ref{tm:fattails}}\label{sec:fattails}
The proof of Theorem \ref{tm:fattails} is given after the following useful result.
\begin{lm}\label{lm:integrallimitatinfinity}
Given $\rho\in(1,2)$, if $\Phi_\rho\in\mathcal{X}_2$ satisfies \eqref{eq:selfsimilarprofileM+} for all $\varphi\in C_c^1([0,\infty))$, then the limits
\begin{equation}\nonumber
\lim_{R\rightarrow\infty}\frac{R^{\rho-1}}{2-\rho}\int_{(R,\infty)}\Phi_\rho(x)\dd x\quad\text{and}\quad\lim_{R\rightarrow\infty}\frac{R^{\rho-2}}{\rho-1}\int_{(0,R)}x\Phi_\rho(x)\dd x
\end{equation}
exist, and both equal $\|x\Phi_\rho(x)\|_\rho$.
\end{lm}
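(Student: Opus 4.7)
The plan is to prove that both displayed limits equal $L := \|x\Phi_\rho(x)\|_\rho$, by combining identity \eqref{eq:P1_19_1} from the proof of Proposition \ref{pr:rescaling} with a Karamata-type monotone density argument. Define
\begin{equation*}
A(R) := R^{\rho-1}\int_{(R,\infty)}\Phi_\rho(x)\dd x,\qquad B(R) := R^{\rho-2}\int_{(0,R)}x\Phi_\rho(x)\dd x.
\end{equation*}
The identity $x(1\wedge\tfrac{R}{x}) = x\wedge R$ rewrites the left hand side of \eqref{eq:P1_19_1} as $A(R)+B(R)$. Since $\mathcal{I}[\Phi_\rho]\geq0$, the right hand side of \eqref{eq:P1_19_1} is nondecreasing in $R$, so as $R\to\infty$ it converges to $\sup_{R>0}R^{\rho-2}\int_{[0,\infty)}(1\wedge\tfrac{R}{x})x\Phi_\rho(x)\dd x$, which by definition equals $L$.

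Next, I set $F(y) := \int_{(y,\infty)}\Phi_\rho(x)\dd x$, which is nonnegative, nonincreasing, and satisfies $F(\infty)=0$ since $\Phi_\rho\in\mathcal{X}_2$. A Fubini computation gives $\int_{(0,R)}x\Phi_\rho(x)\dd x = \int_0^R F(y)\dd y - RF(R)$, and substituting this into $A+B$ produces the key clean identity
\begin{equation*}
A(R) + B(R) = R^{\rho-2}\int_0^R F(y)\dd y \longrightarrow L \quad\text{as } R\to\infty.
\end{equation*}
Since $F$ is nonincreasing and $2-\rho\in(0,1)$, I then apply a monotone density (Karamata-type) argument to conclude $R^{\rho-1}F(R) \to (2-\rho)L$: for $\lambda\in(0,1)$ the monotonicity of $F$ gives $R(1-\lambda)F(R)\leq\int_{\lambda R}^R F(y)\dd y$, so using the convergence just established yields
\begin{equation*}
\limsup_{R\to\infty}R^{\rho-1}F(R) \leq \frac{L(1-\lambda^{2-\rho})}{1-\lambda} \longrightarrow (2-\rho)L \quad\text{as }\lambda\uparrow 1,
\end{equation*}
and the matching lower bound follows from the symmetric inequality $R(\lambda'-1)F(R)\geq\int_R^{\lambda'R}F(y)\dd y$ for $\lambda'>1$.

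Having established $A(R)\to(2-\rho)L$, subtraction gives $B(R)\to(\rho-1)L$, which after dividing by $2-\rho$ and $\rho-1$ respectively is exactly the content of the lemma. I expect no serious obstacle: the only substantive step is the Fubini reorganization of \eqref{eq:P1_19_1} into the form $R^{\rho-2}\int_0^R F\to L$, which reduces the claim to a Tauberian statement about the automatically monotone function $F$; the monotone density argument then closes the proof cleanly because the exponent $2-\rho$ lies strictly between $0$ and $1$, so that $\lim_{\lambda\to 1}\frac{1-\lambda^{2-\rho}}{1-\lambda} = 2-\rho$.
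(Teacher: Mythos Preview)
Your proof is correct but takes a genuinely different route from the paper's. The paper does not use a Tauberian argument at all: instead it rearranges \eqref{eq:P1_19_2} into two algebraic identities expressing $R\int_{(R,\infty)}\Phi_\rho-\,(2-\rho)\int(1\wedge\tfrac{R}{x})x\Phi_\rho$ and $(\rho-1)\int(1\wedge\tfrac{R}{x})x\Phi_\rho-\int_{(0,R)}x\Phi_\rho$ as $\tfrac{\rho}{2}\mathcal{I}[\Phi_\rho](R)$, then substitutes the explicit formula derived from \eqref{eq:P1_19_1} for $\int(1\wedge\tfrac{R}{x})x\Phi_\rho$ to get each normalized quantity minus $L$ written as $\tfrac{\rho}{2}\bigl(\text{const}\cdot\mathcal{I}[\Phi_\rho](R)R^{\rho-2}-\int_R^\infty\mathcal{I}[\Phi_\rho](r)r^{\rho-3}\dd r\bigr)$, which vanishes because $\mathcal{I}[\Phi_\rho]$ is bounded. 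Your approach, by contrast, extracts from \eqref{eq:P1_19_1} only the soft consequence $A(R)+B(R)\to L$, then uses the Fubini rewrite to recognize this as $R^{\rho-2}\int_0^R F\to L$ with $F$ monotone, so a Karamata monotone density argument separates the two limits using nothing about the equation beyond the monotonicity of the tail mass. The paper's route is shorter given the identities already in hand and gives an explicit error term; yours is more self-contained and shows the conclusion depends only on the convergence of the functional in \eqref{eq:functional} and finiteness of total mass, not on the finer structure encoded in $\mathcal{I}[\Phi_\rho]$.
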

\begin{proof}
Recalling from the proof of Proposition \ref{pr:rescaling} that $\Phi_\rho$ satisfies \eqref{eq:P1_19_2} for all $r>0$, we find by a rearrangement of terms that
\begin{multline}\label{eq:L4_1_1}
R\int_{(R,\infty)}\Phi_\rho(x)\dd x-(2-\rho)\int_{[0,\infty)}\left(1\wedge\tfrac{R}{x}\right)x\Phi_\rho(x)\dd x=\tfrac\rho2\,\mathcal{I}[\Phi_\rho](R)\\
=(\rho-1)\int_{[0,\infty)}\left(1\wedge\tfrac{R}{x}\right)x\Phi_\rho(x)\dd x-\int_{(0,R)}x\Phi_\rho(x)\dd x\text{ for all }R>0.
\end{multline}
Also from the proof of Proposition \ref{pr:rescaling} [cf.~\eqref{eq:P1_19_1}], we know that
\begin{equation}\nonumber
\int_{[0,\infty)}\left(1\wedge\tfrac{R}{x}\right)x\Phi_\rho(x)\dd x=R^{2-\rho}\left(\|x\Phi_\rho(x)\|_\rho-\frac\rho2\int_R^\infty\mathcal{I}[\Phi_\rho](r)r^{\rho-3}\dd r\right),
\end{equation}
so we can rewrite the first equality in \eqref{eq:L4_1_1} as
\begin{multline}\nonumber
\frac{R^{\rho-1}}{2-\rho}\int_{(R,\infty)}\Phi_\rho(x)\dd x-\|x\Phi_\rho(x)\|_\rho\\
=\frac\rho2\left(\frac{\mathcal{I}[\Phi_\rho](R)R^{\rho-2}}{2-\rho}-\int_R^\infty\mathcal{I}[\Phi_\rho](r)r^{\rho-3}\dd r\right),
\end{multline}
where the right hand side tends to zero as $R\rightarrow\infty$. Doing the same for the second equality in \eqref{eq:L4_1_1} completes the proof.
\end{proof}
\begin{proof}[Proof of Theorem \ref{tm:fattails}.]
Recall that $\Phi_\rho\in C^\infty((0,\infty))$ (cf.~Proposition \ref{pr:regularity}).

We first remark that for any $r>0$ we have
\begin{multline}\label{eq:T2_1}
\left|\frac{\Phi_\rho(r)}{(2-\rho)(\rho-1)r^{-\rho}}-1\right|\leq\left|\frac{r\Phi_\rho(r)}{(\rho-1)\int_{(r,\infty)}\Phi_\rho(x)\dd x}-1\right|\\
+\frac{r\Phi_\rho(r)}{(\rho-1)\int_{(r,\infty)}\Phi_\rho(x)\dd x}\left|\frac{r^{\rho-1}}{2-\rho}\int_{(r,\infty)}\Phi_\rho(x)\dd x-1\right|,
\end{multline}
which by Lemma \ref{lm:integrallimitatinfinity} reduces the problem to showing that the first term on the right hand side of \eqref{eq:T2_1} vanishes as $r\rightarrow\infty$. Recall now from the proof of Proposition \ref{pr:rescaling} that $\Phi_\rho$ satisfies \eqref{eq:P1_19_2} for all $r>0$, and note that we may differentiate this equation with respect to $r$ to obtain
\begin{multline}\label{eq:T2_2}
(\rho-1)\int_{(r,\infty)}\Phi_\rho(x)\dd x-r\Phi_\rho(r)
=\tfrac\rho2\,\lim_{h\rightarrow0}\tfrac1h\big(\mathcal{I}[\Phi_\rho](r+h)-\mathcal{I}[\Phi_\rho](r)\big)\\
=\frac\rho2\left(\iint_{\substack{\{|x-y|<r,\\(x\vee y)>r\}}}\frac{\Phi_\rho(x)\Phi_\rho(y)}{\sqrt{xy}}\dd x\dd y-\iint_{\substack{\{x+y>r,\\(x\vee y)<r\}}}\frac{\Phi_\rho(x)\Phi_\rho(y)}{\sqrt{xy}}\dd x\dd y\right),
\end{multline}
Then, since $\int_{(r,\infty)}\Phi_\rho(x)\dd x\sim(2-\rho)r^{1-\rho}$ as $r\rightarrow\infty$ (cf.~Lemma \ref{lm:integrallimitatinfinity}), we find by \eqref{eq:T2_2} that the first term on the right hand side of \eqref{eq:T2_1} vanishes as $r\rightarrow\infty$ if
\begin{equation}\label{eq:T2_3}
\iint_{\{x+y>r,|x-y|<r\}}\frac{\Phi_\rho(x)\Phi_\rho(y)}{\sqrt{xy}}\dd x\dd y=o(r^{1-\rho})\text{ as }r\rightarrow\infty,
\end{equation}
so by proving \eqref{eq:T2_3} we prove the theorem. Now, the left hand side of \eqref{eq:T2_3} can be estimated by
\begin{equation}\label{eq:T2_4}
\tfrac4r\|\Phi_\rho\|_{L^1(\frac14r,\infty)}^2+2\int_{\frac34r}^{\frac54r}\frac{\Phi_\rho(x)}{\sqrt{x}}\left(\int_{|r-x|}^{\frac12r}\frac{\Phi_\rho(y)}{\sqrt{y}}\dd y\right)\dd x,
\end{equation}
where the first term is $O(r^{1-2\rho})$, which decays sufficiently fast. For the second term of \eqref{eq:T2_4} we use a dyadic decomposition of the interval $(|r-x|,\frac12r)$, and Lemma \ref{lm:integrallimitnearzero}, to find that the term between brackets can be bounded up to a constant by $\log(\tfrac{r}{|r-x|})$. H\"older's inequality further gives us the estimate
\begin{equation}\nonumber
\int_{\frac34r}^{\frac54r}\frac{\Phi_\rho(x)}{\sqrt{x}}\log\big|\tfrac{r}{r-x}\big|\dd x
\leq\frac{\|\Phi_\rho\|_{L^q(\frac34r,\infty)}\big\|\log\big|\frac{r}{r-z}\big|\big\|_{L^p(\frac34r,\frac54r)}}{\sqrt{\tfrac34r}},
\end{equation}
so recalling for $q\in(1,\infty)$ that $\|\Phi_\rho\|_{L^q(r,\infty)}\leq O(r^{\frac1q-1}\|\Phi_\rho\|_{L^1(r,\infty)})=O(r^{\frac1q-\rho})$ as $r\rightarrow\infty$ (cf.~proof of Lemma \ref{lm:regularity}), we find that the second term in \eqref{eq:T2_4} is bounded by a term of order $O(r^{\frac12-\rho})$ as $r\rightarrow\infty$, hence \eqref{eq:T2_3} holds.
\end{proof}

\section{Exponential bounds - Theorem \ref{tm:exponentialtails}}\label{sec:exponentialtail}
\subsection{A pointwise exponential upper bound}
Our first result gives an explicit upper bound to the moments of self-similar profiles, and can be seen as an improvement on \cite[Lm.~4.22]{KV15}.
\begin{lm}\label{lm:momentestimateforrhoequaltwo}
If $\Phi_2\in\mathcal{X}_2$ satisfies \eqref{eq:selfsimilarprofileM+} with $\rho=2$ for all $\varphi\in C_c^1([0,\infty))$, then there exists a finite constant $A>0$ such that
\begin{equation}\label{eq:momentestimatesforrohequaltotwo}
\int_{(0,\infty)}x^\gamma\Phi_2(x)\dd x\leq \gamma^\gamma A^{\gamma+1}\text{ for all }\gamma>0.
\end{equation}
\end{lm}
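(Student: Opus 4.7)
The plan is to derive a recursion for the integer moments $M_n := \int_{(0,\infty)} x^n \Phi_2(x)\, \dd x$, close the bound $M_n \leq n^n A^{n+1}$ by induction, and then interpolate to non-integer $\gamma$. Testing \eqref{eq:selfsimilarprofileM+} with $\rho=2$ against $\varphi(x)=x^n$ — justified via cutoffs $x^n \chi(x/R)$ as $R\to\infty$, with finiteness of $M_n$ following inductively from the bound derived below — the left-hand side reduces to $\tfrac{n-1}{2}M_n$ because $x\varphi'(x)-\varphi(x)=(n-1)x^n$ and $\varphi(0)=0$. For the right-hand side, the binomial identity $(x+y)^n + |x-y|^n - 2(x\vee y)^n = 2 \sum_{j \text{ even},\, 2 \leq j \leq n} \binom{n}{j}(x\wedge y)^j (x\vee y)^{n-j}$, combined with the pointwise estimate $\sqrt{xy} \geq x\wedge y$, gives
\begin{equation*}
(n-1)M_n \leq 4 \sum_{\substack{j \text{ even}\\ 2 \leq j \leq n}} \binom{n}{j}\, M_{j-1}\, M_{n-j}.
\end{equation*}

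I then prove $M_n \leq n^n A^{n+1}$ for all $n \in \mathbb{N}_0$ (with $0^0=1$) by induction on $n$, choosing $A$ large enough that the base cases $M_0 \leq A$ and $M_1 \leq A^2$ hold (both finite by Proposition \ref{pr:rescaling}). Plugging the inductive hypothesis into the recursion and using $k^k \leq e^k k!$ for both factors gives the key collapse
\begin{equation*}
\binom{n}{j}(n-j)^{n-j}(j-1)^{j-1} \leq \binom{n}{j}(n-j)!(j-1)! \, e^{n-1} = \frac{n!\, e^{n-1}}{j}.
\end{equation*}
Together with $\sum_{j \text{ even},\, 2 \leq j \leq n} 1/j \leq 1 + \log n$ and Stirling's bound $n! \leq e\sqrt{n}(n/e)^n$, the recursion then produces $M_n \leq C\tfrac{1+\log n}{\sqrt{n}}\, n^n A^{n+1}$, which is $\leq n^n A^{n+1}$ for all $n$ larger than some $n_0$; the finitely many smaller $n$ are absorbed by enlarging $A$.

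For non-integer $\gamma$, write $\gamma = n+t$ with $n \in \mathbb{N}_0$ and $t \in [0,1]$. H\"older's inequality yields $M_\gamma \leq M_n^{1-t} M_{n+1}^t \leq n^{n(1-t)}(n+1)^{(n+1)t} A^{\gamma+1}$, and a Taylor expansion of $x\log x$ around $x=\gamma$ shows that $n(1-t)\log n + (n+1)t\log(n+1) - \gamma \log \gamma = O(1/n)$ uniformly in $t$, so the prefactor is bounded by a universal constant times $\gamma^\gamma$ for $n \geq 1$. On $(0,1)$, the trivial bound $M_\gamma \leq M_0^{1-\gamma} M_1^\gamma \leq A^{\gamma+1}$ combined with $\gamma^\gamma \geq e^{-1/e}$ does the same job. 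A single final enlargement of $A$ by a universal factor delivers $M_\gamma \leq \gamma^\gamma A^{\gamma+1}$ for every $\gamma > 0$.

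The main obstacle is engineering the induction so that it closes: the clean collapse of $\binom{n}{j}(n-j)^{n-j}(j-1)^{j-1}$ to $n! e^{n-1}/j$ is essential, as it lets the prefactor $1/(n-1)$ absorb the $\sqrt{n}(1+\log n)$ growth coming from Stirling and the harmonic sum, so that no growing factor is picked up through the recursion. The truncation justification in the first step and the H\"older interpolation at the end are otherwise routine.
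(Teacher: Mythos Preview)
Your overall strategy---derive a recursion for integer moments via the binomial expansion, close it by induction, then interpolate---matches the paper's. Your combinatorial closure ($k^k\le e^kk!$ plus Stirling) differs from the paper's cleaner device $\binom{n}{j}(n-j)^{n-j}j^j\le n^n$ combined with $(j-1)^{j-1}/j^j\le\tfrac14$, but it does eventually close; the only cost is that you must enlarge $A$ to absorb finitely many small $n$, whereas the paper gets the induction step for all $n\ge3$ directly with $A=2\|\Phi_2\|_2$. Your interpolation step is also correct.

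There is, however, a genuine gap in your first step. Testing with $\varphi(x)=x^n$ ``justified via cutoffs $x^n\chi(x/R)$'' is not routine. With any smooth nonincreasing cutoff $\chi$, the left-hand side acquires the extra term $\tfrac12\int x^n\cdot\tfrac{x}{R}\chi'(\tfrac{x}{R})\,\Phi_2\,\dd x$, which is nonpositive, so that
\[
\tfrac{n-1}{2}\int x^n\chi(\tfrac{x}{R})\,\Phi_2\,\dd x
\;=\;\text{RHS}(R)\;+\;O\!\left(\int_R^{2R}x^n\,\Phi_2\,\dd x\right),
\]
and the error has the wrong sign for an upper bound on the truncated $n$-th moment. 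Since $\|z\chi'(z)\|_\infty\ge1$ for any such cutoff, this error is of order $\int_R^{2R}x^n\Phi_2$, which you cannot control without already knowing $M_n<\infty$; the circularity you flag (``finiteness of $M_n$ following inductively from the bound derived below'') is therefore real. The paper sidesteps this by testing instead with $\varphi_{r,\gamma}(x)=(r^\gamma-x^\gamma)_+$: the resulting boundary contribution $r^\gamma\int_r^\infty\Phi_2$ on the left is rewritten via the identity $\int_r^\infty\Phi_2=r^{-1}\mathcal{I}[\Phi_2](r)$ (the case $\gamma=1$) and then cancels against a matching piece of the right-hand side, producing the clean inequality \eqref{eq:L6_1_3} for the \emph{truncated} moments $m_\gamma=\int_0^r x^\gamma\Phi_2$ with a bound independent of $r$. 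To repair your argument you should either reproduce this cancellation, or cite \cite[Lm.~4.22]{KV15} for a priori finiteness of all moments before testing with~$x^n$.
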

\begin{proof}
Let $r>0$ be fixed arbitrarily, and let $m_\gamma=\int_{(0,r)}x^\gamma\Phi_2(x)\dd x$ (for $\gamma\geq0$). To prove the result it suffices to show that there exists a finite constant $A>0$, independent of $r$, such that $m_\gamma\leq\gamma^\gamma A^{\gamma+1}$ for all $\gamma>0$.

We first recall from the proof of Proposition \ref{pr:rescaling} that $\Phi_2$ satisfies
\begin{equation}\label{eq:L6_1_1}
\int_{(r,\infty)}\Phi_2(x)\dd x=r^{-1}\mathcal{I}[\Phi_2](r),
\end{equation}
with $\mathcal{I}$ as defined in \eqref{eq:defofcalI}. Similar to the derivation of \eqref{eq:P1_19_2}, for any $\gamma>1$ we now approximate $\varphi_{r,\gamma}(x)=(r^\gamma-x^\gamma)_+$ by functions in $C_c^1([0,\infty))$ to obtain
\begin{multline}\label{eq:L6_1_2}
(1-\gamma)\int_{(0,r)}x^\gamma\Phi_2(x)\dd x+r^\gamma\int_{(r,\infty)}\Phi_2(x)\dd x\\
=2\iint_{\{x>y>0\}}\frac{\Phi_2(x)\Phi_2(y)}{\sqrt{xy}}\Delta_y^2\varphi_{r,\gamma}(x)\dd x\dd y.
\end{multline}
Introducing the notation $\phi_\gamma(x)=x^\gamma$, we note for $x\geq y\geq0$ that we have
\begin{equation}\nonumber
\Delta_y^2\varphi_{r,\gamma}(x)=
\begin{cases}
((x+y)^\gamma-r^\gamma)_+-\Delta_y^2\phi_\gamma(x)&\text{if }x\leq r,\\
(r^\gamma-(x-y)^\gamma)_+&\text{if }x>r.
\end{cases}
\end{equation}
Noticing further that
\begin{align}\nonumber
((x+y)^\gamma-r^\gamma)_+&\geq r^{\gamma-1}((x+y)-r)_+,\\\nonumber
(r^\gamma-(x-y)^\gamma)_+&\geq r^{\gamma-1}(r-(x-y))_+,
\end{align}
we find from \eqref{eq:L6_1_2}, where we use \eqref{eq:L6_1_1} in the left hand side, that
\begin{multline}\nonumber
(1-\gamma)\int_{(0,r)}x^\gamma\Phi_2(x)\dd x+r^{\gamma-1}\mathcal{I}[\Phi_2](r)\\
\geq r^{\gamma-1}\mathcal{I}[\Phi_2](r)-2\iint_{\{0<y<x<r\}}\frac{\Phi_2(x)\Phi_2(y)}{\sqrt{xy}}\Delta_y^2\phi_\gamma(x)\dd x\dd y,
\end{multline}
hence
\begin{equation}\label{eq:L6_1_3}
(\gamma-1)\int_{(0,r)}x^\gamma\Phi_2(x)\dd x\leq2\iint_{\{0<y<x<r\}}\frac{\Phi_2(x)\Phi_2(y)}{\sqrt{xy}}\Delta_y^2\phi_\gamma(x)\dd x\dd y.
\end{equation}

Since for $x\geq y\geq0$ there holds $\Delta_y^2\phi_2(x)=2y^2\leq2y\sqrt{xy}$, \eqref{eq:L6_1_3} now yields
\begin{multline}\label{eq:L6_1_4}
m_2\leq2\iint_{\{0<y<x<r\}}2y\,\Phi_2(x)\Phi_2(y)\dd x\dd y\\
=2\iint_{(0,r)^2}(x\wedge y)\Phi_2(x)\Phi_2(y)\dd x\dd y\leq2m_0m_1,
\end{multline}
so using H\"older's inequality, then \eqref{eq:L6_1_4}, and then again H\"older's inequality, we obtain for all $\gamma\in[0,2]$ that
\begin{multline}\nonumber
m_\gamma\leq m_0^{1-\frac\gamma2}m_2^{\frac\gamma2}=m_0^{1-\frac\gamma2}\left(m_2\right)^\gamma m_2^{-\frac\gamma2}\leq m_0^{1-\frac\gamma2}\left(2m_0m_1\right)^{\gamma}m_2^{-\frac\gamma2}\\=2^\gamma m_0^{1+\frac\gamma2}\left(m_1\right)^{\gamma}m_2^{-\frac\gamma2}\leq2^\gamma m_0^{1+\frac\gamma2}\left(m_0^{\frac12}m_2^{\frac12}\right)^{\gamma}m_2^{-\frac\gamma2}=\tfrac12\left(2m_0\right)^{\gamma+1},
\end{multline}
hence $m_\gamma\leq\gamma^\gamma A^{\gamma+1}$ for all $\gamma\in(0,2]$ if $A\geq2\|\Phi_2\|_2\geq2m_0$ (since $\gamma^\gamma>\frac12$).

For $n\in\mathbb{N}\cap(2,\infty)$ we use the binomial formula to note for $x\geq y\geq0$ that
\begin{equation}\nonumber
\Delta_y^2\phi_n(x)=
\sum_{j=2}^{n}(1+(-1)^j)\times{n\choose{j}}x^{n-j}y^j\leq2\sum_{j=2}^{n}{n\choose{j}}x^{n-j}y^{j-1}\times\sqrt{xy},
\end{equation}
which we then use in \eqref{eq:L6_1_3} to obtain
\begin{equation}\label{eq:binomial}
m_n\leq\frac{4}{n-1}\sum_{j=2}^{n}{n\choose{j}}m_{n-j}m_{j-1}.
\end{equation}
Supposing now that $m_\gamma\leq\gamma^\gamma A^{\gamma+1}$ for all $\gamma\in\mathbb{N}\cap(0,n)$ with some $A\geq2\|\Phi_2\|_2$, and since in particular $m_0\leq\|\Phi_2\|_2\leq A$, we use \eqref{eq:binomial} to find
\begin{equation}\label{eq:binomial2}
m_n\leq\left(\frac{4}{n-1}\sum_{j=2}^{n}{n\choose{j}}(n-j)^{n-j}(j-1)^{j-1}\right)A^{n+1},
\end{equation}
where we suppose that $0^0=1$. Also by the binomial formula, we note that
\begin{equation}\nonumber
{n\choose j}(n-j)^{n-j}j^j\leq\sum_{l=0}^n{n\choose l}(n-j)^{n-l}j^l= n^n\text{ for }j\in\mathbb{N}\cap(0,n],
\end{equation}
hence
\begin{equation}\label{eq:binomial3}
\frac{4}{n-1}\sum_{j=2}^{n}{n\choose{j}}(n-j)^{n-j}(j-1)^{j-1}\leq n^n\times4\left(\frac{1}{n-1}\sum_{j=2}^{n}\frac{(j-1)^{j-1}}{j^j}\right).
\end{equation}
Noticing then that the term between brackets on the right hand side of \eqref{eq:binomial3} is actually the average of terms that are all bounded by $\frac14$, it follows that the right hand side of \eqref{eq:binomial2} is bounded by $n^nA^{n+1}$, hence by induction we have that $m_\gamma\leq\gamma^\gamma A^{\gamma+1}$ for all $\gamma\in(0,2]\cup\mathbb{N}$ with any $A\geq2\|\Phi_2\|_2$.

Finally, suppose that $\gamma\in(2,\infty)\setminus\mathbb{N}$ and let $n$ be the smallest integer larger than $\gamma$. By H\"older's inequality we then have $m_\gamma\leq m_0^{1-\frac{\gamma}{n}}m_n^{\frac\gamma{n}}$, so with the above estimates on $m_n$ for $n\in\mathbb{N}$ we find that
\begin{equation}\nonumber
m_\gamma\leq A^{1-\frac{\gamma}{n}}\left(n^nA^{n+1}\right)^{\frac\gamma{n}}
=\gamma^\gamma\big(\tfrac{n}{\gamma}\big)^\gamma A^{\gamma+1}\leq\gamma^\gamma\left(\tfrac32A\right)^{\gamma+1}\text{ with any }A\geq2\|\Phi_2\|_2,
\end{equation}
whereby $m_\gamma\leq\gamma^\gamma(3\|\Phi_2\|_2)^{\gamma+1}$ for all $\gamma>0$, so \eqref{eq:momentestimatesforrohequaltotwo} holds with $A\geq3\|\Phi_2\|_2$.
\end{proof}
Mimicking the proof in \cite{NV14}, we are now able to prove the pointwise exponential upper bound.
\begin{pr}\label{pr:pointwiseupperbound}
If $\Phi_2\in\mathcal{X}_2$ satisfies \eqref{eq:selfsimilarprofileM+} with $\rho=2$ for all $\varphi\in C_c^1([0,\infty))$, then there exists a constant $a\in(0,1)$ such that $\|e^{ar}\Phi_2(r)\|_{L^\infty(1,\infty)}<\infty$.
\end{pr}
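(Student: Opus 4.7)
The plan proceeds in two steps: first to convert the polynomial moment estimate from Lemma \ref{lm:momentestimateforrhoequaltwo} into an exponential $L^1$ integrability of $\Phi_2$, and then to upgrade that integrated bound to the desired pointwise bound using the strong form \eqref{eq:reg_3}. For the first step, Lemma \ref{lm:momentestimateforrhoequaltwo} gives $\int_0^\infty x^n\Phi_2(x)\,dx\leq n^nA^{n+1}$ for each integer $n\geq0$. Stirling's inequality $n^n\leq e^n n!$ yields, by summation of the Taylor series,
\begin{equation}\nonumber
M(a):=\int_0^\infty e^{ax}\Phi_2(x)\,dx=\sum_{n=0}^\infty\frac{a^n}{n!}\int_0^\infty x^n\Phi_2(x)\,dx\leq A\sum_{n=0}^\infty(eaA)^n<\infty
\end{equation}
for every $a\in(0,(eA)^{-1})$. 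In particular, $\int_r^\infty\Phi_2(x)\,dx\leq M(a)e^{-ar}$ for all $r>0$.

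For the second step, I would rewrite \eqref{eq:reg_3} with $\rho=2$ after multiplication by $-2x$ as the identity $(x^2\Phi_2)'(x)=-2xF(x)$, where $F(x)$ denotes the right hand side of \eqref{eq:reg_3}. The estimates of the three summands of $F$ given below will show that $\int_1^\infty y|F(y)|\,dy<\infty$, so $x^2\Phi_2(x)$ has a limit at infinity; the finiteness of the second moment (Lemma \ref{lm:momentestimateforrhoequaltwo} with $\gamma=2$) forces that limit to be zero. Integration then gives
\begin{equation}\nonumber
x^2\Phi_2(x)=2\int_x^\infty yF(y)\,dy,
\end{equation}
so it suffices to show the right hand side is $O(e^{-ax})$. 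Decompose $F=I_1+I_2-I_3$ as in \eqref{eq:reg_3}. For the coagulation term $I_2$, the bound $y/\sqrt{\xi(y+\xi)}\leq2/\sqrt{3}$ valid for $\xi\geq y/2$ together with the substitution $u=y+\xi$ and Fubini reduce $\int_x^\infty yI_2(y)\,dy$ to an expression dominated by $CM(a)^2e^{-2ax}$. For $I_3$ the estimate $\int_{y/2}^y\Phi_2(\xi)/\sqrt{\xi}\,d\xi\leq\sqrt{2/y}\,\|\Phi_2\|_2$ gives $yI_3(y)\leq C\|\Phi_2\|_2\Phi_2(y)$, and hence $\int_x^\infty yI_3(y)\,dy\leq C\|\Phi_2\|_2 M(a)e^{-ax}$. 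For the second-difference term $I_1$, use $|\Delta_\xi^2g(y)|\leq\sqrt{2/y}\,[\Phi_2(y+\xi)+\Phi_2(y-\xi)+2\Phi_2(y)]$ on $\xi\in(0,y/2)$, where $g(z)=\Phi_2(z)/\sqrt{z}$, and handle each of the three resulting contributions by analogous changes of variable and Fubini. Summing the three estimates yields $x^2\Phi_2(x)\leq Ce^{-ax}$, whence $\Phi_2(r)\leq Ce^{-ar}/r^2\leq Ce^{-ar}$ for $r\geq1$; one finally shrinks $a$ if necessary to have $a\in(0,1)$.

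The main technical obstacle lies in the detailed estimation of $I_1$ and, more generally, in managing the weight $\Phi_2(\xi)/\sqrt{\xi}$ near $\xi=0$, since this factor is not a priori $L^1$: Lemma \ref{lm:integrallimitnearzero}, which gives $\int_0^R\Phi_2\leq C\sqrt{R}$, combined with a dyadic decomposition of $(0,y/2)$, supplies the substitute integrability one needs. Once this bookkeeping is in place, the rest of the analysis reduces to routine Fubini arguments and elementary manipulations with exponentials.
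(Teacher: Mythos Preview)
Your first step, converting the moment bound of Lemma \ref{lm:momentestimateforrhoequaltwo} into exponential integrability of $\Phi_2$ and hence an exponential tail estimate on $\int_r^\infty\Phi_2$, is correct and matches the paper's opening move.

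The gap is in your treatment of $I_1$. By applying the triangle inequality to $\Delta_\xi^2 g(y)$ you discard precisely the cancellation that makes $I_1(y)$ finite in the first place. After that step, the $2\Phi_2(y)$ contribution is
\[
2\sqrt{2y}\,\Phi_2(y)\int_0^{y/2}\frac{\Phi_2(\xi)}{\sqrt{\xi}}\,\dd\xi,
\]
and the same $\xi$-integral reappears after Fubini in the $\Phi_2(y\pm\xi)$ pieces, since as $\xi\to0$ the inner $y$-integral tends to a strictly positive limit. Lemma \ref{lm:integrallimitnearzero} together with a dyadic decomposition gives only $\int_\epsilon^1\tfrac{\Phi_2(\xi)}{\sqrt{\xi}}\,\dd\xi\leq C|\log\epsilon|$: each dyadic shell contributes a bounded amount, but there are infinitely many of them, so the ``substitute integrability'' you invoke is in fact a logarithmic divergence, and the argument does not close.

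The paper sidesteps this by working not with \eqref{eq:reg_3} but with the differentiated integral identity \eqref{eq:T2_2} at $\rho=2$, which yields directly
\[
r\Phi_2(r)\leq\int_r^\infty\Phi_2(x)\,\dd x+\iint_{\{x+y>r,\,(x\vee y)<r\}}\frac{\Phi_2(x)\Phi_2(y)}{\sqrt{xy}}\,\dd x\,\dd y.
\]
In this double integral, when one variable is small the other is forced into $(r-\tfrac12,r)$, and that contribution is bounded by a constant times $\|e^{az}\Phi_2(z)\|_{L^\infty(r-\frac12,r)}$. This produces a self-referencing inequality $e^{ar}\Phi_2(r)\leq\tfrac1r\big(1+\|e^{az}\Phi_2(z)\|_{L^\infty(1,r)}\big)$ for large $r$, which closes by a short bootstrap. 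Your route via \eqref{eq:reg_3} could in principle be repaired by retaining the cancellation in $\Delta_\xi^2 g$---for instance through the H\"older regularity of $\Phi_2$---but one would then have to track how the local H\"older constant behaves as $y\to\infty$, and this is not addressed.
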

\begin{proof}
Recall first of all that $\Phi_2\in C^\infty((0,\infty))$ (cf.~Proposition \ref{pr:regularity}).

Now, let $A>\frac1{2e}$ be a constant such that \eqref{eq:momentestimatesforrohequaltotwo} holds, which exists by the proof of Lemma \ref{lm:momentestimateforrhoequaltwo}. For any $r>0$ there then holds
\begin{equation}\nonumber
\int_{(r,\infty)}\Phi_2(x)\dd x\leq r^{-\gamma}\int_{(0,\infty)}x^\gamma\Phi_2(x)\dd x\leq A\exp\left(\gamma\log\left(\tfrac{\gamma A}{r}\right)\right)\text{ for all }\gamma>0,
\end{equation}
where the right hand side is minimal if $\gamma=\frac{r}{eA}$, so
\begin{equation}\label{eq:Ppub_1}
\int_{(r,\infty)}\Phi_2(x)\dd x\leq A\exp\left(-\tfrac{r}{eA}\right)\text{ for all }r>0.
\end{equation}

We next differentiate \eqref{eq:P1_19_2} with respect to $r$, i.e.~we set $\rho=2$ in \eqref{eq:T2_2}, and we drop the double integral over $\{|x-y|<r,(x\vee y)>r\}$ to find for $r>0$ that
\begin{equation}\label{eq:Ppub_2}
r\Phi_2(r)\leq\int_{(r,\infty)}\Phi_2(x)\dd x+\iint_{\{x+y>r,(x\vee y)<r\}}\frac{\Phi_2(x)\Phi_2(y)}{\sqrt{xy}}\dd x\dd y.
\end{equation}
Since the integrand in the double integral on the right hand side of \eqref{eq:Ppub_2} is symmetric, for $r>1$ we can now estimate that term by
\begin{equation}\label{eq:Ppub_3}
2\iint_{\{x>\frac{r}{2},y>\frac12\}}\frac{\Phi_2(x)\Phi_2(y)}{\sqrt{xy}}\dd x\dd y+2\int_{r-\frac12}^r\frac{\Phi_2(x)}{\sqrt{x}}\left(\int_{r-x}^{\frac12}\frac{\Phi_2(y)}{\sqrt{y}}\dd y\right)\dd x.
\end{equation}
For $\xi\in(0,\frac12)$ we then let $n$ be the smallest integer such that $2^{-n-1}<\xi$, and we use Lemma \ref{lm:integrallimitnearzero} to obtain
\begin{equation}\label{eq:Ppub_4}
\int_{\xi}^\frac12\frac{\Phi_2(y)}{\sqrt{y}}\dd y\leq\sum_{j=1}^n\int_{2^{-j-1}}^{2^{-j}}\frac{\Phi_2(y)}{\sqrt{y}}\dd y\leq\sum_{j=1}^n\frac{C\cdot\sqrt{2^{-j}}}{\sqrt{2^{-j-1}}}\leq\tfrac{C\sqrt2}{\log2}|\log\xi|,
\end{equation}
so combining \eqref{eq:Ppub_2}, \eqref{eq:Ppub_3} and \eqref{eq:Ppub_4}, and using \eqref{eq:Ppub_1}, we find for $r>1$ that
\begin{equation}\label{eq:Ppub_5}
r\Phi_2(r)\leq e^{-\frac12\frac{r}{eA}}\left(Ae^{-\frac12\frac{r}{eA}}+\tfrac4{\sqrt{r}}A^2\right)+\frac{\frac{4C}{\log2}}{\sqrt{r}}\int_{r-\frac12}^r\Phi_2(x)|\log(r-x)|\dd x.
\end{equation}
Multiplying \eqref{eq:Ppub_5} by $\frac1re^{\frac12\frac{r}{eA}}$, and choosing $R\gg1$ sufficiently large, then yields
\begin{multline}\label{eq:Ppub_6}
e^{\frac12\frac{r}{eA}}\Phi_2(r)
\leq\frac1r\left(1+\frac{\frac{4C}{\log2}}{\sqrt{r}}\int_0^{\frac12}e^{\frac12\frac{x}{eA}}|\log x|\dd x\times\left\|e^{\frac12\frac{z}{eA}}\Phi_2(z)\right\|_{L^\infty(r-\frac12,r)}\right)\\
\leq\frac1r\left(1+\left\|e^{\frac12\frac{z}{eA}}\Phi_2(z)\right\|_{L^\infty(1,r)}\right)\text{ for all }r>R,
\end{multline}
so setting $a=\frac12\frac1{eA}\in(0,1)$, using \eqref{eq:Ppub_6}, and iterating, we obtain
\begin{multline}\label{eq:Ppub_7}
\|e^{az}\Phi_2(z)\|_{L^\infty(1,r)}
\leq\|e^{az}\Phi_2(z)\|_{L^\infty(1,R)}+\tfrac1R\left(1+\|e^{az}\Phi_2(z)\|_{L^\infty(1,r)}\right)\\
\leq\tfrac{R}{R-1}\left(\|e^{az}\Phi_2(z)\|_{L^\infty(1,R)}+\tfrac1R\right)\text{ for all }r>R.
\end{multline}
The claim now follows since the right hand side of \eqref{eq:Ppub_7} is independent of $r$.
\end{proof}

\subsection{An exponential lower bound in integral form}
We will prove the following result, of which the lower bound is a corollary.
\begin{pr}\label{pr:averagedlowerbound}
If $\Phi_2\in\mathcal{X}_2$ satisfies \eqref{eq:selfsimilarprofileM+} with $\rho=2$ for all $\varphi\in C_c^1([0,\infty))$, and if $\Phi_2$ is not identically zero on $(0,\infty)$, then there exists a finite constant $B>1$ such that
\begin{equation}\nonumber
\inf_{R\geq0}\left\{\int_{(R,R+1)}e^{Bx}\Phi_2(x)\dd x\right\}>0.
\end{equation}
\end{pr}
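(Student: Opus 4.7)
The strategy is to combine the weak equation tested against exponentials with the strict positivity of $\Phi_2$ on $(0,\infty)$ from Proposition~\ref{pr:regularity}. I fix an interval $[\alpha,\beta]\subset(0,\infty)$ and $\delta>0$ with $\Phi_2\geq\delta$ on $[\alpha,\beta]$, and set $M(B):=\int_0^\infty e^{Bx}\Phi_2(x)\,dx$; the latter is finite at least for $B<a$ by Proposition~\ref{pr:pointwiseupperbound}, and I denote $B_c:=\sup\{B\geq 0 : M(B)<\infty\}\in[a,\infty]$.

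The first step is to show $B_c<\infty$. Approximating $\varphi(x)=e^{Bx}$ by compactly supported smooth test functions and exploiting the identity $\mathcal{D}_2[e^{B\cdot}](x,y)=2e^{B(x\vee y)}(\cosh(B(x\wedge y))-1)\geq 0$, the weak equation \eqref{eq:selfsimilarprofileM+} with $\rho=2$ yields, for $B\in[0,B_c)$,
\begin{equation*}
BM'(B)-M(B)+\|\Phi_2\|_2=\iint_{[0,\infty)^2}\tfrac{\Phi_2(x)\Phi_2(y)}{\sqrt{xy}}\,2e^{B(x\vee y)}\bigl(\cosh(B(x\wedge y))-1\bigr)\,dxdy.
\end{equation*}
Restricting the right hand side to $(x,y)\in[\beta,\infty)\times[\alpha,\beta]$, on which $x\vee y=x$ and $x\wedge y=y\geq\alpha$, produces a lower bound of the form $c_1(\cosh(B\alpha)-1)\int_\beta^\infty e^{Bx}\Phi_2(x)/\sqrt{x}\,dx$, and a comparison with the characteristic concentration scale $X(B)$ of the measure $e^{Bx}\Phi_2\,dx$ (controlled from above by the pointwise bound of Proposition~\ref{pr:pointwiseupperbound}) yields a Riccati-type differential inequality $M'(B)\gtrsim M(B)^2/(B X(B))$ that forces finite-time blow-up, i.e.~$B_c<\infty$.

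The second step is to promote $B_c<\infty$ to the uniform averaged lower bound. I argue by contradiction: suppose that for some $B_0>B_c$ the infimum $\inf_{R\geq 0}\int_R^{R+1}e^{B_0 x}\Phi_2(x)\,dx$ vanishes. Then one can extract $R_n\to\infty$ with $\int_{R_n}^{R_n+1}e^{B_0 x}\Phi_2\to 0$, and a partition-of-unity argument combined with the smoothness of $\Phi_2$ (Proposition~\ref{pr:regularity}) splits $M(B_0)$ into a ``peaks'' piece, finite by the pointwise upper bound, and a ``gaps'' piece, summable by the assumed decay along $R_n$; this contradicts $B_0>B_c$ and yields the claimed lower bound for any $B>B_c$. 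To ensure $B>1$, one can either choose the seed interval $[\alpha,\beta]$ close to the origin in Step~1 (which in the Riccati analysis pushes $B_c$ upward) or invoke the scaling invariance of Proposition~\ref{pr:rescaling} to rescale so that the corresponding $B_c$ exceeds $1$.

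The principal obstacle is the second step: divergence of $M$ at $B_c$ only asserts that a sum of interval integrals diverges, which a priori permits wild oscillations of $R\mapsto\int_R^{R+1}\Phi_2$ and subsequences of near-zero values. Ruling out such pathological behaviour rests on combining the $C^\infty$ regularity of $\Phi_2$ with the convolution structure of the production term $\int_0^\infty\Phi_2(y)\Phi_2(x+y)/\sqrt{y(x+y)}\,dy$ appearing in the rewritten pointwise equation \eqref{eq:reg_3}, which couples the distribution of $\Phi_2$ at different scales and should preclude isolated deep gaps of length one on the positive axis.
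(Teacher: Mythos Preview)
Your Step~2 is where the argument breaks down, and you essentially concede this in your final paragraph. Knowing that $M(B_0)=\sum_{n\geq0}\int_n^{n+1}e^{B_0x}\Phi_2\,\dd x=\infty$ is perfectly compatible with $\liminf_{R\to\infty}\int_R^{R+1}e^{B_0x}\Phi_2\,\dd x=0$: the series can diverge along a sparse subsequence of large terms while most terms are arbitrarily small. Your proposed ``peaks/gaps'' splitting fails as stated, because for $B_0>B_c\geq a$ the pointwise upper bound $\Phi_2(r)\leq Ce^{-ar}$ only yields $e^{B_0x}\Phi_2(x)\leq Ce^{(B_0-a)x}$, which is \emph{growing}, so the peak contributions are not finite. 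Neither smoothness nor the convolution structure of the production term in \eqref{eq:reg_3} precludes deep gaps of unit length without a quantitative propagation mechanism, and you have not supplied one; the closing sentence is a hope, not an argument. (Step~1 is also thin---the ``characteristic concentration scale $X(B)$'' is never defined and its control via Proposition~\ref{pr:pointwiseupperbound} is asserted, not proved---but a Riccati blow-up for $M$ is at least a plausible strategy.)

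The paper bypasses $M(B)$ entirely and works directly with the infimum quantity $I(b,R)=c\,\inf_{r\in[0,R]}\int_r^{r+1}e^{bx}\Phi_2\,\dd x$. Testing the dual equation \eqref{eq:aelb_dual} against an explicit subsolution built from the fundamental solution of the fractional heat equation \eqref{eq:ide} with $\alpha=\tfrac12$ (Definition~\ref{df:subsolution}, Lemmas~\ref{lm:L65}--\ref{lm:L67}) yields the recursion $I(b,R)\geq t\bigl(I(be^{-t/2},\tfrac45R)\bigr)^2$ for $t\in(0,\tfrac1b)$. Iterating with $b_n\uparrow B$, $R_n=(\tfrac54)^nR_0$, $t_n=n^{-2}\wedge B^{-1}$, the doubling from squaring beats the logarithmic losses from $t_n$, so $I(b_n,R_n)$ stays bounded below. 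This subsolution construction \emph{is} the missing gap-filling mechanism: the nonlocal diffusion transports mass from any interval $(R,R+1)$ down to a region of uniform positivity, and the test function makes this transport quantitative.
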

Supposing that $\Phi_2\in\mathcal{X}_2$ is as assumed in Proposition \ref{pr:averagedlowerbound}, then $\Phi_2$ is smooth and strictly positive on $(0,\infty)$, and it satisfies \eqref{eq:selfsimilarprofileL1} in particular for all $\varphi\in C_c^1((0,\infty))$, which we can rewrite as
\begin{equation}\label{eq:aelb_dual}
\int_0^\infty\left[\tfrac12(x\varphi_x(x)-\varphi(x))-\int_0^x\frac{\Phi_2(y)}{\sqrt{xy}}\Delta_y^2\varphi(x)\dd y\right]\Phi_2(x)\dd x=0.
\end{equation}
Note that \eqref{eq:aelb_dual} also holds for $\varphi\in W_0^{1,\infty}((0,\infty))$, so given a function $\varphi:(0,t)\rightarrow W_0^{1,\infty}((0,\infty))$ with $\psi_s(s,\cdot)\in L^\infty(0,\infty)$ that satisfies
\begin{equation}\label{eq:aelb_dual_2}
\varphi_s(s,x)\leq-\tfrac12(x\varphi_x(s,x)-\varphi(s,x))+\int_0^x\frac{\Phi_2(y)}{\sqrt{xy}}\Delta_y^2[\varphi(s,\cdot)](x)\dd y
\end{equation}
for almost all $s\in(0,t)$ and $x\in(0,\infty)$, there holds
\begin{equation}\nonumber
\int_0^\infty\varphi(0,x)\Phi_2(x)\dd x\geq\int_0^\infty\varphi(s,x)\Phi_2(x)\dd x\text{ for all }s\in[0,t].
\end{equation}
We will construct such a function, for which we will need the following definition.
\begin{df}\label{df:subsolution}
Let $u$ be the solution to \eqref{eq:ide} with $\alpha=\frac12$ that satisfies $u(0,x)={\rm sgn}(x)$ for all $x\in\mathbb{R}$, i.e.~(cf.~Lemma \ref{lm:fractionalheatequation})
\begin{equation}\nonumber
u(s,x)=2\int_0^{\frac{x}{s^2}}v_{\frac12}(z)\dd z=:w(\tfrac{x}{s^2})\quad[\,s>0,\,x\in\mathbb{R}\,]
\end{equation}
where $w:\mathbb{R}\rightarrow(-1,1)$ is a bijection that is smooth, increasing, odd, and concave on $\mathbb{R}_+$, since $v_\frac12$ is nonincreasing there. Let further $c_2>c_1>0$ be such that $w(c_1)=\frac12$ and $w(c_2)=\frac34$. For any $s>0$ the convex mappings $x\mapsto\frac12(\frac{x}{c_1s^2}\wedge1)$ and $x\mapsto\frac34(\frac{x}{c_2s^2}\wedge1)$ then lie below $u(s,\cdot)$.
\end{df}
\begin{figure}
\center
\begin{tikzpicture}[scale=4]
\draw(-.375,0)--(1.675,0);
\draw[dotted](0,0)--(0,1.125);
\draw[dotted](-.375,1)--(1.675,1);
\draw[dotted](-.375,.75)--(1.675,.75);
\draw[dotted](-.375,.5)--(1.675,.5);
\draw
(-.375,1) node[anchor=east] {1}
(-.375,.75) node[anchor=east] {$\frac34$}
(-.375,.5) node[anchor=east] {$\frac12$}
(-.375,.25) node[anchor=east] {$\frac14$};
\draw[domain=-0.25:1.5,smooth,variable=\x,dashed] plot (\x,{exp(\x)/4});
\draw[domain=0:1.5,smooth,variable=\x] plot (\x,1);
\draw[domain=0:1.5,smooth,variable=\x,thick] plot (\x,{2*rad(atan(4*\x))/pi});
\draw[domain=0:((1-cot(deg(8/pi)))/4),smooth,variable=\x] plot (\x,{.75*\x/((1-cot(deg(8/pi)))/4)});
\draw[domain=((1-cot(deg(8/pi)))/4):1.5,smooth,variable=\x] plot (\x,.75);
\draw[domain=0:.25,smooth,variable=\x] plot (\x,{.5*\x*4});
\draw[domain=.25:1.5,smooth,variable=\x] plot (\x,.5);
\draw[dotted](.1183128232890882,0)node[anchor=north east]{$\ell_1(s)$}--(.1183128232890882,.281399042248582);
\draw[dotted](.1183128232890882,0)--(.1183128232890882,-.0625);
\draw[dotted](.1183128232890882-.015625,-.0625)--(.1183128232890882,-.0625);
\draw[dotted](.25,0)node[anchor=north]{$c_1s^2$}--(.25,.5);
\draw[dotted]({(1-cot(deg(8/pi)))/4},0)node[anchor=north]{$c_2s^2$}--({(1-cot(deg(8/pi)))/4},.75);
\draw[dotted](1.252253608443016,0)node[anchor=north]{$\ell_2(s)$}--(1.252253608443016,.87455442344328);
\draw[dotted](1.386294361119891,0)node[anchor=north west]{$\frac1b\log4$}--(1.386294361119891,1);
\draw[dotted](1.386294361119891,0)--(1.386294361119891,-.0625);
\draw[dotted](1.386294361119891+.015625,-.0625)--(1.386294361119891,-.0625);
\end{tikzpicture}
\caption{\label{fig:1} For $s>0$ small and $b>1$ large, we have ${\rm sgn}(x)$, $u(s,x)$ (thick) and $\frac14e^{bx}$ (dashed), as well as $\frac12(\frac{x}{c_1s^2}\wedge1)$ and $\frac34(\frac{x}{c_2s^2}\wedge1)$.}
\end{figure}
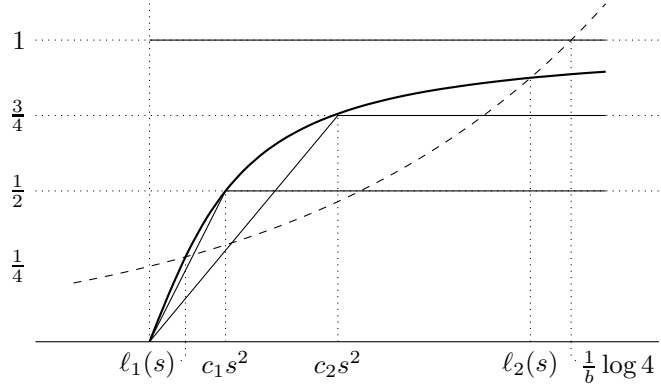
\begin{figure}
\center
\begin{tikzpicture}[xscale=5,yscale=2]
\draw(-.125,0)--(1.25,0);
\draw[dotted](1,0)--(1,3);
\draw[dotted](-.125,e)--(1.25,e)node[anchor=west]{$e^b$};
\draw[dotted](-.125,e*.75)--(1.25,e*.75)node[anchor=west]{$\frac34e^b$};
\draw[dotted](-.125,e*.5)--(1.25,e*.5)node[anchor=west]{$\frac12e^b$};
\draw[dotted](-.125,e*.25)--(1.25,e*.25)node[anchor=west]{$\frac14e^b$};
\draw[domain=0:1.125,smooth,variable=\x,dotted] plot (\x,{e*\x});
\draw[domain=.25:1.125,smooth,variable=\x,dotted] plot (\x,{e*(\x-.25)});
\draw[domain=.5:1.125,smooth,variable=\x,dotted] plot (\x,{e*(\x-.5)});
\draw[domain=0:1.125,smooth,variable=\x,dotted] plot (\x,{exp(\x)});
\draw[domain=0:1.125,smooth,variable=\x,dashed] plot (\x,{exp(\x)/4});
\draw[domain=.075:1,smooth,variable=\x] plot (\x,{e*\x});
\draw[domain=.125:1,smooth,variable=\x,thick] plot (\x,{e*(2*rad(atan(10*(1-\x)))/pi+(\x-1))});
\draw[domain=(1-(cot(deg(pi/8)))/10):1,smooth,variable=\x] plot (\x,{e*(\x-1)*(1-.75/((cot(deg(pi/8)))/10))});
\draw[domain=.325:(1-(cot(deg(pi/8)))/10),smooth,variable=\x] plot (\x,{e*(\x-.25)});
\draw[domain=.9:1,smooth,variable=\x] plot (\x,{e*(4-4*\x)});
\draw[domain=.575:.9,smooth,variable=\x] plot (\x,{e*(\x-.5)});
\draw[dotted](.951686022628324,0)node[anchor=north west]{$r_1(s)$}--(.951686022628324,.6475182253851713);
\draw[dotted](.951686022628324,0)--(.951686022628324,-.125);
\draw[dotted](.951686022628324+.0125,-.125)--(.951686022628324,-.125);
\draw[dotted](.9,-.2)node[anchor=north]{$1-c_1s^2$}--(.9,{e*.4});
\draw[dotted]({(1-(cot(deg(pi/8)))/10)},0)node[anchor=north]{$1-c_2s^2$}--({(1-(cot(deg(pi/8)))/10)},{e*((-(cot(deg(pi/8)))/10))*(1-.75/((cot(deg(pi/8)))/10))});
\draw[dotted](.1892611704921516,0)node[anchor=north]{$r_2(s)$}--(.1892611704921516,0.302089124584210);
\draw[dotted](.1018284310941419,0)node[anchor=north east]{$O(\frac1b)$}--(.1018284310941419,.276798373863700);
\draw[dotted](.1018284310941419,0)--(.1018284310941419,-.125);
\draw[dotted](.1018284310941419-.0125,-.125)--(.1018284310941419,-.125);
\end{tikzpicture}
\caption{\label{fig:2} For $s>0$ small and $b>1$ large, we have $e^b({\rm sgn}(1-x)+b(x-1))$, $e^b(u(s,1-x)+b(x-1))$ (thick), $\frac14e^{bx}$ (dashed) and $e^{bx}$ (dotted), as well as $e^b(\frac12(\frac{1-x}{c_1s^2}\wedge1)+b(x-1))$ and $e^b(\frac34(\frac{1-x}{c_2s^2}\wedge1)+b(x-1))$.}
\end{figure}
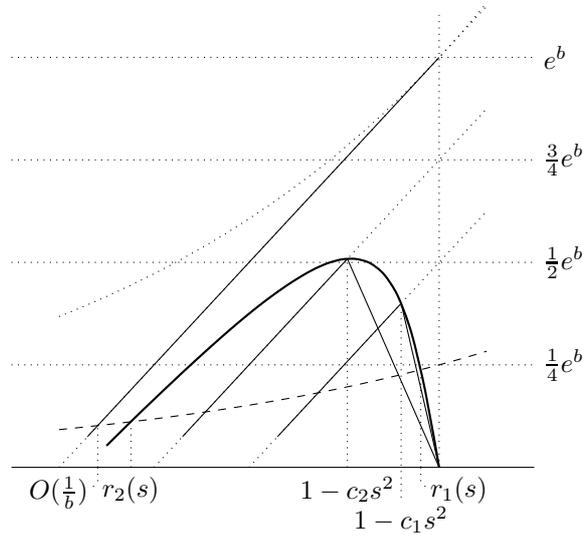
For fixed $R,b>1$ we first compare the functions $f^1(s,x)=e^{bR}u(s,x-R)$ and $f^2(s,x)=\frac14e^{bx}$. In particular we are interested in the solutions $x(s)$ to $f^1(s,x(s))=f^2(s,x(s))$, which for small $s>0$ are given by $x(s)=R+\ell_i(s)$, $i=1,2$, with $\ell_2(s)\geq \ell_1(s)>0$ the solutions to $u(s,\ell_i(s))=\frac14e^{b\ell_i(s)}$. From Figure \ref{fig:1} we then find that as long as $u(s,c_1s^2)=\frac12>\frac14e^{bc_1s^2}$ and $u(s,c_2s^2)=\frac34>\frac14e^{bc_2s^2}$ hold, i.e.~as long as $s^2<\frac1b\min\{\frac1{c_1}\log2,\frac1{c_2}\log3\}$, then there exist two different solutions $\ell_i(s)$, and there holds $\ell_2(s)-\ell_1(s)>(c_2-c_1)s^2>0$.

We then compare $f^3(s,x)=e^{b(R+1)}(u(s,R+1-x)+b(x-(R+1)))$ with $f^2$. We are interested in the solutions $x(s)$ to $f^3(s,x(s))=f^2(s,x(s))$, which for small $s>0$ are given by $x(s)=R+r_i(s)$, $i=1,2$, where $r_2(s)\leq r_1(s)<1$ are the solutions to $e^b(u(s,1-r_i(s))+b(r_i(s)-1))=\frac14e^{br_i(s)}$. Figure \ref{fig:2} now shows that as long as both $e^b(u(s,c_1s^2)-bc_1s^2)=\frac12e^b-be^bc_1s^2>\frac14e^{bc_1s^2}$ and $e^b(u(s,c_2s^2)-bc_2s^2)=\frac34e^b-be^bc_2s^2>\frac14e^{bc_2s^2}$ hold, i.e.~as long as $s^2<\frac1b\min\{\frac1{c_1}q(b,\frac12),\frac1{c_2}q(b,\frac34)\}$ with $q(b,\alpha)$ such that $e^b(\alpha-q(b,\alpha))=\frac14e^{q(b,\alpha)}$, then there exist two different solutions $r_i(s)$, and $r_1(s)-r_2(s)>(c_2-c_1)s^2>0$. Note further that indeed $r_2(s)\geq\frac1bq(b,1)=O(\frac1b)$ as $b\rightarrow\infty$.

For $b\gg1$ sufficiently large, we then define $f:(0,\frac1b)\rightarrow W_0^{1,\infty}((0,\infty))$ by
\begin{equation}\label{eq:f}
f(s,\cdot)=
\begin{cases}
f^1(s,\cdot)&\text{on }[R,R+\ell_2(s)),\\
f^2(s,\cdot)&\text{on }[R+\ell_2(s),R+r_2(s)],\\
f^3(s,\cdot)&\text{on }(R+r_2(s),R+1],\\
0&\text{else.}
\end{cases}
\end{equation}
The following lemma will be useful.
\begin{lm}\label{lm:L64}
Let $\Phi_2\in\mathcal{X}_2$ be as in the statement of Proposition \ref{pr:averagedlowerbound}. Then there exists a finite constant $C>0$ such that
\begin{equation}\nonumber
\int_z^\infty\frac{\Phi_2(e^{s/2}y)}{\sqrt{y}}\dd y\leq C\left((1+|\log z|)\wedge \frac1{\sqrt{z}}\right)\text{ for all }z>0\text{ and all }s\geq0.
\end{equation}
\end{lm}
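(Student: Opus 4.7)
The plan is to reduce the integral to one independent of $s$ via substitution, and then to prove the two bounds $C(1+|\log z|)$ and $C/\sqrt{z}$ separately (since to bound an expression by $a\wedge b$ it suffices to bound it by each of $a$ and $b$). Setting $w=e^{s/2}y$ immediately gives
\begin{equation*}
\int_z^\infty\frac{\Phi_2(e^{s/2}y)}{\sqrt{y}}\dd y=e^{-s/4}\int_Z^\infty\frac{\Phi_2(w)}{\sqrt{w}}\dd w,\qquad Z:=e^{s/2}z,
\end{equation*}
so everything reduces to controlling the right-hand side uniformly in $s\geq 0$.

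Two elementary estimates will suffice. First, since $\Phi_2\in\mathcal{X}_2$ has finite total mass $\|\Phi_2\|_2$, the crude inequality $1/\sqrt{w}\leq 1/\sqrt{Z}$ on $[Z,\infty)$ yields $\int_Z^\infty\Phi_2(w)/\sqrt{w}\,\dd w\leq\|\Phi_2\|_2/\sqrt{Z}$ for every $Z>0$. Second, for $Z\in(0,1]$, the dyadic argument used to obtain \eqref{eq:Ppub_4} (choose $n\in\mathbb{N}_0$ with $2^{-n-1}<Z\leq 2^{-n}$ and apply Lemma \ref{lm:integrallimitnearzero} on each dyadic piece $(2^{-j-1},2^{-j}]$) gives $\int_Z^1\Phi_2(w)/\sqrt{w}\,\dd w\leq C(1+|\log Z|)$, and combined with the trivial estimate $\int_1^\infty\Phi_2(w)/\sqrt{w}\,\dd w\leq\|\Phi_2\|_2$ we then have $\int_Z^\infty\Phi_2(w)/\sqrt{w}\,\dd w\leq C'(1+|\log Z|)$ for all $Z\leq 1$.

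It then remains to insert the prefactor $e^{-s/4}$ and carefully compare $Z$ with $z$ and with $1$. For the $C/\sqrt{z}$ bound the first estimate gives $e^{-s/4}\cdot\|\Phi_2\|_2/\sqrt{Z}=\|\Phi_2\|_2\, e^{-s/2}/\sqrt{z}\leq\|\Phi_2\|_2/\sqrt{z}$, finishing that side. For the $C(1+|\log z|)$ bound I will split by whether $Z\leq 1$ or $Z>1$: in the first case $|\log Z|=-\log Z=|\log z|-s/2\leq|\log z|$, so the second estimate and $e^{-s/4}\leq 1$ yield the bound; in the second case the first estimate gives a prefactor $\|\Phi_2\|_2\, e^{-s/2}/\sqrt{z}$, and one checks $e^{-s/2}/\sqrt{z}\leq 1\leq 1+|\log z|$ by cases (trivially for $z\geq 1$, while for $z<1$ the hypothesis $Z>1$ forces $e^{-s/2}<z$, whence $e^{-s/2}/\sqrt{z}<\sqrt{z}\leq 1$).

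There is no genuine obstacle in this argument; once the substitution is performed, the analytic input comes entirely from Lemma \ref{lm:integrallimitnearzero} and the finiteness of $\|\Phi_2\|_2$, and everything else is bookkeeping on the three quantities $z$, $Z$, and $1$.
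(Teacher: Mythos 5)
Your argument is correct and takes essentially the same route as the paper: both proofs obtain the $1/\sqrt{z}$ bound from the finite total mass and the logarithmic bound from a dyadic decomposition combined with Lemma \ref{lm:integrallimitnearzero}. The only cosmetic difference is that you normalize first via the change of variables $Z=e^{s/2}z$ and then do a case split on $Z\lessgtr 1$, whereas the paper estimates the $s$-dependent integral directly and lets the factors of $e^{s/2}$ cancel inside each dyadic step.
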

\begin{proof}
We first note the trivial estimate that
\begin{equation}\label{eq:L64_1}
\int_z^\infty\frac{\Phi_2(e^{s/2}y)}{\sqrt{y}}\dd y\leq\frac{e^{-s/2}}{\sqrt{z}}\int_{ze^{s/2}}^\infty\Phi_2(y)\dd y\leq\frac{\|\Phi_2\|_2}{\sqrt{z}}\text{ for $z>0$ and $s\geq0$.}
\end{equation}
For $z\in(0,\frac12)$, let now $n$ be the smallest integer such that $2^{-n-1}<z$, so that using Lemma \ref{lm:integrallimitnearzero} we find
\begin{multline}\label{eq:L64_2}
\int_z^\frac12\frac{\Phi_2(e^{s/2}y)}{\sqrt{y}}\dd y
\leq\sum_{j=1}^n\int_{2^{-j-1}}^{2^{-j}}\frac{\Phi_2(e^{s/2}y)}{\sqrt{y}}\dd y
\leq\sum_{j=1}^n\frac{e^{-s/2}\int_0^{e^{s/2}2^{-j}}\Phi_2(y)\dd y}{\sqrt{2^{-j-1}}}\\
\leq\sum_{j=1}^n\frac{e^{-s/2}C\cdot\sqrt{e^{s/2}2^{-j}}}{\sqrt{2^{-j-1}}}\leq C\sqrt{2}\,n
\leq\tfrac{C\sqrt2}{\log2}|\log z|\text{ for all }s\geq0.
\end{multline}
The claim then follows by combining \eqref{eq:L64_1} and \eqref{eq:L64_2}.
\end{proof}
\begin{lm}\label{lm:L65}
Let $\Phi_2\in\mathcal{X}_2$ be as in the statement of Proposition \ref{pr:averagedlowerbound}. There then exist large constants $R_0,b_0\gg1$ such that if for $R\geq R_0$ and $b\geq b_0$ the function $f:(0,\frac1b)\rightarrow W_0^{1,\infty}((0,\infty))$ is given by \eqref{eq:f}, then for all $s\in(0,\frac1b)$ the function $\psi(s,x):=e^{-\log(b)s}f(s,x)$ satisfies
\begin{equation}\label{eq:L65_0}
\psi_s(s,x)\leq\int_0^x\frac{\Phi_2(e^{s/2}y)}{\sqrt{xy}}\Delta_y^2[\psi(s,\cdot)](x)\dd y\text{ for almost all }x\geq0.
\end{equation}
\end{lm}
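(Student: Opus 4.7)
The plan is to verify the pointwise inequality \eqref{eq:L65_0} separately on the three subintervals $I_1=[R,R+\ell_2(s))$, $I_2=[R+\ell_2(s),R+r_2(s)]$ and $I_3=(R+r_2(s),R+1]$ into which the support $[R,R+1]$ of $f(s,\cdot)$ is partitioned. Outside this support both sides behave trivially, since $f_s(s,x)=f(s,x)=0$ while $\Delta_y^2[f(s,\cdot)](x)=f(s,x+y)+f(s,x-y)\geq0$.

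The main work is on $I_2$, where $f=\tfrac14 e^{bx}$ and $f_s=0$, so \eqref{eq:L65_0} reduces to
\begin{equation}\nonumber
-\tfrac{\log b}{4}e^{bx}\leq\int_0^x\frac{\Phi_2(e^{s/2}y)}{\sqrt{xy}}\Delta_y^2[f(s,\cdot)](x)\dd y.
\end{equation}
I will split the $y$-integration at a value $y_0$ of order unity. The constraint $s<1/b$ forces $\ell_2(s),(R+1)-r_2(s)=O(s^2)=O(1/b^2)$, so $|I_2|\geq 1-O(1/b^2)$, and for $b$ large the points $x\pm y$ remain in $I_2$ whenever $y<y_0$ and $x$ is not within $y_0$ of the breakpoints; on that portion $\Delta_y^2 f=\tfrac12 e^{bx}(\cosh(by)-1)\geq\tfrac{b^2y^2}{4}e^{bx}$, which together with the strict positivity and continuity of $\Phi_2$ on $(0,\infty)$ (Proposition \ref{pr:regularity}) yields a contribution bounded below by $c_0b^2e^{bx}$ with $c_0>0$ depending only on $\Phi_2$ and $y_0$. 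For $y\geq y_0$, $|\Delta_y^2 f|\leq 4\|f\|_\infty\leq e^{b(R+1)}$, while Lemma \ref{lm:L64} and the exponential upper bound from Proposition \ref{pr:pointwiseupperbound} control the kernel integral and give a tail correction of size $Ce^{-cR}e^{bx}$. Since $b^2\gg\log b$ as $b\to\infty$, the positive local contribution dominates both $\tfrac{\log b}{4}e^{bx}$ and the tail once $b\geq b_0$ and $R\geq R_0$.

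On $I_1$ and $I_3$, $f$ agrees with a shift or reflection of the fundamental solution $u$ of \eqref{eq:ide} with $\alpha=1/2$ (plus, for $I_3$, an affine correction killed by $\Delta_y^2$). By Lemma \ref{lm:fractionalheatequation} this yields
\begin{equation}\nonumber
f^1_s(s,x)=\int_{\mathbb{R}_+}y^{-3/2}\Delta_y^2[\tilde f^1(s,\cdot)](x)\dd y,\qquad f^3_s(s,x)=\int_{\mathbb{R}_+}y^{-3/2}\Delta_y^2[\tilde f^3(s,\cdot)](x)\dd y,
\end{equation}
where $\tilde f^j$ denotes the natural extension of $f^j$ to $\mathbb{R}$. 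Moreover the monotonicity $u(s,x)=w(x/s^2)\downarrow 0$ in $s$ for $x>0$ (cf.~Definition \ref{df:subsolution}) forces $f^1_s,f^3_s\leq0$, so the LHS of \eqref{eq:L65_0} is nonpositive on these regions. I will rewrite $\Delta_y^2 f=\Delta_y^2\tilde f^j-(\tilde f^j-f)(x+y)-(\tilde f^j-f)(x-y)$ and use the above identities to recast \eqref{eq:L65_0} as a comparison of the two kernels $y^{-3/2}$ and $\Phi_2(e^{s/2}y)/\sqrt{xy}$ acting on $\Delta_y^2\tilde f^j$ and on the excess $\tilde f^j-f$, exploiting the convex positive boost from $I_2$ whenever $x\pm y$ lands there.

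The main obstacle will be the careful bookkeeping around the breakpoints $R+\ell_2(s)$ and $R+r_2(s)$, where $f$ is only continuous and second differences straddling these points mix distinct pieces. The construction preceding Definition \ref{df:subsolution} supplies the quantitative spacings $\ell_2(s)-\ell_1(s)\geq(c_2-c_1)s^2$ and $r_1(s)-r_2(s)\geq(c_2-c_1)s^2$, together with the matching identities $f^1=f^2$ at $R+\ell_2(s)$ and $f^2=f^3$ at $R+r_2(s)$; these will be used to show that the cross contributions near breakpoints are of the same order as the large-$y$ tail and can be absorbed into the $b^2$-term. Choosing $R_0$ large enough that the tail estimates from Lemma \ref{lm:L64} and Proposition \ref{pr:pointwiseupperbound} are quantitatively small, and $b_0$ large enough that $b^2$ beats $\log b$ and all $R$-independent constants, will then complete the argument.
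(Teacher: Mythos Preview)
Your overall skeleton matches the paper's: treat $x\notin[R,R+1]$ trivially, and on each $I_i$ relate $\Delta_y^2 f$ to $\Delta_y^2 f^i$, using that $f^1,f^3$ solve \eqref{eq:ide}. However, several of your quantitative claims are wrong and the argument does not close as written.

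First, the geometry: $\ell_2(s)$ is \emph{not} $O(s^2)$. From the construction (Figure~\ref{fig:1}) one has $c_2s^2<\ell_2(s)<\tfrac1b\log4$, and in fact $\ell_2(s)\to\tfrac1b\log4$ as $s\to0$; likewise $1-r_2(s)=O(1/b)$. So $|I_2|=1-O(1/b)$, and $|I_1|,|I_3|=O(1/b)$. Second, your tail estimate on $I_2$ is incorrect: with $|\Delta_y^2 f|\leq4\|f\|_\infty\leq4e^{b(R+1)}$ you get a tail of order $e^b e^{bR}/\sqrt{R}$, not $Ce^{-cR}e^{bx}$; there is no mechanism producing an $e^{-cR}$ factor here. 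Third, your positive ``$c_0b^2e^{bx}$'' term omits the factor $1/\sqrt{x}\sim1/\sqrt{R}$ coming from the kernel, so it is really $c_0b^2e^{bx}/\sqrt{R}$ and cannot absorb an $e^b$-sized error uniformly in $R$.

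The paper avoids all of this by a different split: at $y=\bar c/b^2$ with $\bar c=c_2-c_1$, not at $y_0\sim1$. For $y>\bar c/b^2$ one uses only the crude bound $\Delta_y^2\psi\geq-2\psi$ together with the logarithmic estimate of Lemma~\ref{lm:L64}, which gives a tail $\geq-\log b\cdot\psi$ for $R\geq R_0$; this is \emph{exactly} the decay rate built into $\psi=e^{-s\log b}f$, so no positive $b^2$-gain is needed on $I_2$ at all. For $y<\bar c/b^2$ the smallness of the step ensures that $x\pm y$ land where $f\geq f^i$, so $\Delta_y^2 f\geq\Delta_y^2 f^i$; on $I_2$ this is $\geq0$ by convexity of $f^2$ and can simply be dropped.

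On $I_1,I_3$ you identify the right comparison but not the mechanism. The paper's device is: since $\Delta_y^2[\tilde f^i]\leq0$ for all $y$ (Lemma~\ref{lm:oddfunctionsunderide}), extend the near integral to $\mathbb{R}_+$ and replace $1/\sqrt{x}$ by $1/\sqrt{R}$; then integrate by parts via \eqref{eq:rewriteofsecdifprime}, so that the kernel appears as $\int_y^\infty\Phi_2(e^{s/2}z)z^{-1/2}\dd z\leq C/\sqrt{y}$ by Lemma~\ref{lm:L64}; for $R\geq R_0$ large one has $\tfrac{C}{\sqrt{R}}\leq2$, and integrating back recovers $\int_{\mathbb{R}_+}y^{-3/2}\Delta_y^2[\tilde f^i]\dd y=f^i_s$. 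This is the step your sketch is missing.
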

\begin{proof}
Clearly, by nonnegativity of $\Phi_2$ and $\psi$, the right hand side of \eqref{eq:L65_0} is nonnegative if $x\in[0,R)\cup(R+1,\infty)$, while the left hand side is identically equal to zero. For $x\in[R,R+1]$ we now set $\bar{c}=c_2-c_1>0$, and we estimate the right hand side of \eqref{eq:L65_0} from below by
\begin{equation}\label{eq:L65_1}
\int_0^{\bar{c}/b^2}\frac{\Phi_2(e^{s/2}y)}{\sqrt{xy}}\Delta_y^2[\psi(s,\cdot)](x)\dd y-2\psi(s,x)\int_{\bar{c}/b^2}^\infty\frac{\Phi_2(e^{s/2}y)}{\sqrt{xy}}\dd y,
\end{equation}
where, for sufficiently large $R_0\gg1$, the second term is bounded from below, uniformly for all $R\geq R_0$, by $-\log b\times\psi(s,x)$ (use Lemma \ref{lm:L64} and $x\geq R_0$). By the smallness of the domain of integration in the first term of \eqref{eq:L65_1}, we further find that we can bound this term by
\begin{equation}\label{eq:L65_2}
e^{-\log(b)s}\int_0^{\bar{c}/b^2}\frac{\Phi_2(e^{s/2}y)}{\sqrt{xy}}\Delta_y^2[f^i(s,\cdot)](x)\dd y,
\end{equation}
where $i=1$ if $x\in[R,R+\ell_2(s))$, $i=2$ if $x\in[R+\ell_2(s),R+r_2(s)]$, and $i=3$ if $x\in(R+r_2(s),R+1]$. By the semigroup property of the exponential function this then means that, for $x\in(R+\ell_2(s),R+r_2(s))$, \eqref{eq:L65_1} can be bounded from below by
\begin{multline}\nonumber
\left(\int_0^{\bar{c}/b^2}\frac{\Phi_2(e^{s/2}y)}{\sqrt{xy}}\left(e^{by}+e^{-by}-2\right)\dd y-\log b\right)e^{-\log(b)s}\tfrac14e^{bx}\\
\geq-\log b\times\psi(s,x)=\psi_s(s,x).
\end{multline}
Note now, for $x\in(R,R+\ell_2(s))$, that $\Delta_y^2[f^1(s,\cdot)](x)=e^{bR}\Delta_y^2[u(s,\cdot)](x-R)\leq0$ for all $y\in\mathbb{R}$ (cf.~Lemma \ref{lm:oddfunctionsunderide}), so we can estimate the integral in \eqref{eq:L65_2} with $i=1$ from below by
\begin{multline}\label{eq:L65_4}
\frac1{\sqrt{R}}\int_{\mathbb{R}_+}\frac{\Phi_2(e^{s/2}y)}{\sqrt{y}}\Delta_y^2[f^1(s,\cdot)](x)\dd y\\
=\frac1{\sqrt{R}}\int_{\mathbb{R}_+}\left(\int_y^\infty\frac{\Phi_2(e^{s/2}z)}{\sqrt{z}}\dd z\right)\left(\int_{x-y}^{x+y}f^1_{ww}(s,w)\dd w\right)\dd y,
\end{multline}
where the equality follows by integration by parts and \eqref{eq:rewriteofsecdifprime} for $\partial_y[\Delta_y^2[f^1(s,\cdot)](x)]$, and where the integral with respect to $w$ in the right hand side of \eqref{eq:L65_4} is nonpositive for all $y\in\mathbb{R}_+$ (cf.~proof of Lemma \ref{lm:lowerboundfirstlemma}). By Lemma \ref{lm:L64}, and choosing $R_0\gg1$ sufficiently large, we then bound the right hand side of \eqref{eq:L65_4} from below, uniformly for all $R\geq R_0$, by
\begin{equation}\label{eq:L65_5}
\int_{\mathbb{R}_+}\frac{2}{\sqrt{y}}\left(\int_{x-y}^{x+y}f^1_{ww}(s,w)\dd w\right)\dd y=e^{bR}\int_{\mathbb{R}_+}y^{-\frac32}\Delta_y^2[u(s,\cdot)](x-R)\dd y,
\end{equation}
where for the equality we have integrated by parts back again, and noting that the right hand side of \eqref{eq:L65_5} by construction equals $e^{bR}u_s(s,x-R)=f_s^1(s,x)$ it follows that \eqref{eq:L65_1} can be estimated from below by
\begin{equation}\nonumber
e^{-\log(b)s}f_s^1(s,x)-\log(b)e^{-\log(b)s}f^1(s,x)=\psi_s(s,x).
\end{equation}
Recalling lastly that the second difference of an affine function is zero, similar arguments show that the inequality in \eqref{eq:L65_0} also holds for $x\in(R+r_2(s),R+1)$, which completes the proof.
\end{proof}
Now, for $R\gg1$ sufficiently large, let $\eta_R\in C^\infty(\mathbb{R})$ be such that ${\rm supp}(\eta_R)=[\frac35R+1,\frac45R]$, such that $\eta_R=1$ on $[\frac35R+2,\frac45R-1]$, and such that $\eta_R$ is increasing on $(\frac35R+1,\frac35R+2)$, and decreasing on $(\frac45R-1,\frac45R)$. For $b\gg1$ and $t\in(0,\frac1b)$ we then define $f^0\in C^\infty((0,\infty))$ by
\begin{equation}\label{eq:f0}
f^0(x)=\eta_R(x)\times\frac{e^{bx}}{8R}\inf_{r\in[0,\frac45R]}\left\{\int_r^{r+1}e^{be^{-t/2}y}\Phi_2(y)\dd y\right\}.
\end{equation}
\begin{lm}\label{lm:L66}
Let $\Phi_2\in\mathcal{X}_2$ be as in the statement of Proposition \ref{pr:averagedlowerbound}. There then exist large constants $R_0,b_0\gg1$ such that if for $R\geq R_0$ and $b\geq b_0$ the functions $f:(0,t)\rightarrow W_0^{1,\infty}((0,\infty))$ and $f^0\in W_0^{1,\infty}((0,\infty))$, with $t\in(0,\frac1b)$, are given by \eqref{eq:f} and \eqref{eq:f0}, then the function
\begin{equation}\label{eq:L66_0}
\psi(s,x):=e^{-\log(b)s}f(s,x)+\tfrac12se^{-\log(b)s}f^0(x)
\end{equation}
satisfies \eqref{eq:L65_0} for all $s\in(0,t)$.
\end{lm}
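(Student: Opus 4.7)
The plan is to exploit the linearity of inequality \eqref{eq:L65_0} in the test function $\psi$. Decompose $\psi = \psi_1 + \psi_2$ where $\psi_1(s,x) = e^{-\log(b)s}f(s,x)$ (already handled by Lemma~\ref{lm:L65}) and $\psi_2(s,x) = \tfrac12 s e^{-\log(b)s} f^0(x)$. Writing $I[\varphi](s,x) := \int_0^x \tfrac{\Phi_2(e^{s/2}y)}{\sqrt{xy}} \Delta_y^2[\varphi(s,\cdot)](x)\,dy$, it suffices to verify $(\psi_2)_s(s,x) \leq I[\psi_1](s,x) + I[\psi_2](s,x)$ pointwise a.e.~in $x$, because adding this to the inequality from Lemma~\ref{lm:L65} gives the claim.

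For $R$ sufficiently large the supports $\mathrm{supp}(f) = [R,R+1]$ and $\mathrm{supp}(f^0) = [\tfrac{3R}{5}+1, \tfrac{4R}{5}]$ are disjoint, which splits the verification into three cases. If $x$ lies outside both supports then $\psi_s(s,x)=0$ and $I[\psi](s,x) \geq 0$ by nonnegativity of $\psi$. If $x \in [R,R+1]$ then $\psi_2(s,x) = 0$, hence $(\psi_2)_s(s,x) = 0$ and $\Delta_y^2[\psi_2(s,\cdot)](x) = \psi_2(s,x+y)+\psi_2(s,x-y) \geq 0$, so the full inequality reduces to Lemma~\ref{lm:L65}. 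The substantive case is $x \in [\tfrac{3R}{5}+1,\tfrac{4R}{5}]$, where $\psi_1(s,x)=\psi_1(s,x-y)=0$ and $\Delta_y^2[\psi_1(s,\cdot)](x) = \psi_1(s,x+y)$, supported in $y\in[R-x,R+1-x]$.

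In this main case I will bound $(\psi_2)_s(s,x) \leq \tfrac{1}{2}e^{-\log(b)s}f^0(x) \leq \tfrac{e^{-\log(b)s}e^{bx}c(b,t)}{16R}$, where $c(b,t) := \inf_{r \in [0,4R/5]} \int_r^{r+1} e^{be^{-t/2}y}\Phi_2(y)\,dy$, using $\eta_R \leq 1$ and $s<1/\log b$. For the lower bound on $I[\psi_1]$ I restrict the integrand to $y$ with $x+y \in [R+\ell_1(s), R+r_1(s)]$, where by the construction preceding Lemma~\ref{lm:L65} one has $f(s,w) \geq f^2(s,w) = \tfrac{1}{4}e^{bw}$. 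After using $\sqrt{xy}\leq(x+y)/2\leq(R+1)/2$ and changing variables $Y = e^{s/2}y$ to produce an integral of $\Phi_2(Y)e^{be^{-s/2}Y}$ over $[e^{s/2}(R+\ell_1(s)-x), e^{s/2}(R+r_1(s)-x)]$, the interval lies inside $[0,4R/5]$ (as $e^{s/2}\leq 2$ and $x \geq \tfrac{3R}{5}+1$), and using $e^{-s/2}\geq e^{-t/2}$ I pass from $e^{be^{-s/2}Y}$ down to $e^{be^{-t/2}Y}$ to match the definition of $c(b,t)$.

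The delicate quantitative point, and the main obstacle, is verifying that the scaled interval has length at least $1$ for all $s \in (0,1/b)$. From the self-similar form $u(s,x) = w(x/s^2)$, the equations defining $\ell_1$ and $r_1$ give the asymptotics $\ell_1(s) = O(s^2)$ and $1-r_1(s) = O(s^2)$, so $r_1(s)-\ell_1(s) = 1 - O(s^2)$. A Taylor expansion then shows $e^{s/2}(r_1(s)-\ell_1(s)) = 1 + s/2 + O(s^2) \geq 1$ for all $s \in [0, 1/b)$ once $b$ is large enough; this is the step which forces the choice of $b_0$. Once the interval length issue is settled, the integral is bounded below by $c(b,t)$, yielding $I[\psi_1](s,x) \geq \tfrac{e^{-\log(b)s-s/2}e^{bx}c(b,t)}{2(R+1)}$.

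For $I[\psi_2]$, the only negative contribution is $-2\psi_2(s,x)\int_0^x\tfrac{\Phi_2(e^{s/2}y)}{\sqrt{xy}}\,dy$, which I control using $x\geq R/2$ together with the finite integral $\int_0^\infty \Phi_2(z)/\sqrt{z}\,dz < \infty$ (which holds by Proposition~\ref{pr:pointwiseupperbound} and Lemma~\ref{lm:integrallimitnearzero}). Since $\psi_2(s,x) \leq \tfrac{s e^{-\log(b)s}e^{bx}c(b,t)}{16R}$ and $s\leq 1/b$, this contribution is $O\bigl(\tfrac{e^{-\log(b)s}e^{bx}c(b,t)}{bR^{3/2}}\bigr)$, which is of lower order than $I[\psi_1]$. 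Assembling the estimates, the required inequality reduces to the numerical comparison $\tfrac{1}{16R} \leq \tfrac{e^{-s/2}}{2(R+1)} - O(b^{-1}R^{-3/2})$, which holds for all $s\in(0,1/b)$ provided $R_0$ and $b_0$ are chosen sufficiently large (including at least as large as the thresholds demanded by Lemma~\ref{lm:L65}), completing the proof.
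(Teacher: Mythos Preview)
Your overall architecture matches the paper's: the case split by support, the lower bound $f\geq \tfrac14 e^{b\cdot}$ on $[R+\ell_1(s),R+r_1(s)]$, the change of variables $Y=e^{s/2}y$, and the Taylor argument showing $e^{s/2}(r_1(s)-\ell_1(s))\geq 1$ are all exactly what the paper does. The numerical comparison at the end would also close, \emph{if} your bound on $I[\psi_2]$ were valid.

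There is, however, a genuine gap in your treatment of $I[\psi_2]$. You bound $\Delta_y^2[\psi_2(s,\cdot)](x)\geq -2\psi_2(s,x)$ for \emph{all} $y$, and then assert that
\[
\int_0^x\frac{\Phi_2(e^{s/2}y)}{\sqrt{xy}}\,\dd y\ \leq\ \frac{1}{\sqrt{x}}\int_0^\infty\frac{\Phi_2(z)}{\sqrt{z}}\,\dd z\ <\ \infty,
\]
citing Lemma~\ref{lm:integrallimitnearzero}. But that lemma only gives $\int_0^R\Phi_2\leq C\sqrt{R}$, which after a dyadic decomposition yields at best $\int_\epsilon^1\Phi_2(z)/\sqrt z\,\dd z\leq C|\log\epsilon|$; the integral can diverge logarithmically at zero, and nothing in the paper rules this out (indeed the closing remark of Section~\ref{sec:exponentialtail} says the behaviour of $\Phi_2$ near zero is not understood). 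By throwing away the cancellation in the second difference for small $y$, you have lost exactly the structure that makes $I[\psi_2]$ finite.

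The paper repairs this by splitting at $y=\bar c/b^2$. For $y\in[0,\bar c/b^2]$ one uses that $f^0$ is smooth, so $|\Delta_y^2[f^0](x)|\leq y^2\sup|(\eta_R e^{b\cdot})''|$; combined with Lemma~\ref{lm:integrallimitnearzero} this piece is $O(R^{-1/2}b^{-2})$, hence $\geq -1$. For $y>\bar c/b^2$ one applies your crude bound together with Lemma~\ref{lm:L64}, obtaining $-2\psi_2(s,x)\cdot\frac{C\log b}{\sqrt{x}}\geq -\log(b)\,\psi_2(s,x)$ once $R_0$ is large. This $\log b$ loss is precisely what the prefactor $e^{-\log(b)s}$ in the definition of $\psi$ is designed to absorb: it produces the $-\log(b)\psi$ term in $\psi_s$ that cancels against it. Your argument never uses this mechanism, which is why you were able to get away with a too-optimistic bound on the integral.

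A minor point: your opening sentence ``adding this to Lemma~\ref{lm:L65} gives the claim'' is not literally correct (you would get $\psi_s\leq 2I[\psi_1]+I[\psi_2]$), though your subsequent case analysis does not actually rely on it and is fine.
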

\begin{proof}
By Lemma \ref{lm:L65} we can restrict ourselves to $x\in(\frac35R+1,\frac45R)$ in checking that $\psi$ satisfies \eqref{eq:L65_0} for all $s\in(0,t)$. Similar to the first estimate in that lemma, we now note that
\begin{multline}\label{eq:L67_2_1}
\int_0^x\frac{\Phi_2(e^{s/2}y)}{\sqrt{xy}}\Delta_y^2[\psi(s,\cdot)](x)\dd y
\geq\int_{R+\ell_1(s)-x}^{R+r_1(s)-x}\frac{\Phi_2(e^{s/2}y)}{\sqrt{xy}}\psi(s,x+y)\dd y\\
+\int_0^{\bar{c}/b^2}\frac{\Phi_2(e^{s/2}y)}{\sqrt{xy}}\Delta_y^2[\psi(s,\cdot)](x)\dd y-2\psi(s,x)\int_{\bar{c}/b^2}^\infty\frac{\Phi_2(e^{t/2}y)}{\sqrt{xy}}\dd y,
\end{multline}
where the last term on the right hand side is bounded from below by $-\log b\times\psi(s,x)$ if $R_0\gg1$ is sufficiently large (cf.~proof of Lemma \ref{lm:L65}). We then estimate the first term on the right hand side of \eqref{eq:L67_2_1} from below by
\begin{multline}\nonumber
e^{-\log(b)s}\int_{R+\ell_1(s)-x}^{R+r_1(s)-x}\frac{\Phi_2(e^{s/2}y)}{\sqrt{xy}}\tfrac14e^{b(x+y)}\dd y\\
\geq e^{-\log(b)s}\times\frac{e^{bx}}{4R}\inf_{r\in[0,\frac25R]}\left\{\int_r^{r+(1-2c_1s^2)}e^{by}\Phi_2(e^{s/2}y)\dd y\right\}\\
\geq e^{-\log(b)s}\times\frac{e^{bx}}{8R}\inf_{r\in[0,\frac45R]}\left\{\int_r^{r+1}e^{be^{t/2}y}\Phi_2(y)\dd y\right\},
\end{multline}
where we have used that $(1-2c_1s^2)e^{s/2}=1+\frac12s+O(s^2)\geq1$ as $s\rightarrow0$. Note next that the second term on the right hand side of \eqref{eq:L67_2_1} can be written as
\begin{multline}\label{eq:L67_2_3}
\tfrac12se^{-\log(b)s}\int_0^{\bar{c}/b^2}\frac{\Phi_2(e^{s/2}y)}{\sqrt{xy}}\left(\eta_R(x+y)e^{by}+\eta_R(x-y)e^{-by}-2\eta_R(x)\right)\dd y\\
\times\frac{e^{bx}}{8R}\inf_{r\in[0,\frac45R]}\left\{\int_r^{r+1}e^{be^{t/2}y}\Phi_2(y)\dd y\right\}.
\end{multline}
Recalling now \eqref{eq:rewriteofsecdif}, the integral over the second difference in \eqref{eq:L67_2_3} can be bounded from below by
\begin{multline}\nonumber
-\frac1{\sqrt{\frac35R}}\int_0^{\bar{c}/b^2}\frac{\Phi_2(e^{s/2}y)}{\sqrt{y}}\left|\int_\mathbb{R}\left(y-|w|\right)_+\left[\eta_R(x+w)e^{bw}\right]_{ww}\dd w\right|\dd y\\
\geq-\frac1{\sqrt{\frac35R}}\int_0^{\bar{c}/b^2}y^{\frac32}\Phi_2(e^{s/2}y)\dd y\times\sup_{|w|<\frac{\bar{c}}{b^2}}\left|\left[\eta_R(x+w)e^{bw}\right]_{ww}\right|,
\end{multline}
which is $O(R^{-\frac12}b^{-2})$ as $R,b\rightarrow\infty$ (use Lemma \ref{lm:integrallimitnearzero} and a dyadic decomposition), hence bounded from below by $-1$ if $R_0,b_0\gg1$ are sufficiently large. Recalling lastly that $1-\frac12s\geq\frac12$ for $s\in(0,\frac1{b_0})$, we conclude with the above that the right hand side of \eqref{eq:L67_2_1} is bounded from below by
\begin{equation}\nonumber
\left[\tfrac12se^{-\log(b)s}\right]_s\times\eta_R(x)\times\frac{e^{bx}}{8R}\inf_{r\in[0,\frac45R]}\left\{\int_r^{r+1}e^{be^{t/2}y}\Phi_2(y)\dd y\right\}=\psi_s(s,x),
\end{equation}
and the proof is complete.
\end{proof}
\begin{lm}\label{lm:L67}
Let $\Phi_2\in\mathcal{X}_2$ be as in the statement of Proposition \ref{pr:averagedlowerbound}. There then exist constants $R_0,b_0\gg1$ and $c>0$ such that $\log(I(b_0,R_0))\geq\log(b_0)+1$, and such that for all $R\geq R_0$ and all $b\geq b_0$ there holds
\begin{equation}\label{eq:L67_0}
I(b,R)\geq t\left(I(be^{-t/2},\tfrac45R)\right)^2\text{ for all }t\in(0,\tfrac1b),
\end{equation}
where
\begin{equation}\label{eq:L67_00}
I(b,R):=c\times\inf_{r\in[0,R]}\left\{\int_r^{r+1}e^{bx}\Phi_2(x)\dd x\right\}.
\end{equation}
\end{lm}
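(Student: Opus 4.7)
The plan is to combine the subsolution $\psi$ from Lemma~\ref{lm:L66} with the backward monotonicity of the dual equation \eqref{eq:aelb_dual}. The first step is to observe that if $\psi$ satisfies \eqref{eq:L65_0}, then $\varphi(s,x):=e^{s/2}\psi(s,xe^{-s/2})$ satisfies the backward subsolution inequality \eqref{eq:aelb_dual_2}. A direct computation confirms this: under the rescaling, $\varphi_s+\tfrac12(x\varphi_x-\varphi)=e^{s/2}\psi_s$, and the substitution $\eta=ye^{-s/2}$ in the collision integral converts the rescaled kernel $\Phi_2(\eta e^{s/2})$ back into $\Phi_2(y)$, matching the RHS of \eqref{eq:aelb_dual_2}. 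Consequently, the functional $s\mapsto\int\varphi(s,x)\Phi_2(x)\dd x$ is non-increasing, so
$$\int \psi(0,x)\Phi_2(x)\dd x\geq e^t\int \psi(t,\xi)\Phi_2(\xi e^{t/2})\dd\xi.$$

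For the recurrence, for each $r\in[R_0,R]$ I apply Lemma~\ref{lm:L66} with ``$R$'' replaced by $r$. The LHS is bounded above by $\int_r^{r+1}e^{bx}\Phi_2(x)\dd x$, since a direct inspection of $f^1, f^2, f^3$ at $s=0$ yields $\psi(0,x)\le e^{bx}$ on $[r,r+1]$ and $\psi(0,\cdot)=0$ elsewhere. For the RHS, I drop the nonnegative $f$-part of $\psi(t,\cdot)$ and unpack $f^0$: changing variables back to $y=\xi e^{t/2}$, the support of $\eta_r$ contains at least $\sim r/6$ disjoint unit intervals $[r',r'+1]\subset[0,4r/5]$, on each of which $\int_{r'}^{r'+1}e^{b'y}\Phi_2(y)\dd y\geq I(b',4r/5)/c$ by the definition of $I$. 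Collecting all prefactors and using $I(b',4r/5)\geq I(b',4R/5)$ (monotonicity in the second argument), one obtains
$$\int_r^{r+1}e^{bx}\Phi_2(x)\dd x\gtrsim \frac{t\, e^{t/2}b^{-t}}{c^2}\bigl(I(b',\tfrac{4}{5}R)\bigr)^2.$$
Choosing $c$ sufficiently small and $b_0$ large enough that $e^{t/2}b^{-t}\geq 1/2$ uniformly for $b\geq b_0$ and $t\in(0,1/b)$ (possible because $t\log b\leq \log(b)/b\to 0$) reduces this to $c\int_r^{r+1}e^{bx}\Phi_2\dd x\geq t(I(b',4R/5))^2$ for all $r\in[R_0,R]$.

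For the small-$r$ regime $r\in[0,R_0]$ the subsolution construction does not apply, so I would use the trivial inequality $\int_r^{r+1}e^{bx}\Phi_2\geq\int_r^{r+1}e^{b'x}\Phi_2\geq I(b',4R/5)/c$ (which holds because $r\in[0,4R/5]$ and $b\geq b'$), giving $c\int_r^{r+1}e^{bx}\Phi_2\dd x\geq I(b',4R/5)$. The required inequality then reduces to $I(b',4R/5)\leq 1/t$; since $t\leq 1/b$ it suffices to ensure $I(b',4R/5)\leq b$, which I would verify by invoking the pointwise exponential upper bound of Proposition~\ref{pr:pointwiseupperbound} together with a suitable choice of $c$. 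Taking the infimum over $r\in[0,R]$ of the two cases then gives $I(b,R)\geq t(I(b',4R/5))^2$. The initial condition $\log I(b_0,R_0)\geq\log b_0+1$ follows from the strict positivity of $\Phi_2$ on $(0,\infty)$ (Proposition~\ref{pr:regularity}): by continuity, $m_0:=\min_{r\in[0,R_0]}\int_r^{r+1}\Phi_2(x)\dd x>0$, so one arranges $cm_0\geq eb_0$ by choosing $c$ large enough relative to $b_0/m_0$.

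The main obstacle is the delicate simultaneous balancing of the three parameters $c$, $R_0$ and $b_0$: the initial condition pushes $c$ upwards, the subsolution estimate for $r\in[R_0,R]$ pushes $c$ downwards, and the small-$r$ case requires uniform control of $I(b',4R/5)\leq b$ extracted from the Laplace-type behaviour of the integrals $\int e^{b'x}\Phi_2\dd x$. Reconciling these constraints, via the quantitative exponential decay of Proposition~\ref{pr:pointwiseupperbound} and a sharp analysis of the infimum in the definition of $I$, is the principal technical difficulty.
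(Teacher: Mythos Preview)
Your overall architecture---transporting the subsolution $\psi$ of Lemma~\ref{lm:L66} via $\varphi(s,x)=e^{s/2}\psi(s,xe^{-s/2})$, bounding $\psi(0,\cdot)\leq e^{bx}\mathbf{1}_{(R,R+1)}$, and extracting a squared infimum from the $f^0$-contribution---matches the paper's proof. The paper applies the construction once for each $R\geq R_0$ and bounds $\int\psi(t,xe^{-t/2})\Phi_2\,\dd x$ from below by the $f^0$ part on $[\tfrac23Re^{t/2},\tfrac{11}{15}Re^{t/2}]$, an interval containing at least $R/16$ disjoint unit subintervals; one factor of the infimum comes from this integral and a second from the explicit infimum inside $f^0$, yielding
\[
\int_R^{R+1}e^{bx}\Phi_2\,\dd x\ \geq\ \frac{t}{256}\Bigl(\inf_{r'\in[0,4R/5]}\int_{r'}^{r'+1}e^{be^{-t/2}y}\Phi_2\,\dd y\Bigr)^{2},
\]
so that $c=\tfrac{1}{256}$ is fixed once and for all by the recurrence.

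Your treatment of the initial condition, however, does not close. You estimate $\int_r^{r+1}e^{b_0x}\Phi_2\,\dd x\geq m_0$ using only $e^{b_0x}\geq 1$, and then try to arrange $cm_0\geq eb_0$ by enlarging $c$; but you have already committed to $c$ small for the recurrence, and you yourself flag that these constraints conflict. The paper avoids this by exploiting the exponential weight: for every $r\in[0,R_0]$,
\[
\int_r^{r+1}e^{b_0x}\Phi_2\,\dd x\ \geq\ e^{b_0/2}\int_{r+1/2}^{r+1}\Phi_2\,\dd x,
\]
so $I(b_0,R_0)\geq ce^{b_0/2}m_0'$ with $m_0'=\inf_{r\in[0,R_0]}\int_{r+1/2}^{r+1}\Phi_2>0$. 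Then $\log I(b_0,R_0)\geq\log c+\tfrac12 b_0+\log m_0'$, and since $\tfrac12 b_0$ dominates $\log b_0$, the bound $\log I(b_0,R_0)\geq\log b_0+1$ follows for $b_0$ large with $c$ \emph{fixed}. This dissolves the tension you describe.

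Your proposed handling of $r\in[0,R_0]$ also fails. You reduce to $I(be^{-t/2},\tfrac45R)\leq b$ and invoke Proposition~\ref{pr:pointwiseupperbound}, but that proposition only gives $\Phi_2(x)\leq C e^{-ax}$ for some possibly small $a>0$. Since $b'=be^{-t/2}\geq b_0\gg a$, already for $r$ near zero one has $\int_r^{r+1}e^{b'x}\Phi_2\,\dd x$ of order $e^{b'}$, and no fixed $c$ makes $cCe^{b'}\leq b$. So the bound $I(b',\tfrac45R)\leq b$ is not obtainable this way, and the small-$r$ case remains open in your argument.
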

\begin{proof}
Let $R_0,b_0\gg1$ be as obtained in Lemma \ref{lm:L66}, and let $R\geq R_0$, $b\geq b_0$ and $t\in(0,\frac1b)$ be fixed arbitrarily. We then define $\varphi:(0,t)\rightarrow W_0^{1,\infty}((0,\infty))$ as $\varphi(s,x):=e^{s/2}\psi(s,xe^{-s/2})$, with $\psi$ given by \eqref{eq:L66_0}, and we note that $\varphi$ satisfies \eqref{eq:aelb_dual_2} for almost all $(s,x)\in(0,t)\times(0,\infty)$ and $\varphi(0,x)\leq e^{bx}\mathbf{1}_{(R,R+1)}(x)$, hence
\begin{equation}\nonumber
\int_R^{R+1}e^{bx}\Phi_2(x)\dd x\geq\int_0^\infty\psi(t,xe^{-t/2})\Phi_2(x)\dd x,
\end{equation}
where the right hand side can be bounded from below by
\begin{multline}\nonumber
\int_{\frac23Re^{t/2}}^{\frac{11}{15}Re^{t/2}}e^{be^{-t/2}x}\Phi_2(x)\dd x\times\frac{\frac12t}{8R}\inf_{r\in[0,\frac45R]}\left\{\int_r^{r+1}e^{be^{-t/2}y}\Phi_2(y)\dd y\right\}\\
\geq\frac{R}{16}\frac{t}{16R}\left(\inf_{r\in[0,\frac45R]}\left\{\int_r^{r+1}e^{be^{-t/2}y}\Phi_2(y)\dd y\right\}\right)^2.
\end{multline}
Taking then the infimum, \eqref{eq:L67_0} follows with $c=\frac1{256}$. We lastly note that
\begin{equation}\nonumber
I(b_0,R_0)\geq ce^{\frac12b_0}\times\inf_{r\in[0,R_0]}\left\{\int_{r+\frac12}^{r+1}\Phi_2(x)\dd x\right\},
\end{equation}
where the logarithm of the right hand side is linear as a function of $b_0$, and it follows that indeed $\log(I(b_0,R_0))\geq\log(b_0)+1$ if $b_0\gg1$ is sufficiently large.
\end{proof}

\begin{proof}[Proof of Proposition \ref{pr:averagedlowerbound}]
Let $R_0,b_0\gg1$ and $c>0$ be as obtained in Lemma \ref{lm:L67}, and let $I$ be given by \eqref{eq:L67_00}. Now, set $B=b_0e^{\pi^2/12}$, and for all $n\in\mathbb{N}$ define $t_n=n^{-2}\wedge B^{-1}$, $b_n=b_{n-1}e^{t_n/2}$ and $R_n=\frac54R_{n-1}$. For every $n\in\mathbb{N}$ there then holds $b_n\leq b_0\exp(\frac12\sum_{j=1}^nj^{-2})<B$, hence $t_n\in(0,\frac1{b_n})$, so it follows from \eqref{eq:L67_0} that $I(b_n,R_n)\geq t_n(I(b_{n-1},R_{n-1}))^2$ for all $n\in\mathbb{N}$. Taking the logarithm and multiplying by $2^{-n}$, we now find for all $n\in\mathbb{N}$ by iteration that
\begin{multline}\nonumber
2^{-n}\log(I(b_n,R_n))
\geq2^{-(n-1)}\log(I(b_{n-1},R_{n-1}))+2^{-n}\log(t_n)\\
\geq\log(I(b_0,R_0))+\sum_{j=1}^n2^{-j}\log(t_j),
\end{multline}
from which we obtain that $\log(I(b_n,R_n))>0$ for all $n\in\mathbb{N}$, since $\log(I(b_0,R_0))\geq\log(b_0)+1$ (cf.~Lemma \ref{lm:L67}), and since
\begin{multline}\nonumber
\sum_{j=1}^n2^{-j}\log(t_j)\geq\sum_{j=1}^n2^{-j-1}\left(\log(j^{-2})+\log(B^{-1})\right)\\
\geq-\tfrac12\left(\log(b_0)+\tfrac{\pi^2}{12}\right)-\sum_{j=1}^\infty2^{-j}\log(j)>-\log(b_0)-1.
\end{multline}
The proof is then completed by the observation that
\begin{equation}\nonumber
\inf_{R\geq0}\left\{\int_{(R,R+1)}e^{Bx}\Phi_2(x)\dd x\right\}=\frac1c\times\inf_{n\in\mathbb{N}}I(B,R_n)\geq\frac1c\times\inf_{n\in\mathbb{N}}I(b_n,R_n),
\end{equation}
where the right hand side is strictly positive, as the infimum is taken over terms that are strictly larger than $1$.
\end{proof}
\begin{proof}[Proof of Theorem \ref{tm:exponentialtails}]
Corollary of Propositions \ref{pr:pointwiseupperbound} and \ref{pr:averagedlowerbound}.
\end{proof}
\begin{rk}
To deduce a pointwise exponential lower bound on $\Phi_2$, we seem to require existence of a solution $\varphi$ on $(s,x)\in[0,T)\times[0,\infty)$ to
\begin{equation}\label{eq:difficultequation}
\left\{\begin{array}{rl}
\varphi_s(s,x)\hspace{-8pt}&=-\tfrac12(x\varphi_x(s,x)-\varphi(s,x))+\frac1{\sqrt{x}}\int_0^x\frac{\Phi_2(y)}{\sqrt{y}}\Delta_y^2[\varphi(s,\cdot)](x)\dd y\\
\varphi(0,x)\hspace{-8pt}&=\delta_0(x-r)
\end{array}\right.
\end{equation}
for all $r\geq R_0$ with $R_0\gg1$ finite. However, existence of such solutions is nontrivial due to the possibly divergent behaviour of $\Phi_2$ near zero. A better understanding of well-posedness of \eqref{eq:difficultequation} requires a more detailed analysis of the asymptotics of $\Phi_2(z)$ as $z\rightarrow0$.
\end{rk}

\begin{ack}
The authors acknowledge support through the {\it CRC 1060 The mathematics of emergent effects} at the University of Bonn, that is funded through the German Science Foundation (DFG).
\end{ack}


\begin{thebibliography}{}

\bibitem{B11}
Brezis, H.:
Functional analysis, Sobolev spaces and partial differential equations.
Springer 2011

\bibitem{DK10}
Duistermaat, J.J., Kolk, J.A.C.:
Distributions - Theory and applications.
Birkh\"auser 2010

\bibitem{DJR06}
D\"uring, G., Josserand, C., Rica, S.:
Weak turbulence for a vibrating plate: Can one hear a Kolmogorov spectrum?
Phys.~Rev.~Lett.~97, 025503 (2006)

\bibitem{DNPZ92}
Dyachenko, S., Newell, A.C., Pushkarev, A., Zakharov, V.E.:
Optical turbulence: weak turbulence, condensates and collapsing filaments in the nonlinear Schr\"odinger equation.
Physica D 57, 96-160 (1992)

\bibitem{EMR05}
Escobedo, M., Mischler, S., Rodriguez Ricard, M.:
On self-similarity and stationary problem for fragmentation and coagulation models.
Ann.~Inst.~H. Poincar\'e Anal.~Non Lin\'eaire 22, 99-125 (2005)

\bibitem{EV15}
Escobedo, M., Vel\'azquez, J.J.~L.:
On the theory of weak turbulence for the nonlinear Schr\"odinger equation.
Memoirs AMS. Vol.~238, Nr.~1124 (2015)

\bibitem{FL05}
Fournier, N., Lauren\c cot, P.:
Existence of self-similar solutions to Smoluchowski's Coagulation Equation.
Comm.~Math.~Phys.~256, 589-609 (2005)

\bibitem{H62}
Hasselmann, K.:
On the non-linear energy transfer in a gravity-wave spectrum.~Part 1.~General Theory.
J.~Fluid Mech.~12, 481-500 (1962)

\bibitem{H63}
Hasselmann, K.:
On the non-linear energy transfer in a gravity-wave spectrum.~Part 2.~Conservation theorems, wave-particle analogy, irreversibility.
J.~Fluid Mech.~15, 273-281 (1963)

\bibitem{KV15}
Kierkels, A.H.M., Vel\'azquez, J.J.~L.:
On the transfer of energy towards infinity in the theory of weak turbulence for the nonlinear Schr\"odinger equation.
J.~Stat.~Phys.~159, 668-712 (2015)

\bibitem{L70}
Lukacs, E.:
Characteristic functions (2nd ed.).
C.~Griffin \& Co.~Ltd.~1970

\bibitem{MP04}
Menon, G., Pego, R.L.:
Approach to self-similarity in Smoluchowski's coagulation equations.
Comm.~Pure Appl.~Math.~57, 1197-1232 (2004)

\bibitem{N11}
Nazarenko, S.:
Wave turbulence.
Lecture Notes in Physics 825, Springer 2011

\bibitem{NTV15}
Niethammer, B., Throm, S., Vel\'azquez, J.J.~L.:
Self-similar solutions with fat tails for Smoluchowski's coagulation equation with singular kernels.
Ann. Inst.~H.~Poincar\'e Anal.~Non Lin\'eaire (2015) doi:10.1016/j.anihpc.2015.04.002

\bibitem{NV13}
Niethammer, B., Vel\'azquez, J.J.~L.:
Self-similar solutions with fat tails for Smoluchowski's coagulation equation with locally bounded kernels.
Comm. Math.~Phys.~318, 505-532 (2013)

\bibitem{NV14}
Niethammer, B., Vel\'azquez, J.J.~L.:
Exponential tail behavior of self-similar solutions to Smoluchowski's coagulation equation.
Comm.~PDE 39, 2314-2350 (2014)

\bibitem{P29}
Peierls, R.:
Zur kinetischen Theorie der W\"armeleitung in Kristallen.
Ann. Phys.~395, 1055-1101 (1929)

\bibitem{P92}
Pomeau, Y.:
Asymptotic time behaviour of nonlinear classical field equations.
Nonlinearity 5, 707-720 (1992)

\bibitem{S90}
Simon, J.:
Sobolev, Besov and Nikolskii fractional spaces: Imbeddings and comparisons for vector valued spaces on an interval.
Ann.~Mat.~Pura Appl.~(4) 157, 117-148 (1990)

\bibitem{T78}
Triebel, H.:
Interpolation theory, function spaces, differential operators.
North-Holland 1978

\bibitem{T83}
Triebel, H.:
Theory of function spaces.
Birkh\"auser 1983

\bibitem{Z67}
Zakharov, V.E.:
Weak-turbulence spectrum in a plasma without a magnetic field.
Zh.~Eksp.~Teor.~Fiz.~51, 688-696 (1967) [Sov.~Phys.~JEPT 24, 455-459 (1967)]

\bibitem{ZF67}
Zakharov, V.E., Filonenko, N.N.:
Weak turbulence of capillary waves.
Zh. Prikl.~Mekh.~Tekh.~Fiz.~8(5), 62-67 (1967) [J.~Appl.~Mech.~Tech.~Phys.~8(5), 37-40 (1967)]

\bibitem{ZLF92}
Zakharov, V.E., L'vov, V.S., Falkovich, G.:
Kolmogorov spectra of turbulence I.~Wave turbulence.
Springer 1992


\end{thebibliography}
\end{document}